\documentclass{article}

\PassOptionsToPackage{table}{xcolor}
\usepackage{iclr2025_conference,times}

\usepackage{amsmath,amsfonts,bm}

\def\eqref#1{\textup{(\ref{#1})}}

\def\1{\bm{1}}

\DeclareMathAlphabet{\mathsfit}{\encodingdefault}{\sfdefault}{m}{sl}
\SetMathAlphabet{\mathsfit}{bold}{\encodingdefault}{\sfdefault}{bx}{n}

\newcommand{\E}{\mathbb{E}}

\newcommand{\R}{\mathbb{R}}

\usepackage{amsmath}
\usepackage{amssymb}
\usepackage{amsthm}
\usepackage{booktabs}
\usepackage{enumitem}
\usepackage{graphicx}
\usepackage{multirow}
\usepackage{xcolor}
\usepackage[hidelinks]{hyperref}
\usepackage{url}

\definecolor{oursbg}{gray}{0.92}

\newcommand{\Tend}{T}                      
\newcommand{\per}{t}                       

\newcommand{\dem}{d}                       
\newcommand{\ord}{q}                       
\newcommand{\lead}{L}                      
\newcommand{\arr}{A}                       
\newcommand{\sold}{s}                      
\newcommand{\onh}{I}                       
\newcommand{\onhend}{\onh^{\mathrm{end}}}  
\newcommand{\intr}{X}                      
\newcommand{\IP}{\mathrm{IP}}              
\newcommand{\prof}{p}                      
\newcommand{\hold}{h}                      
\newcommand{\crat}{\rho}                     
\newcommand{\rew}{R}                       

\newcommand{\Hist}{\mathcal{H}}            

\newcommand{\Short}{U}                     
\newcommand{\Iproj}{\widehat{\onh}}        
\newcommand{\Lost}{\Lambda}                
\newcommand{\dep}{\kappa}                  
\newcommand{\rate}{r}                      
\newcommand{\lev}{\theta}                  
\newcommand{\capm}{c}                       
\newcommand{\cbuy}{c_{\mathrm{b}}}          
\newcommand{\psurv}{\pi}                   
\newcommand{\base}{y}                      

\newcommand{\Mset}{\mathcal{M}}            
\newcommand{\Msize}{M}                     
\newcommand{\wt}{w}                        
\newcommand{\lw}{\ell}                     
\newcommand{\lss}{\mathsf{l}}              
\newcommand{\fgt}{\gamma}                  
\newcommand{\Spaces}{\mathcal{E}}          
\newcommand{\tf}{\phi}                     
\newcommand{\dof}{\nu}                     
\newcommand{\npath}{K}                     
\newcommand{\hor}{H}                       

\newcommand{\shp}{\eta}                    
\newcommand{\sexp}{\alpha}                 
\newcommand{\nvar}{G}                      

\newcommand{\Norm}{\mathrm{NR}}            
\newcommand{\Score}{\mathrm{S}}            

\newcommand{\Fpos}{\mathcal{B}}            
\newcommand{\Frate}{\mathcal{R}}           
\newcommand{\Fcap}{\mathcal{K}}            

\newcommand{\pos}[1]{\left(#1\right)^{+}}
\newcommand{\ind}[1]{\mathbf{1}\!\left\{#1\right\}}
\newcommand{\Prob}{\mathbb{P}}

\newcommand{\ourscore}{0.6311}
\newcommand{\pubbest}{0.5380}
\newcommand{\ceiling}{0.8656}
\newcommand{\hindsight}{0.5892}
\newcommand{\npublished}{ten}
\newcommand{\execgap}{-0.0006}
\newcommand{\deadsyn}{32\%}
\newcommand{\deadreal}{30\%}
\newcommand{\ceilsynstoch}{0.6801}
\newcommand{\ceilrealstoch}{0.7039}
\newcommand{\hsleadsyn}{0.6220}
\newcommand{\hsleadreal}{0.4185}
\newcommand{\bestleadsyn}{0.5576}
\newcommand{\bestleadreal}{0.3085}

\newcommand{\dtcalls}{674}
\newcommand{\ddcalls}{64,200}
\newcommand{\dtratio}{95}
\newcommand{\shapeitems}{200}
\newcommand{\armcalls}{12,405}

\newcommand{\abinter}{-0.0016}
\newcommand{\abrpooled}{+0.0010}
\newcommand{\abseedsd}{0.0001}
\newcommand{\abcore}{0.6180}
\newcommand{\abauthored}{+0.0138}
\newcommand{\abshare}{15\%}
\newcommand{\abshapewhere}{+0.0179}
\newcommand{\absearchwhere}{+0.0230}
\newcommand{\abrsearch}{+0.0007}
\newcommand{\abrshape}{+0.0022}

\newcommand{\ctrln}{124}



\theoremstyle{plain}
\newtheorem{theorem}{Theorem}
\newtheorem{lemma}[theorem]{Lemma}
\newtheorem{proposition}[theorem]{Proposition}
\newtheorem{corollary}[theorem]{Corollary}
\theoremstyle{definition}
\newtheorem{assumption}{Assumption}

\newtheorem{example}{Example}
\theoremstyle{remark}
\newtheorem{remark}{Remark}

\newcommand{\sabremark}[1][0.86]{%
  \raisebox{-0.14ex}{\includegraphics[height=#1em]{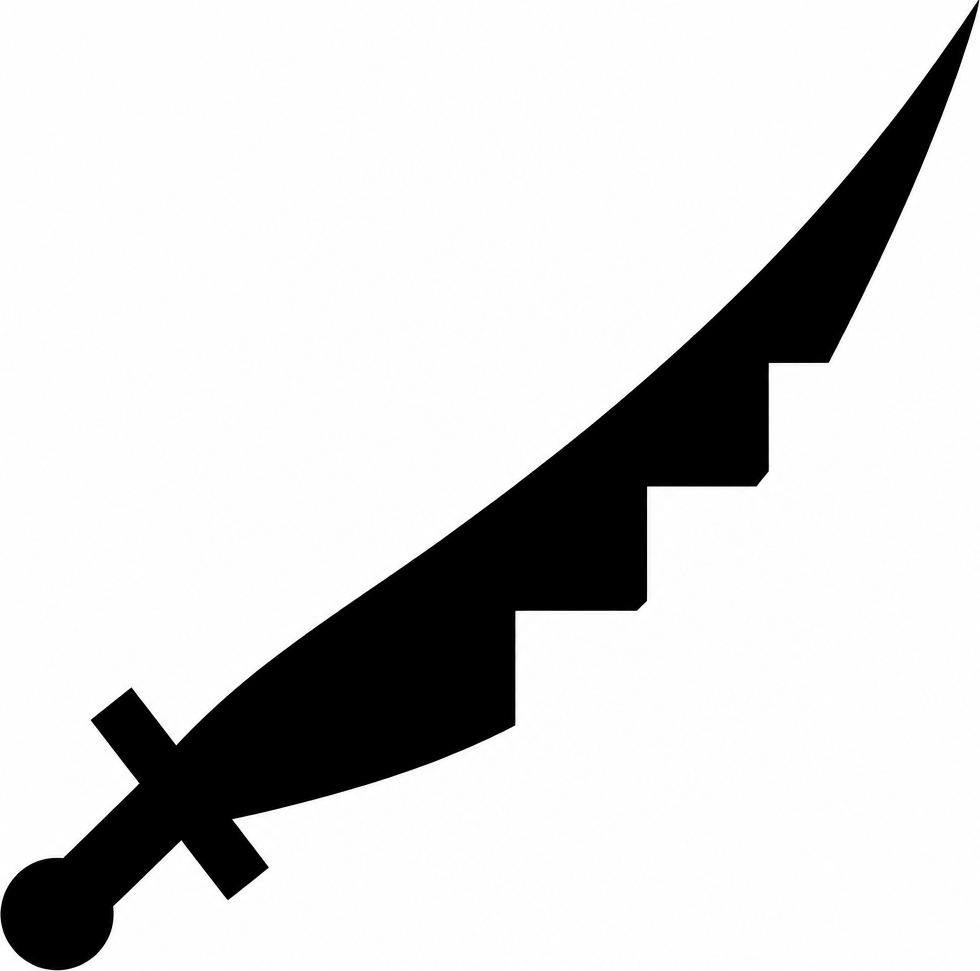}}\hspace{0.24em}}

\title{\sabremark[1.12]SabreAgent: Language Models at Design Time for Lost-Sales Inventory Control}

\author{\textbf{Yang Liu}\textsuperscript{1},
\textbf{Yulin Huang}\textsuperscript{1},
\textbf{Xue Yu}\textsuperscript{1},
\textbf{Jiong Dong}\textsuperscript{1},
\textbf{Jianshen Zhang}\textsuperscript{1},
\textbf{Yongzhi Qi}\textsuperscript{1}$^{\dagger}$
\\[0.6em]
\textsuperscript{1}Supply Chain Tech Team Y, JD.com\\
$^{\dagger}$Corresponding author.
}

\iclrfinalcopy

\begin{document}

\maketitle

\begin{abstract}
SabreAgent uses a language model at design time to construct two components for lost-sales inventory control: a product-specific seasonal prior and a validation-selected capped base-stock policy family. During operation, statistical forecasting and inventory optimization use these frozen artifacts to determine orders, with zero language-model calls. We evaluate the approach on the $1{,}320$ instances of InventoryBench. Under the benchmark's cost assumptions, the operations-research core draws on a zero-lead-time optimality result and a projected-inventory rule for positive deterministic lead times. The latter computes replenishment shortfalls by propagating inventory using sales along simulated demand paths. The seasonal prior adds forecast variants alongside the original forecaster, and the selected policy family handles stochastic lead times with order destruction. SabreAgent scores $0.6311$, compared with $0.5380$ for the strongest published baseline, and ranks first in all six benchmark cells. Ablations attribute most of the gain to the OR core. In the paired analysis, the seasonal component adds $1.79\%$ across the three real-data cells, and the search component adds $2.3\%$ across the two stochastic-lead-time cells. These results demonstrate how model-generated priors and policy structure can improve an OR controller through design-time use.
\end{abstract}

\section{Introduction}
\label{sec:intro}

InventoryBench gives each policy only five initial demand observations, while real instances also provide product descriptions \citep{inventorybench}. Effective replenishment requires combining these limited observations and seasonal cues with inventory costs and supply dynamics. Nine of the benchmark's ten published baselines consult a language model every period. Each resulting order commits inventory before its outcome is observed. Work on planning and self-correction identifies limitations of using generated reasoning to assess such decisions \citep{huang2024selfcorrect,liu2025newsvendor}.

Design-time use lets external procedures inspect and evaluate a model-generated
policy, prior, or representation before deployment. This division of labor
underlies verifier-guided program search
\citep{romera2024mathematical,liu2024evolution,ye2024reevo,novikov2025alphaevolve}
and the broader ``model proposes, verifier disposes'' view
\citep{kambhampati2024position}. Prior work in operations research has used
language models to formulate and translate optimization problems
\citep{ramamonjison2023nl4opt,ahmaditeshnizi2024optimus,xiao2024chainofexperts}.
SabreAgent applies this division of labor to inventory control, with statistical
forecasting and optimization handling sequential decisions during operation.
Figure~\ref{fig:motivation} illustrates this division of labor.

\begin{figure}[t]
\centering
\includegraphics[width=\textwidth]{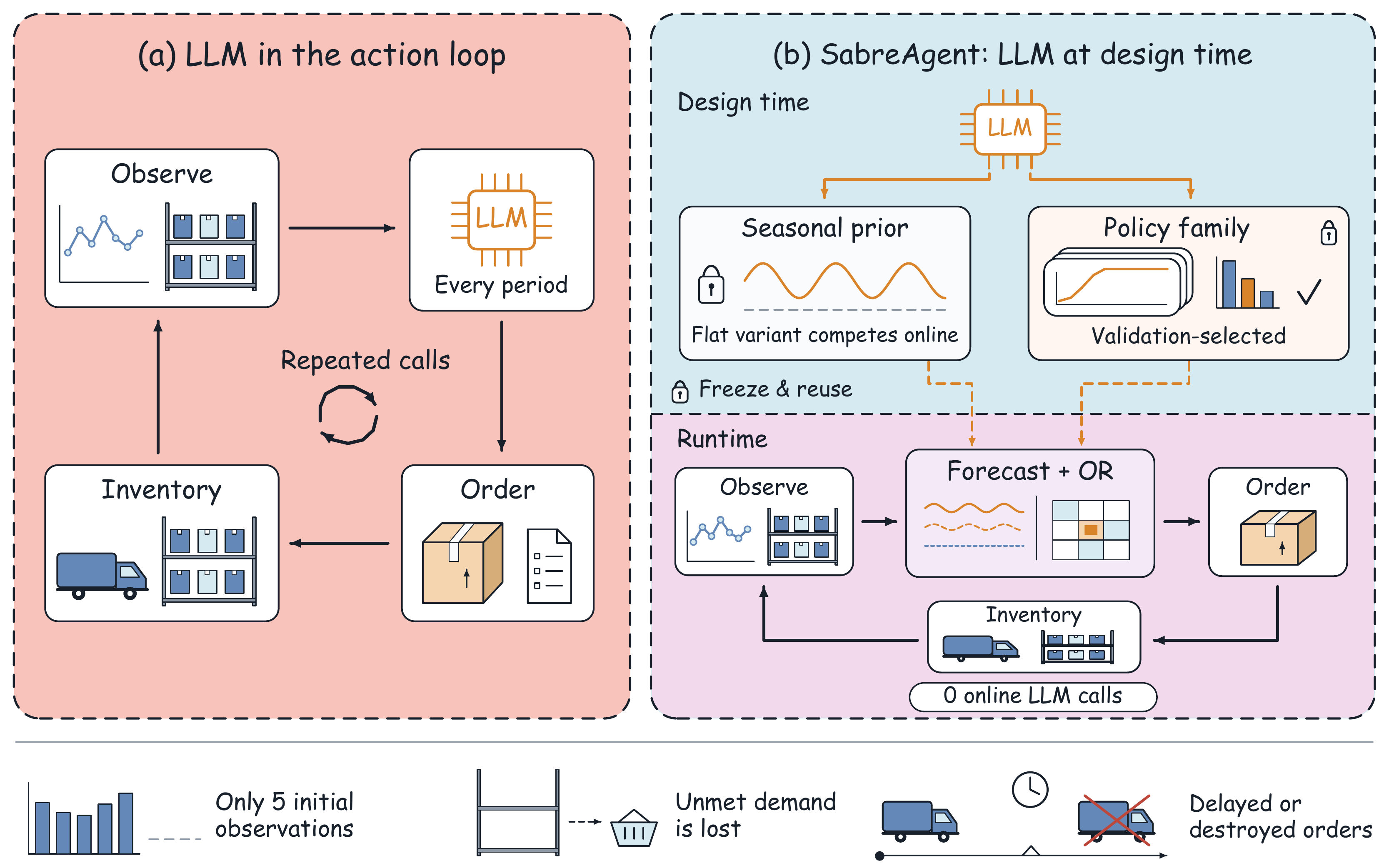}
\caption{Two placements of language models in inventory control. (a) A model
generates orders inside the decision loop. (b) SabreAgent uses a frozen seasonal
prior alongside the forecaster without seasonal adjustment and a
validation-selected policy family.
Statistical forecasting and OR control generate orders without runtime
language-model calls.}
\label{fig:motivation}
\end{figure}

Across InventoryBench's $1{,}320$ lost-sales instances, all ten published
baselines score below a hindsight diagnostic that chooses one constant
order-up-to level per instance. This offline reference motivates an explicit
treatment of inventory costs and dynamics.

SabreAgent builds its control core from the benchmark's cost structure. Orders
have no purchase or setup cost and leftover inventory has no salvage value. At
zero lead time, some optimal post-order inventory level is bounded above by
$\max\{\onh_\per,\base^*_\per\}$, where $\base^*_\per$ is the myopic
critical-fractile level. Under stationary demand, the myopic policy is optimal.
For positive deterministic lead times, we project the committed
pipeline through the lost-sales dynamics and price only the residual shortfall.
A textbook inventory-position target treats demand that was
already lost as if it had depleted the pipeline. For stochastic lead times, where
an order may be destroyed, the policy instead evaluates a small family of
replenishment rules using the same simulated demand and supply trajectories.

The language model supplies two additions to this core. We elicit a normalized
seasonal shape from product descriptions and use it to construct additional
forecast variants. The original forecaster without seasonal adjustment remains
in the mixture; $\S$\ref{sec:shape} gives the associated predictive-loss bounds.
An automated research loop also proposes policy families for the lossy-supply
cells. Scores on separately generated validation instances guide the selection
of a capped base-stock family. Building on
established capped base-stock policies in lost-sales control
\citep{xin2021understanding}, we link the cap to destroyed-order catch-up bursts
and combine design-time family selection with online, instance-specific parameter
search.

The final policy scores \ourscore{}, versus \pubbest{} for the strongest
published baseline, and leads in every benchmark cell. The ablation separates
the sources of this performance:
the frozen OR core reaches \abcore{} on the batched evaluation path, and the two
model-authored components add \abauthored{} on that same
path. Paired bootstrap intervals exclude zero for the seasonal component on
real-data cells and the search component on stochastic-lead-time cells.
A same-model online-adjustment control adds bounded per-period order changes
to the complete policy. On its \ctrln{}-instance subset, the real stochastic
cell is the only cell with a positive estimate exceeding two standard errors.

The analysis and ablations identify the contributions of the cost-based control
core, seasonal forecasting, and policy-family selection. Related work and
proofs appear in the appendix.

\section{Preliminaries}
\label{sec:prelim}

\subsection{Control problem and evaluation}
\label{sec:prelim-problem}

Each InventoryBench instance \citep{inventorybench} tracks one product at one
location for $\Tend$ weeks, with $\Tend=50$ for synthetic instances and $47$ for
real ones. Inventory starts at zero. The nonnegative demand path $\dem_\per$ is
fixed by the instance, and unmet demand is lost
\citep{zipkin2008old,bijvank2011lost}. In the benchmark's own timing the policy
first observes on-hand stock $\onh_\per$, the
in-transit total $\intr_\per$ and the position $\IP_\per = \onh_\per + \intr_\per$,
all before that period's arrivals are booked. It then places
$\ord_\per \in \mathbb{Z}_{\ge 0}$, scheduled for $\per + \lead_\per$ when
$\lead_\per < \infty$ and destroyed otherwise. Arrivals
$\arr_\per = \sum_{\per' \le \per} \ord_{\per'}
\ind{\lead_{\per'} < \infty,\ \per' + \lead_{\per'} = \per}$ are added to stock
next, the sum including $\per' = \per$, so at $\lead = 0$ an order lands in the
period that places it. Sales are $\sold_\per = \min(\dem_\per, \onh_\per +
\arr_\per)$, and holding is charged on the ending inventory $\onhend_\per =
\onh_\per + \arr_\per - \sold_\per$, which also carries over:
\begin{equation}
  \onh_{\per+1} = \onhend_\per,
  \qquad
  \rew_\per = \prof\,\sold_\per - \hold\,\onhend_\per .
  \label{eq:prelim-reward}
\end{equation}
Costs are $(\prof,\hold) \in \{(1,1),(4,1),(19,1)\}$, so the critical ratio
$\crat = \prof/(\prof+\hold)$ is $0.50$, $0.80$ or $0.95$, all three present
in every cell.

The policy receives a partial observation of the system state. Initially it receives five
dated demand observations, the promised lead time, $\prof$, $\hold$, and, for a
real instance, a product description. In later periods it sees the date,
$\onh_\per$, $\intr_\per$, and the previous demand, order, and arrival, then maps
the resulting history $\Hist_\per$ to $\ord_\per$. In-transit inventory is only a
total; its age profile and the current realization of $\lead_\per$ are hidden when
the order is placed. Demand itself is uncensored, unlike the usual lost-sales
setting in which stockouts conceal the upper tail and couple learning to control
\citep{huh2009nonparametric}. No exploration term is therefore needed for demand
estimation.

Two regimes set $\lead_\per$ to $0$ or $4$. The stochastic regime uses
$\{1,2,3,\infty\}$; $\lead_\per=\infty$ denotes permanent order destruction.
The benchmark fixes one lead-time sequence per cell and shares it
across that cell's instances. A policy is told only a promised lead time of $2$
and must learn survival from observed inventory movements. Because demand and
supply paths are fixed during scoring, the expectation in the policy's planning
objective is internal to its predictive model; the reported score is computed on
the realized benchmark path. Crossing two data sources with three lead-time
regimes gives the six cells in Table~\ref{tab:app-cells}.

Each instance is scored by a normalized reward, and the headline number is its
unweighted mean over the $1{,}320$ instances:
\begin{equation}
  \Norm = \max\!\left\{0,\;
    \frac{\sum_{\per} \rew_\per}{\prof \sum_{\per} \dem_\per} \right\},
  \qquad
  \Score = \frac{1}{1{,}320} \sum_{i=1}^{1{,}320} \Norm_i .
  \label{eq:prelim-score}
\end{equation}
The denominator is the reward from serving every unit immediately with no ending
inventory. It is attainable with perfect demand information when $\lead=0$.
With positive lead time, however, an empty initial system cannot serve demand
before its first possible arrival, and destroyed orders impose an additional
loss. Thus $\Norm=1$ is the ideal-service reference for normalization;
$\S$\ref{sec:exp-ceiling} reports a supply-feasible clairvoyant diagnostic.

\subsection{What the cost structure implies}
\label{sec:prelim-cost}

Purchase cost, fixed ordering cost, and salvage value are all zero in the
benchmark. Inventory can normally be valuable because it avoids a future purchase
or setup cost \citep{scarf1960optimality}, or because it covers demand that no
future order can reach
\citep{arrow1951optimal, zipkin2000foundations, porteus2002foundations}. Under
\eqref{eq:prelim-reward}, a future order is free. At $\lead=0$, it can also
arrive in time to serve that period's demand. Carrying an extra unit therefore
offers no saving on future orders and incurs $\hold$ in each period it remains
unsold. The analysis takes $\prof, \hold > 0$, orders
unbounded above, and demand exogenous, nonnegative and independent across periods
with known marginals.

\begin{theorem}[zero-lead-time policy structure]
\label{thm:zero-option}
Assume zero purchase, setup, and salvage costs and $\lead=0$. Let
$\base^*_\per$ be a $\crat$-fractile of period-$\per$ demand. Then $V_\per$ is
non-increasing, and some optimal post-order level is no greater than
$\max\{\onh_\per,\base^*_\per\}$. Lookahead may reduce the myopic order but cannot
improve the value by increasing it. If the fractiles admit a non-decreasing
selection over time---in particular, under stationary demand---then $V_\per$ is
constant on $[0,\base^*_\per]$ and the myopic policy
$\base_\per=\max\{\onh_\per,\base^*_\per\}$ is optimal.
\end{theorem}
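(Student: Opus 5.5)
We set up the finite-horizon dynamic program in the post-order variable. Write $V_\per(x)$ for the optimal expected reward-to-go from period $\per$ onward when on-hand stock before ordering is $x$. Because $\lead=0$ and orders are free, choosing $\ord_\per\ge 0$ is the same as choosing a post-order level $y\ge x$. Hence
\[
V_\per(x)=\max_{y\ge x} J_\per(y),\qquad J_\per(y)=\E\bigl[\prof\min(\dem_\per,y)-\hold\,(y-\dem_\per)^+ + V_{\per+1}\bigl((y-\dem_\per)^+\bigr)\bigr],
\]
with $V_{\Tend+1}\equiv 0$. We treat $y$ as real-valued; the integer case follows by restricting to integers.

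\textbf{Monotonicity.} $V_\per$ is non-increasing because the feasible set $\{y\ge x\}$ shrinks as $x$ grows. This needs no induction.

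\textbf{Upper bound on the optimal level.} Split $J_\per(y)=G_\per(y)+W_\per(y)$. Here $G_\per(y)=\E[\prof\min(\dem_\per,y)-\hold(y-\dem_\per)^+]$ is the concave newsvendor reward, maximized at the $\crat$-fractile $\base^*_\per$. The continuation term is $W_\per(y)=\E[V_{\per+1}((y-\dem_\per)^+)]$. Since $V_{\per+1}$ is non-increasing and $y\mapsto(y-\dem_\per)^+$ is non-decreasing, $W_\per$ is non-increasing. For $y>y'\ge\base^*_\per$ we have $G_\per(y)\le G_\per(y')$ by concavity, and $W_\per(y)\le W_\per(y')$. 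So $J_\per(y)\le J_\per(\max\{x,\base^*_\per\})$ for every feasible $y$ at or above that point. A maximizer of $J_\per$ over $[x,\max\{x,\base^*_\per\}]$ therefore exists, by continuity on a compact set, and is globally optimal.

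This also gives the claim that lookahead can only reduce the myopic order and never gains by increasing it. We must check that $\sup_{y\ge x}J_\per$ is attained, i.e.\ that $V_{\per+1}$ is upper semicontinuous. This follows by induction: each $V_\per$ is a maximum over a compact interval of a function continuous in $(x,y)$, with continuity coming from bounded convergence under finite demand means.

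\textbf{Optimality of the myopic policy.} Assume $\base^*_\per\le\base^*_{\per+1}$ for all $\per$. We prove by backward induction that $V_\per$ is constant on $[0,\base^*_\per]$ and that $y=\max\{x,\base^*_\per\}$ is optimal.

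Suppose the claim holds at $\per+1$. For $y\le\base^*_\per$, the carried-over stock satisfies $(y-\dem_\per)^+\le\base^*_\per\le\base^*_{\per+1}$, so $V_{\per+1}((y-\dem_\per)^+)$ equals the constant $V_{\per+1}(0)$. Hence $W_\per(y)=V_{\per+1}(0)$ on $[0,\base^*_\per]$, which is the global maximum of $W_\per$ because $W_\per$ is non-increasing. Meanwhile $G_\per$ is maximized at $\base^*_\per$. So for $x\le\base^*_\per$, choosing $y=\base^*_\per$ maximizes both terms simultaneously, giving $V_\per(x)=G_\per(\base^*_\per)+V_{\per+1}(0)$, which is constant in $x$.

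For $x>\base^*_\per$, the previous step shows that $y=x$ is optimal, which is again the myopic choice. At the terminal period $V_{\Tend}$ is the plain newsvendor value, so the base case holds. Stationary demand gives equal fractiles, so the monotone selection condition is satisfied.

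\textbf{Main obstacle.} The delicate point is the decoupling step in the induction. It requires that carried-over stock never exceeds the next period's fractile, which is exactly where the non-decreasing fractile selection enters; without it, $W_\per$ could strictly decrease below $\base^*_\per$. A secondary concern is measure-theoretic: choosing fractile selections when the quantile set is an interval, and guaranteeing that maximizers exist. We handle this by fixing a selection and using the attainment argument above.
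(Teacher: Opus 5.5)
Your proof is correct and follows the paper's route. The unconditional half (monotonicity from shrinking feasible sets, a non-increasing continuation, and newsvendor concavity above $y^*_t$) is the paper's Proposition~\ref{app:prop:no-levels}, and the conditional half is the same backward induction in which $(y-d_t)^+\le y^*_t\le y^*_{t+1}$ makes the continuation flat below the fractile; your continuity-based attainment argument is a small addition that the paper leaves implicit by stating the supremum form \eqref{eq:app-upper-bound}.
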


The proof, integer-order extension, and a counterexample when fractiles fall over
time appear in Appendix~\ref{app:proofs}. Classical work studies near-myopic
behavior in broader inventory settings
\citep{karlin1958inventory,morton1971near,zipkin2008structure}. The exact statement
here uses instantaneous, free replenishment. Our positive
lead-time heuristic applies the theorem's cost intuition through a
projected-inventory surrogate.

Zero salvage also fixes the rollout boundary condition: inventory left after the
last scored period has no terminal value. The deterministic rule below therefore
prices the marginal service benefit of an order against the holding cost of its
projected surplus, without an added continuation credit.

\section{\texorpdfstring{\sabremark[1.00]SabreAgent}{SabreAgent}}
\label{sec:method}

\subsection{Design-time architecture}
\label{sec:overview}
\label{sec:loop}

\begin{figure}[t]
\centering
\includegraphics[width=\textwidth]{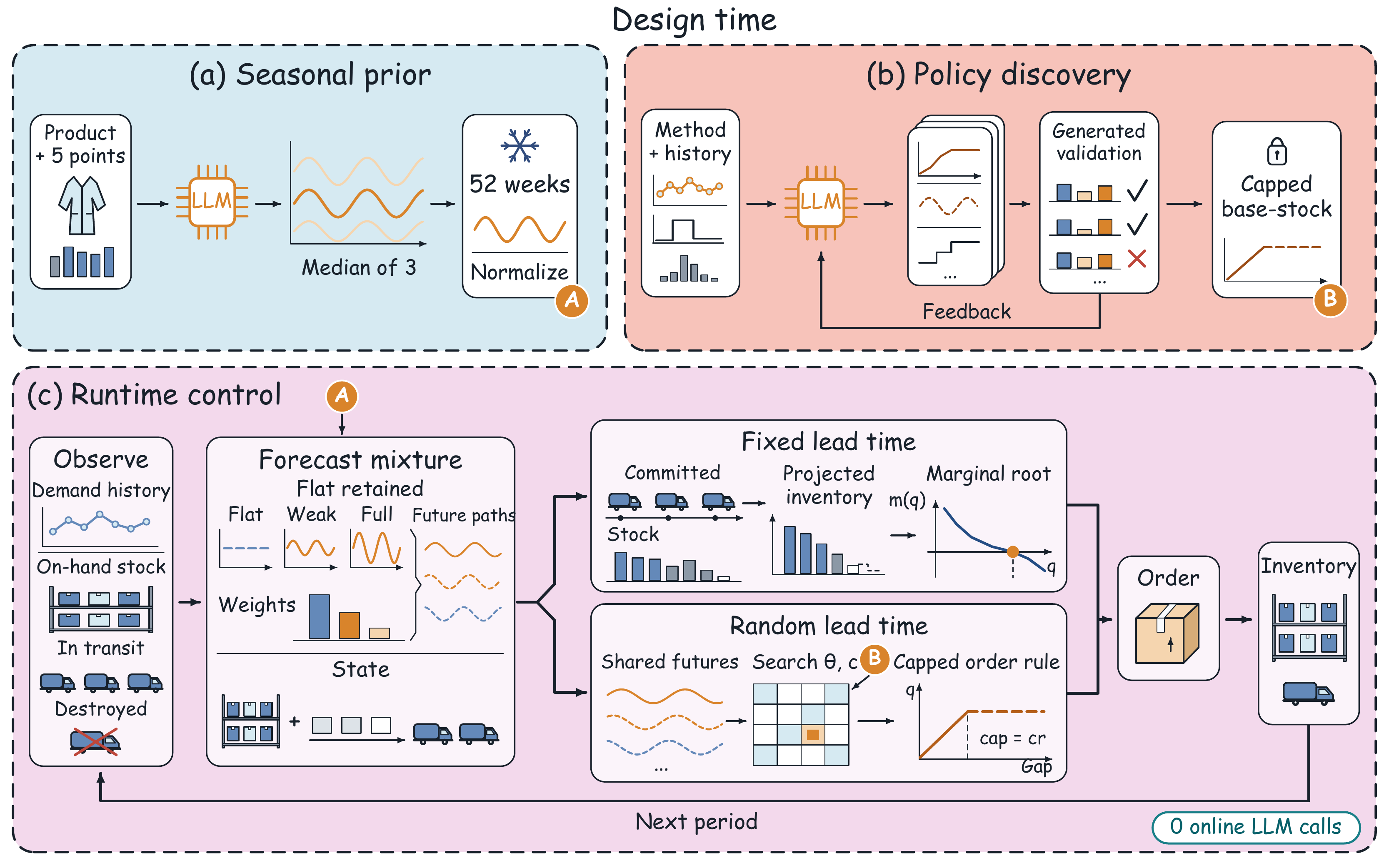}
\caption{SabreAgent's two-stage architecture. (A) The frozen seasonal prior
enters a forecast mixture that retains the flat variant; the associated
bounds concern predictive log loss. (B) Validation
selects the frozen capped base-stock family, whose numerical parameters are
searched online. Fixed and stochastic lead times use projected-inventory control
and rollouts on common demand and supply paths, respectively. Runtime uses no language-model calls.}
\label{fig:framework}
\end{figure}

SabreAgent has three independently configurable components: a forecaster
($\S$\ref{sec:forecaster}), an optional seasonal-shape layer
($\S$\ref{sec:shape}), and the policy family searched when lead times are
stochastic ($\S$\ref{sec:stoch}). We designed the forecaster and the initial
policy; DeepSeek-V4-Flash produced the other two artifacts before the final
configuration was frozen. All language-model calls occur before deployment.
During operation, the policy uses the frozen artifacts to update forecasts and
search policy parameters online; a fixed random seed makes these computations
reproducible. Restoring all
components to their initial settings gives the frozen core, enabling component
ablations within the same policy implementation (Appendix~\ref{app:method}).

The two model-authored components use different selection procedures. For policy search, an
automated loop receives the current method, the validation protocol, and the full
record of earlier successes and failures. It proposes one self-contained policy
change per round. Candidates advance according to scores from the validation
harness. Candidate scoring uses only generated validation instances separate from
the benchmark test instances.
Section~\ref{sec:exp-protocol} records the benchmark information available during
development and the resulting limits on test independence.

Seasonal shapes are frozen per item before benchmark evaluation. The validation
instances lack product text and therefore provide no held-out target for selecting
these shapes. The shape layer adds forecast variants beside the unchanged
flat variant; $\S$\ref{sec:shape} states the corresponding log-loss bounds
and their conditions.

\subsection{Online forecasting in two spaces}
\label{sec:forecaster}

Each instance begins with only five observations, so the forecast bank $\Mset$
uses simple, low-parameter members and learns their mixture weights online. This
choice follows the strong small-sample performance of simple forecasts and
forecast combinations \citep{makridakis2020m4,hyndman2008automatic}. We use
discounted Bayesian model averaging
\citep{hoeting1999bayesian,raftery2010online}, with a forgetting factor $\fgt$
that discounts past fit geometrically; Appendix~\ref{app:method} lists the nine
members of one bank and gives the weight update.

For deterministic lead times, nine members are fit in both raw and
$\tf(u)=\log(1+u)$ space, giving $\Msize=18$. The transform can better represent
right-skewed weekly demand, while the raw-space members retain alternative shapes
when only five observations are available \citep{box1964analysis}. The online weights choose
between them. Stochastic-lead-time cells use only the raw-space bank because their
decision rule takes a local demand rate as its forecast input.

We compare members across transform spaces through their raw-space scoring
densities, using a change-of-variables correction. A log-space member receives raw-space log density
$\log q(\log(1+\dem_\per))-\log(1+\dem_\per)$; the Jacobian term is absent for a
raw-space member. Omitting it persistently biases the log-space weights upward
relative to the raw-space weights. Lemma~\ref{lem:jacobian} quantifies
this bias. For both decision rules, the demand-path sampler uses the learned
scoring weights to select a member; its within-member sampling distribution is
specified separately. Member
definitions, scoring densities, and sampling details are in Appendix~\ref{app:method}.

\subsection{Product text as a seasonal prior}
\label{sec:shape}

Product descriptions in the real instances provide seasonal timing cues, such as
``winter coat'', before those patterns appear in the short demand history.
Synthetic instances lack this text. At design time we ask
DeepSeek-V4-Flash for a $52$-week seasonal curve $\shp$ with mean one, used as
a multiplicative demand factor. We draw three curves per product and aggregate
the accepted, normalized samples by their week-wise median. The aggregate is
then clipped, renormalized to mean one, and frozen before evaluation
(Appendix~\ref{app:method}). The curve encodes relative timing only; the online forecaster
continues to estimate the demand level.

Our validation generator supplies neither product text nor a real calendar for
selecting prior strength. We therefore carry three strengths,
$\sexp\in\{0,1/2,1\}$. Each runs a copy of the bank on
demand deseasonalized by $\shp^{\sexp_g}$, and the variants are weighted online
by the same discounted predictive likelihood used within the bank. The associated
Jacobian keeps their raw-space scoring densities comparable (Appendix~\ref{app:method}).

The flat variant $\sexp=0$ applies no seasonal adjustment and recovers the core
forecaster's scoring density and path-sampling distribution. Assume
uniform initial variant weights and positive mixture density at every observation.
If this variant's single-period excess log loss relative to the best variant in
that period is always bounded by
$\delta$, its mixture weight cannot fall below
$\nvar^{-1}e^{-\delta/(1-\fgt)}$, and the mixture's per-period excess log loss over it is at
most $\log\nvar+\delta/(1-\fgt)$. These discounted bounds require the stated
single-period condition. With undiscounted weights, the cumulative excess log loss
relative to the best fixed variant is at most $\log\nvar$ for every sequence;
discounting trades that horizon-free guarantee
for adaptation over the year (Proposition~\ref{prop:shape-bound}). Both results
characterize the log loss of the scoring densities; the shape layer's effect on control reward is
evaluated empirically.

\subsection{Deterministic lead times}
\label{sec:determ}

With deterministic $\lead$, the order at $\per$ first affects available stock at
$\per+\lead$. Theorem~\ref{thm:zero-option} characterizes the zero-lead-time
case. Positive lead times introduce a pipeline state
(Remark~\ref{app:rem:leadtime}), which we handle with the following
projected-inventory surrogate. Write
$\Short=\dem_{\per+\lead}-\Iproj_{\per+\lead}$ for the shortfall immediately
before that order arrives, where $\Iproj_{\per+\lead}$ is obtained by simulating
the committed pipeline. The $\ord$-th unit earns $\prof$ if
$\Short\geq\ord$. Otherwise it contributes to a surplus of
$\pos{\ord-\Short}$ units. Approximating clearance at predictive mean rate
$\mu$, we assign marginal value
\begin{equation}
m(\ord) = \prof\,\Prob(\Short \ge \ord)
  - \hold\,\E\!\left[
     \left(1 + \dep \frac{\pos{\ord - \Short}}{\mu}\right)
     \ind{\Short < \ord}\right].
\label{eq:marginal}
\end{equation}

The marginal value $m(\ord)$ defines the surrogate objective
$\int_0^{\ord}m(u)\,du$. The fitted multiplier $\dep$ adjusts the mean-rate
clearance approximation. Under the conditions of Proposition~\ref{prop:marginal},
$m$ is non-increasing and $\ord^0=\inf\{\ord\ge0:m(\ord)\le0\}$ is the least
maximizer of this integrated surrogate. When $\dep=0$, this reduces to the
nonnegative $\crat$-fractile of $\Short$. The zero-lead-time theorem motivates
the cost-based construction; the implemented deterministic policy uses the
fitted depletion multiplier, including $\dep=1$ at $\lead=0$.
Appendix~\ref{app:method-determ} gives its parameterization, validation fit, and root search;
$\S$\ref{sec:exp-protocol} describes how its functional form was chosen.

The lost-sales recursion projects stock using sales along each sampled path. A textbook
inventory-position target uses cumulative demand over the protection interval
\citep{zipkin2000foundations}. It therefore treats demand observed at an empty
shelf as if it had reduced inventory position, even though that demand was lost.
For the same sampled path, the textbook shortfall equals our projected shortfall
plus the lost sales inside the interval (Proposition~\ref{prop:overorder}). It is
thus no smaller on any path, and the means differ by expected lost sales. This identity
concerns shortfalls and their means. The order gap depends on the nonlinear
quantile or crossing in \eqref{eq:marginal}.
Appendix~\ref{app:rem:not-mean} gives the associated quantile bound.

\subsection{Stochastic lead times}
\label{sec:stoch}

In the stochastic cells, $\lead_\per\in\{1,2,3,\infty\}$ and an infinite lead
time destroys the order. The realized sequences destroy \deadsyn{} of synthetic
orders and \deadreal{} of real orders. The deterministic rule in
$\S$\ref{sec:determ} assigns an order to one arrival period; stochastic arrivals
couple it to several possible future periods. We evaluate replenishment rules through
rollouts over these supply outcomes. The evaluator's accounting information
identifies whether a positive order survived. Specifically,
$\ord_\per\ind{\lead_\per<\infty}=\intr_{\per+1}-\intr_\per+\arr_\per$; the
survival indicator is therefore known one period later
(Proposition~\ref{prop:supply}). We use these observations to update the survival
probability used in rollouts.

The policy evaluates a small family of complete replenishment rules over the remaining
horizon. Candidate rules use common simulated demand and supply paths, sharing
demand, survival, and finite-lead-time draws to reduce comparison variance. A coarse pass locates promising
parameters, a local grid refines them jointly, and candidate actions are blended
according to estimated value relative to simulation uncertainty. The full search
is specified in Appendix~\ref{app:method}.

The design-time loop selected the capped base-stock family
\begin{equation}
\Fcap:\qquad
\ord=\min\!\left\{\pos{\lev\rate-\IP},\;\capm\rate\right\},
\label{eq:capped}
\end{equation}
where $\rate$ is a local demand forecast and $\lev,\capm$ are searched online.
The position term reacts to low inventory, while the cap prevents a destroyed
order from triggering an arbitrarily large catch-up order. A pure rate rule sets
orders from the demand rate alone, limiting catch-up bursts while leaving its
order unchanged across inventory levels. The
capped family contains the pure position family at $\capm=\lev$ and approaches
the pure rate family as $\lev\to\infty$ (Proposition~\ref{prop:capped}). On a
finite lost-sales instance with order destruction, Appendix~\ref{app:numeric}
provides a numerical certificate of strict reward separation, conditional on
the validity of the estimated slope bounds.

Capped base-stock control is a standard family also used by InventoryBench's
OR baseline. In our implementation, its cap scales with a
local forecast rate, both parameters are selected online for each instance, and
the family was retained by the design-time validation loop. In the small exhaustive
examples in Appendix~\ref{app:numeric}, the
advantage disappears when the lead-time support prevents surviving orders from
arriving in a lump, and is positive in most nearby configurations where such a
lump can occur.

\section{Experiments}
\label{sec:exp}

\subsection{Benchmark, baselines, and protocol}
\label{sec:exp-bench}
\label{sec:exp-protocol}

We evaluate all $1{,}320$ InventoryBench instances \citep{inventorybench}; the two
stochastic cells account for one third of the overall mean. Table~\ref{tab:baselines}
reports the ten public baselines. The OR row is the benchmark authors' capped
base-stock heuristic. The remaining rows use GPT-5 Mini, Gemini 3 Flash, or Grok
4.1 Fast each period, either to choose the order directly (\emph{LLM}), estimate
inputs to the OR rule (\emph{LLM$\to$OR}), or revise the OR recommendation
(\emph{OR$\to$LLM}). We recompute each cell from the released instance-level
results; their weighted means reproduce the published overall scores. SabreAgent
scores \ourscore{} against \pubbest{} for the best public baseline and leads all
six cells. This comparison varies both model choice and placement: SabreAgent
uses DeepSeek-V4-Flash at design time. The same-model online-adjustment control
below measures the effect of adding bounded per-period adjustments by that model
to the complete SabreAgent policy.

\begin{table}[t]
\centering
\scriptsize
\begin{tabular}{@{}lrrrrrrr@{}}
\toprule
& \multicolumn{3}{c}{Synthetic} & \multicolumn{3}{c}{Real} & \\
\cmidrule(lr){2-4}\cmidrule(lr){5-7}
Method & $\lead=0$ & $\lead=4$ & stoch. & $\lead=0$ & $\lead=4$ & stoch.
& Overall \\
\midrule
Gemini 3 Flash, OR$\to$LLM & 0.8391 & 0.5395 & 0.4107 & 0.7439 & 0.3071 & 0.3531 & 0.5380 \\
Grok 4.1 Fast, OR$\to$LLM & 0.8493 & 0.5473 & 0.3574 & 0.6914 & 0.2918 & 0.3034 & 0.5139 \\
Gemini 3 Flash, LLM$\to$OR & 0.8604 & 0.5576 & 0.2434 & 0.7716 & 0.3033 & 0.2375 & 0.5009 \\
Gemini 3 Flash, LLM & 0.8253 & 0.4341 & 0.3865 & 0.7364 & 0.2400 & 0.3119 & 0.4945 \\
Grok 4.1 Fast, LLM$\to$OR & 0.8584 & 0.5555 & 0.2364 & 0.7696 & 0.3085 & 0.1981 & 0.4934 \\
GPT-5 Mini, OR$\to$LLM & 0.8525 & 0.4937 & 0.1643 & 0.7645 & 0.2840 & 0.1819 & 0.4611 \\
GPT-5 Mini, LLM & 0.8371 & 0.4900 & 0.1875 & 0.7646 & 0.2879 & 0.1643 & 0.4597 \\
Grok 4.1 Fast, LLM & 0.8387 & 0.3166 & 0.3959 & 0.7068 & 0.1555 & 0.3073 & 0.4593 \\
GPT-5 Mini, LLM$\to$OR & 0.8452 & 0.5302 & 0.1593 & 0.7319 & 0.2797 & 0.1681 & 0.4578 \\
OR (capped base-stock) & 0.8179 & 0.5359 & 0.1068 & 0.6708 & 0.3017 & 0.2100 & 0.4447 \\
\midrule
\rowcolor{oursbg} SabreAgent (ours) & \textbf{0.8649} & \textbf{0.6685} & \textbf{0.5063} & \textbf{0.8018} & \textbf{0.4447} & \textbf{0.4714} & \textbf{0.6311} \\
\midrule
\textit{Constant level, hindsight-tuned} & \textit{0.8366} & \textit{0.6220} & \textit{0.4653} & \textit{0.7051} & \textit{0.4185} & \textit{0.4563} & \textit{0.5892} \\
\textit{Supply-feasible ceiling} & \textit{1.0000} & \textit{0.9155} & \textit{0.6801} & \textit{1.0000} & \textit{0.8945} & \textit{0.7039} & \textit{0.8656} \\%
 
\bottomrule
\end{tabular}
\caption{Normalized reward by benchmark cell. Bold marks the best deployable
policy in each column. Public rows are recomputed from released instance-level
results; SabreAgent is scored by the official evaluator. Italic rows use hindsight
and provide diagnostic reference scores.}
\label{tab:baselines}
\end{table}

Automated selection uses fresh-seed instances generated from the documented
synthetic families and only the five initial demands available to a policy. The
harness cannot access benchmark instances, and the final configuration is chosen
from validation scores (Appendix~\ref{app:expsupp}). Reusing this validation set
across rounds incurs adaptive-selection bias within the validation process
\citep{dwork2015reusable,blum2015ladder}. Development also includes two indirect
test-informed choices. We chose
the functional form of $\dep$ after observing a systematic cell-level mismatch,
then fit its coefficients on validation. The proposer context contains four
aggregate test diagnostics that identify stochastic supply as the weak regime,
although it contains no test score for any proposed candidate. The reported test
result therefore reflects validation-based candidate selection within a
development process informed by these two test-diagnostic channels.

\begin{figure}[t]
\centering
\includegraphics[width=0.72\textwidth]{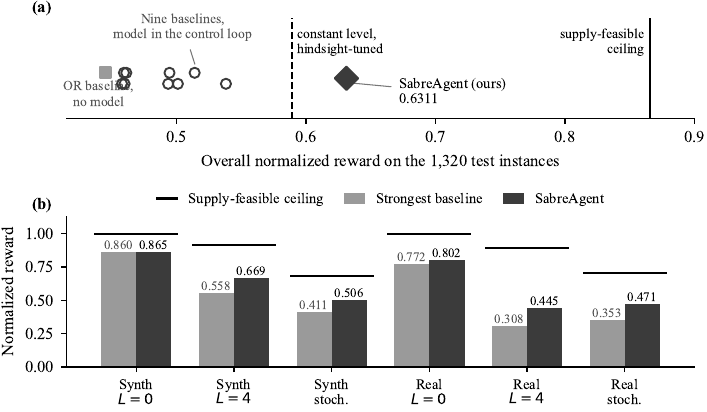}
\caption{Performance relative to two hindsight diagnostics. (a) Public baselines
against a constant order-up-to level chosen per instance (dashed) and a
clairvoyant supply-feasible upper bound (solid, \ceiling{}). (b) Cell-wise results;
the upper bound equals $1$ only when lead time is zero, so equal vertical gaps do
not represent equal attainable headroom.}
\label{fig:standing}
\end{figure}

\subsection{Supply-imposed headroom}
\label{sec:exp-ceiling}

$\Score=1$ sets the normalization reference. Attainable reward also depends on
the supply process. We
construct a clairvoyant plan that knows demand and realized lead times but remains
subject to feasible arrival times and order destruction. Because orders are free
and uncapacitated, each unit of demand can be assigned independently to its latest
feasible arrival whenever its profit exceeds the intervening holding cost
(Appendix~\ref{app:expsupp}). The resulting ex-post upper bound is \ceiling{}
overall, exactly $1$ in both zero-lead-time cells, and \ceilsynstoch{} and
\ceilrealstoch{} in the stochastic cells. Much of their gap to $1$ is therefore
imposed by the supply constraints.

With $\dep=1$ and realized future demand supplied as a point-mass forecast, the
projected base-stock rule matches this bound in all four deterministic cells.
This observation concerns the fixed rule with exact demand inputs;
positive-lead-time optimality under a demand distribution is a separate question.
Appendix~\ref{app:expsupp} specifies the oracle configuration.

A second diagnostic chooses one constant order-up-to level per instance with
hindsight and never revises it. It scores \hindsight{}, above all ten public
baselines overall. The contrast is strongest at lead time four: the constant
scores \hsleadsyn{} and \hsleadreal{} on synthetic and real data, compared with
\bestleadsyn{} and \bestleadreal{} for the strongest public baseline in each
cell. At zero lead time, eight of ten baselines exceed the constant in each cell,
so this observation is specific to settings with a protection interval.

\begin{figure}[t]
\centering
\includegraphics[width=0.66\textwidth]{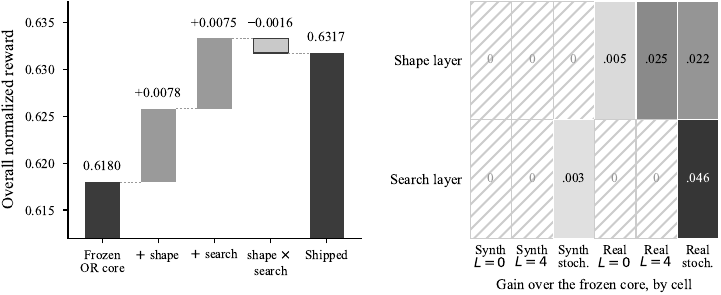}

\vspace{3pt}
\scriptsize
\setlength{\tabcolsep}{3pt}
\begin{tabular}{@{}lrrrrrrr@{}}
\toprule
& \multicolumn{3}{c}{Synthetic} & \multicolumn{3}{c}{Real} & \\
\cmidrule(lr){2-4}\cmidrule(lr){5-7}
Configuration & $\lead=0$ & $\lead=4$ & stoch. & $\lead=0$ & $\lead=4$ & stoch.
& Overall \\
\midrule
\input{tables/ablation.tex} 
\bottomrule
\end{tabular}
\normalsize
\caption{Factorial ablation over six cells and three seeds on the batched path
used during design-time search. Moving to the official one-instance-at-a-time
path changes the shipped score by \execgap{}; values in this figure should
therefore be compared within this execution path. Table~\ref{tab:baselines}
reports the official path.
The plot decomposes the overall gain, including interaction; the table reports
the frozen core in absolute terms and all other rows as deltas from it.}
\label{fig:ablation}
\end{figure}

\subsection{Component contributions and online-adjustment control}
\label{sec:exp-abl}
\label{sec:exp-null}

The frozen OR core reaches \abcore{} on the batched ablation path, above every
published baseline's overall score. The two model-authored components add
\abauthored{} on that same path, which is \abshare{} of the gap between the final
score and the strongest public baseline. The seasonal layer acts only on the
three real-data cells, where its paired mean effect is \abshapewhere{}. The search
layer acts only on the two stochastic-lead-time cells, where its paired mean
effect is \absearchwhere{} (Table~\ref{tab:app-paired}). Averaging in the zero
effects from the remaining cells reduces their overall deltas. The additive
decomposition uses the batched path throughout; official one-instance-at-a-time
scores are reported separately.

Figure~\ref{fig:ablation} crosses three binary choices: seasonal shapes on or off,
the initial or selected stochastic policy family, and the core forecaster or the
best alternative found during development. The search contrast includes both
research passes: local-rate normalization from the first and the cap from the
second. Table~\ref{tab:app-validation} separates them on validation. This
factorial interpretation is valid because accepted changes affect separate
components;
Appendix~\ref{app:method-compose} describes the checks used when composing them.

The seasonal and search layers overlap only in the real stochastic cell;
their interaction contributes \abinter{} to the overall weighted mean. The search
family and alternative forecaster were selected from validation candidates.
Their test contrasts also reflect the test-informed development choices described
in \S\ref{sec:exp-protocol}. The seasonal shapes were not selected from test
outcomes. Paired bootstrap intervals exclude zero for the seasonal layer, the
search layer, and their combination, both overall and within the cells where each
operates. The interval for the pooled forecaster includes zero
(Table~\ref{tab:app-paired}).

For each bootstrap replicate, we sample test instances with replacement and
retain the complete row of scores for all arms, preserving their pairing.
Across the $1{,}320$ test instances, the combined layers yield a single-seed
paired mean gain of $+0.0139$ over the frozen core, with a 95\% percentile
interval of $[+0.0115,\, +0.0164]$ (Table~\ref{tab:app-paired}).
The combined layers change scores on 62\% of instances, with the ten largest
movers contributing 14\% of total absolute movement.
This estimate uses the instance-level scores for one seed; the
\abauthored{} gain in Figure~\ref{fig:ablation} averages three seeds.
The final configuration's across-seed standard deviation of
\abseedsd{} measures rollout Monte Carlo noise on fixed data.
Together, these summaries distinguish rollout variability from sensitivity
to the composition of the fixed test set, which is the scope of the paired
intervals.

Most proposed changes are rejected. Across two research passes, one of fifty-two
candidates contributes to the final search component, and that candidate includes
the useful part of the first pass. No proposal targeting deterministic cells
survives validation (Appendix~\ref{app:expsupp}). This screening record shows
which proposed changes met the validation criterion before deployment.

\begin{table}[t]
\centering
\scriptsize
\begin{tabular}{@{}lrrrrrr@{}}
\toprule
& \multicolumn{3}{c}{Synthetic} & \multicolumn{3}{c}{Real} \\
\cmidrule(lr){2-4}\cmidrule(lr){5-7}
A model returned to the per-period loop & $\lead=0$ & $\lead=4$ & stoch.
& $\lead=0$ & $\lead=4$ & stoch. \\
\midrule
change against the shipped policy & $-$0.0010 & $-$0.0020 & $+$0.0057 & $-$0.0008 & $-$0.0299 & $+$0.0104 \\
\quad standard error & 0.0008 & 0.0081 & 0.0059 & 0.0009 & 0.0111 & 0.0046 \\
 
\bottomrule
\end{tabular}
\caption{Same-model online-adjustment control. DeepSeek-V4-Flash adds
per-period order adjustments, using a bounded multiplier, to the complete
SabreAgent policy. Entries are changes in normalized
reward on a stratified \ctrln{}-instance test subset; standard errors are shown in
the second row.}
\label{tab:control}
\end{table}

On the stratified subset in Table~\ref{tab:control}, we add
bounded per-period order adjustments by DeepSeek-V4-Flash to the complete
SabreAgent policy. All four deterministic-cell estimates are negative,
but only the real lead-time-four estimate exceeds two standard errors in
magnitude. The two stochastic estimates are positive. Within this subset, the real
stochastic cell is the only cell with a positive estimate exceeding two standard
errors. These results apply to DeepSeek-V4-Flash, this bounded adjustment rule, and
this test subset. Broader conclusions about deterministic supply require further
evaluation.

The highest-scoring validation configuration pools information across instances,
which violates the official one-policy-per-instance protocol. We therefore do
not submit it. Its mean gain is \abrpooled{} over three seeds, while the
single-seed paired interval includes zero (Table~\ref{tab:app-paired}). The
submitted configuration preserves the required per-instance information boundary.

\paragraph{Budget and reproducibility.}
\label{sec:exp-repro}
Producing the shipped artifacts used \dtcalls{} calls to DeepSeek-V4-Flash,
mostly three shape samples for each of the \shapeitems{} real products. Calling a
model once per period over the full test set would require \ddcalls{} calls, about
\dtratio{} times as many (Figure~\ref{fig:app-budget}). During operation, a fixed seed
makes the policy reproducible; it is rerun end to end and scored with the official
evaluator as detailed in Appendix~\ref{app:expsupp}.

\section{Conclusion}
\label{sec:conclusion}

On InventoryBench, most of SabreAgent's performance gain comes from an OR policy that uses the
benchmark's free replenishment, zero salvage, observable demand, and explicit
order destruction. Product-specific seasonal priors supply timing cues in the
real-data cells, and validation selects a capped base-stock family for lossy
supply. These components provide additional gains while statistical forecasting
and inventory optimization determine orders during operation. The final policy
leads every benchmark cell with zero runtime language-model calls.

The zero-lead-time theorem identifies conditions for exact myopic optimality;
positive purchase or setup costs change the value of carried inventory. The
controller for positive deterministic lead times uses a surrogate objective, and the seasonal
construction gives predictive-log-loss bounds under the conditions in
$\S$\ref{sec:shape}. Inventory reward is measured in the benchmark experiments.
The online-adjustment control evaluates bounded order changes by one model on
\ctrln{} instances. Matched comparisons across models and larger samples would
clarify when these additional runtime adjustments improve the complete policy.

\bibliography{sabreagent}
\bibliographystyle{iclr2025_conference}

\appendix
\section{Related work}
\label{sec:related}

\paragraph{Lost-sales inventory control.}
Positive-lead-time lost-sales systems generally require the full pipeline state,
and ordinary base-stock policies need not be optimal
\citep{karlin1958inventory,zipkin2008structure,zipkin2008old,bijvank2011lost}.
Different asymptotic regimes favor order-up-to policies when lost-sales penalties
grow \citep{huh2009asymptotic}, constant-order policies when lead times grow
\citep{goldberg2016asymptotic,xin2016optimality}, and capped base-stock policies
between those extremes \citep{xin2021understanding}. InventoryBench already uses
a capped base-stock baseline \citep{inventorybench}. Building on this established
family, we connect the cap to catch-up bursts caused by
destroyed orders, scale it by a local demand rate, and select both parameters
online. Theorem~\ref{thm:zero-option} concerns a different, degenerate regime:
without purchase or setup costs there is no speculative or batching motive, and
instantaneous replenishment removes protection-time value as well.

\paragraph{Data-driven inventory policies and forecasting.}
Data-driven approaches replace a known distribution with samples
\citep{levi2007provably}, estimate decisions from features
\citep{ban2019bigdata}, or train policies on simulated and operational data
\citep{gijsbrechts2022can,oroojlooyjadid2022deep,qi2023practical}. InventoryBench
instead provides five initial observations and no cross-series training set, which
favors a mixture of simple models over a flexible learned forecaster. Discounted
predictive-likelihood weighting is standard
\citep{hoeting1999bayesian,raftery2010online}, as are the short-sample forecasting
members we combine \citep{makridakis2020m4,hyndman2008automatic}. The technical
detail specific to our mixture is that members fit in different transform spaces
must be scored as densities of the same raw observation; otherwise the omitted
Jacobian systematically favors the log-space members.

\paragraph{Language models for search and decision making.}
Verifier-guided systems use a language model to propose an artifact and an
external evaluator to retain or reject it
\citep{romera2024mathematical,liu2024evolution,ye2024reevo,novikov2025alphaevolve,lu2026automation}.
The closest pattern is language-model variation inside evolutionary search over
heuristics \citep{hemberg2024evolving}. In operations research, prior work has
translated natural language into formal models checked by a
solver \citep{ramamonjison2023nl4opt,ahmaditeshnizi2024optimus,xiao2024chainofexperts}.
SabreAgent applies this established search pattern to policy-family design.
This choice also reflects evidence that
language models remain unreliable as planners or self-correctors
\citep{valmeekam2023planbench,huang2024selfcorrect,kambhampati2024position} and can
show systematic biases in newsvendor decisions \citep{liu2025newsvendor}.

\paragraph{Text-derived priors for cold starts.}
Language models have also been adapted for time-series forecasting
\citep{gruver2023llmtime,jin2024timellm,zhang2024llmforecaster}, although some
reported gains survive removal of the language component \citep{tan2024language}.
Our seasonal layer uses product descriptions to elicit a prior over relative
demand timing. Related work has used language models to infer priors from
metadata, mix parametric priors over prompts, or produce
predictive beliefs from textual descriptions
\citep{garthwaite2005statistical,choi2022lmpriors,capstick2025autoelicit,requeima2024llmprocesses,selby2025experts}.
The resulting seasonal variant always competes with a statistical variant without
seasonal adjustment. This design addresses the benchmark's cold start. When
retailer data on comparable products are available, clustering, attribute
regression, and life-cycle models provide more directly grounded signals
\citep{thomassey2006hybrid,ferreira2016analytics,hu2019forecasting,vansteenbergen2020forecasting,fildes2022retail}.

\section{Proofs}
\label{app:proofs}

Theorem~\ref{thm:zero-option} is the one result the main text states formally,
and it is restated here before it is proved, so the restatement carries a fresh
number. The six results after it are discussed there in prose and stated formally
only here, one to a subsection, in the order the main text raises them. Standing
conventions: demand is nonnegative and integrable, $\prof > 0$ and $\hold > 0$,
$\crat = \prof/(\prof+\hold) \in (0,1)$, and $\pos{x} = \max\{x,0\}$. For a
random variable $Z$ we write $F_Z(z) = \Prob(Z \le z)$ and $F_Z(z^-) = \Prob(Z <
z)$, which differ exactly at the atoms of $Z$; this distinction determines the
one-sided conditions in Proposition~\ref{app:prop:marginal} and is kept throughout.

Two distinctions organize the results. The exact-optimality clause of
Theorem~\ref{thm:zero-option} rests on Assumption~\ref{app:as:levels};
\S\ref{app:sub:thm1} states this condition and gives a counterexample when it
fails. Proposition~\ref{prop:overorder} gives an exact identity for shortfalls
and their means, together with bounds for the corresponding quantile gaps
(\S\ref{app:sub:prop4}).

\subsection{Theorem~\ref{thm:zero-option}: zero option value at \texorpdfstring{$\lead = 0$}{L=0}}
\label{app:sub:thm1}

\begin{assumption}[the zero-option setting, restated from \S\ref{sec:prelim-cost}]
\label{app:as:zero-option}
The per-period reward is $\prof\,\sold_\per - \hold\,\onhend_\per$ with
$\prof, \hold > 0$ and with no purchase, fixed-ordering or salvage term; unmet
demand is lost; orders are unbounded above and arrive in the period in which
they are placed, $\lead = 0$; the horizon $\Tend$ is finite; and demand is
exogenous, nonnegative, integrable and independent across periods with known
marginals $F_\per$.
\end{assumption}

\paragraph{Setting.} Fix an instance satisfying
Assumption~\ref{app:as:zero-option}.
With $\lead = 0$ an order placed at $\per$ lands inside period $\per$, so the
decision at $\per$ is equivalent to choosing the post-order level $\base \ge
\onh_\per$. Writing $\dem_\per$ for the period's demand, the period earns
$\prof \min(\dem_\per, \base) - \hold \pos{\base - \dem_\per}$ and carries
$\pos{\base - \dem_\per}$ into $\per+1$, since unmet demand is lost and holding
is charged on ending inventory. Define the one-period expected reward and the
dynamic program
\begin{equation}
  G_\per(\base) = \E\!\left[\prof \min(\dem_\per,\base)
      - \hold \pos{\base - \dem_\per}\right],
  \qquad
  V_\per(\onh) = \sup_{\base \ge \onh}
      \Big\{ G_\per(\base) + \E\big[V_{\per+1}(\pos{\base - \dem_\per})\big]\Big\},
  \label{eq:app-dp}
\end{equation}
with $V_{\Tend+1} \equiv 0$. We write $J_\per(\base)$ for the bracketed
objective in \eqref{eq:app-dp} and $C_\per(\base) = \E[V_{\per+1}(\pos{\base -
\dem_\per})]$ for its continuation term, so $J_\per = G_\per + C_\per$. A
\emph{$\crat$-fractile} of $F_\per$ is any $\base^*$ with $F_\per(\base^{*-}) \le
\crat \le F_\per(\base^*)$.

\begin{lemma}[the one-period function]
\label{app:lem:newsvendor}
$G_\per$ is real-valued and concave on $\R$, with one-sided derivatives
$G_\per^{\prime+}(\base) = \prof - (\prof+\hold)F_\per(\base)$ and
$G_\per^{\prime-}(\base) = \prof - (\prof+\hold)F_\per(\base^-)$. Its set of
maximizers over $\R$ is exactly the set of $\crat$-fractiles of $F_\per$, every
such point is nonnegative, $G_\per$ is non-decreasing on
$(-\infty,\base^*_\per]$ and non-increasing on $[\base^*_\per,\infty)$ for any
$\crat$-fractile $\base^*_\per$.
\end{lemma}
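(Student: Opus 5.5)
The plan is to write $G_\per$ as the expectation of a pointwise function of $\base$ and differentiate under the expectation. For fixed demand $\dem \ge 0$, let
\[
g(\base,\dem)=\prof\min(\dem,\base)-\hold\pos{\base-\dem}.
\]
This is a minimum of affine functions of $\base$ when $\base \ge \dem$ and when $\base<\dem$. Concretely, $g(\base,\dem)=\min\{\prof\base,\ \prof\dem-\hold(\base-\dem)\}$: for $\base\le\dem$ the first term is smaller, and for $\base\ge\dem$ the second is. So $g(\cdot,\dem)$ is concave and piecewise linear, with slope $\prof$ left of $\dem$ and $-\hold$ right of it. It also satisfies $|g(\base,\dem)|\le \prof|\base|+(\prof+\hold)\dem+\hold|\base|$. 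Integrability of demand therefore makes $G_\per$ real-valued, and concavity passes through the expectation.

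For the one-sided derivatives, I would take difference quotients $(g(\base+\epsilon,\dem)-g(\base,\dem))/\epsilon$ for $\epsilon\downarrow0$. By concavity these are monotone in $\epsilon$ and bounded in absolute value by $\max(\prof,\hold)$. Monotone or dominated convergence then gives $G^{\prime+}_\per(\base)=\E[\partial^+_\base g]$. The right derivative of $g$ at $\base$ equals $\prof$ if $\dem>\base$ and $-\hold$ if $\dem\le\base$, the atom at $\dem=\base$ going to $-\hold$. Hence
\[
G^{\prime+}_\per(\base)=\prof(1-F_\per(\base))-\hold F_\per(\base)=\prof-(\prof+\hold)F_\per(\base).
\]
The left derivative is symmetric: slope $\prof$ when $\dem\ge\base$ and $-\hold$ when $\dem<\base$, which gives $F_\per(\base^-)$. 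This handling of atoms is the only delicate step: the one-sided limits must assign the atom at $\dem=\base$ to the correct side.

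For the maximizers, a point $\base$ of a concave function on $\R$ is a global maximizer iff $G^{\prime+}(\base)\le0\le G^{\prime-}(\base)$. By the formulas this is $F_\per(\base^-)\le\crat\le F_\per(\base)$, which is exactly the definition of a $\crat$-fractile. Existence follows because $F_\per(\base)\to1>\crat$ as $\base\to\infty$, and $F_\per(\base)=0<\crat$ for $\base<0$.

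Nonnegativity is immediate: for $\base<0$, $F_\per(\base)=0$ because demand is nonnegative, so $F_\per(\base)<\crat$ and $\base$ cannot be a fractile. Equivalently, $G^{\prime+}=\prof>0$ there.

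Finally, fix any fractile $\base^*_\per$. For $\base<\base^*_\per$ we have $F_\per(\base)\le F_\per(\base^{*-}_\per)\le\crat$, so $G^{\prime+}(\base)\ge0$, and concavity gives monotonicity on $(-\infty,\base^*_\per]$. For $\base>\base^*_\per$, $F_\per(\base^-)\ge F_\per(\base^*_\per)\ge\crat$, so $G^{\prime-}(\base)\le0$, giving monotonicity on the right. Equivalently, a concave function is non-decreasing left of any maximizer and non-increasing right of it, which yields the monotonicity directly.
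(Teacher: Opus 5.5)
Your proposal is correct and follows essentially the same route as the paper: pointwise concavity of the integrand, one-sided derivatives by passing the difference quotients through the expectation with the atom at $\dem=\base$ assigned to the correct side, the concave first-order criterion $G_\per^{\prime+}(\base)\le 0\le G_\per^{\prime-}(\base)$, and $F_\per=0$ on the negatives. The only cosmetic difference is finiteness. The paper uses $|\min(\dem,\base)|\le|\base|$ and $0\le\pos{\base-\dem}\le\pos{\base}$, so it needs no integrability of demand. Your looser bound does invoke integrability, which the standing assumptions supply in any case.
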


\begin{proof}
For each fixed $\dem \ge 0$ the map $\base \mapsto \min(\dem,\base)$ is concave
and $\base \mapsto \pos{\base-\dem}$ is convex, so the integrand is concave in
$\base$; concavity is preserved by the expectation, which is finite because
$|\min(\dem_\per,\base)| \le |\base|$ and $0 \le \pos{\base-\dem_\per} \le
\pos{\base}$, so $G_\per$ is finite at every $\base$ with no integrability
hypothesis at all. Integrability of $\dem_\per$ is needed for $V_\per$, not here.

For the right derivative, $\min(\dem,\base+\varepsilon) - \min(\dem,\base) =
\min(\pos{\dem-\base},\varepsilon)$, so the difference quotient
$\varepsilon^{-1}\min(\pos{\dem-\base},\varepsilon)$ increases to
$\ind{\dem > \base}$ as $\varepsilon \downarrow 0$ and monotone convergence
gives $\Prob(\dem_\per > \base)$. Likewise $\pos{\base+\varepsilon-\dem} -
\pos{\base-\dem} = \varepsilon\ind{\dem \le \base} + (\base+\varepsilon-\dem)
\ind{\base < \dem < \base+\varepsilon}$, whose difference quotient converges to
$\ind{\dem \le \base}$ boundedly, giving $F_\per(\base)$. Hence
$G_\per^{\prime+}(\base) = \prof(1 - F_\per(\base)) - \hold F_\per(\base) =
\prof - (\prof+\hold)F_\per(\base)$, and the same computation from the left
gives $F_\per(\base^-)$.

By concavity, $\base^*$ maximizes $G_\per$ if and only if
$G_\per^{\prime-}(\base^*) \ge 0 \ge G_\per^{\prime+}(\base^*)$, that is
$F_\per(\base^{*-}) \le \crat \le F_\per(\base^*)$, which is the definition of a
$\crat$-fractile. Since $\dem_\per \ge 0$ we have $F_\per(\base) = 0$ for
$\base < 0$, so $G_\per^{\prime+}(\base) = \prof > 0$ there and no maximizer is
negative. Monotonicity on either side of $\base^*_\per$ is concavity together
with the sign of the one-sided derivatives.
\end{proof}

Lemma~\ref{app:lem:newsvendor} is where the zero purchase cost enters. A unit
purchase cost $\cbuy > 0$ would replace the objective in \eqref{eq:app-dp} by
$J_\per(\base) - \cbuy(\base - \onh_\per)$, in which the entering inventory
appears as a linear credit, not only through the constraint
$\base \ge \onh_\per$; a carried unit would then save $\cbuy$ at every later
period, which is precisely
the option value the theorem denies.

For the exact-optimality clause of Theorem~\ref{thm:zero-option}, we state
the non-decreasing-fractile condition as a separate assumption.

\begin{assumption}[monotone myopic levels]
\label{app:as:levels}
The $\crat$-fractiles can be selected non-decreasing in time: there exist
$\crat$-fractiles $\base^*_1 \le \base^*_2 \le \dots \le \base^*_{\Tend}$ of
$F_1,\dots,F_{\Tend}$.
\end{assumption}

Assumption~\ref{app:as:levels} holds whenever the demands are identically
distributed, in which case all the levels coincide, and more generally whenever
$\dem_\per$ is stochastically non-decreasing in $\per$, since then
$F_{\per+1} \le F_\per$ pointwise and the smallest $\crat$-fractiles are already
non-decreasing. Remark~\ref{app:rem:levels-needed} and
Example~\ref{app:ex:counter} below show that it cannot be dropped.

\begin{theorem}[Exact-optimality clause of Theorem~\ref{thm:zero-option}]
\label{app:thm:zero-option}
Let Assumption~\ref{app:as:zero-option} hold, let the demands be independent with
known laws $F_\per$ and finite means, and let Assumption~\ref{app:as:levels}
hold with the selection $\base^*_1 \le \dots \le \base^*_{\Tend}$. Then for
every $\per \le \Tend$ the value function $V_\per$ of \eqref{eq:app-dp} is
constant on $[0,\base^*_\per]$ and non-increasing on $[0,\infty)$, the supremum
in \eqref{eq:app-dp} is attained at $\base = \max\{\onh, \base^*_\per\}$, and
the policy $\base_\per = \max\{\onh_\per, \base^*_\per\}$ is optimal.
Consequently a unit carried from $\per-1$ into $\per$ has zero marginal value
whenever $\onh_\per \le \base^*_\per$, and the myopic newsvendor solution is
exactly optimal, not asymptotically or approximately so.
\end{theorem}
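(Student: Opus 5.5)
Proceed by backward induction on $\per$ from $\Tend$ down to $1$, carrying the joint hypothesis: \emph{$V_{\per+1}$ is non-increasing on $[0,\infty)$ and constant on $[0,\base^*_{\per+1}]$}. The base case is $V_{\Tend+1}\equiv 0$, which satisfies both properties trivially; the required selection is available because Assumption~\ref{app:as:levels} is only needed up to $\base^*_{\Tend}$, so set $\base^*_{\Tend+1}=\infty$ formally. Write $c_{\per+1}$ for the constant value of $V_{\per+1}$ on $[0,\base^*_{\per+1}]$. Finiteness of $V_\per$ comes from the bounds $G_\per\le \prof\,\E\dem_\per$ and $V_{\per+1}\ge$ the value of ordering nothing. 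Integrability of demand is used only here.

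\textbf{Step 1: continuation term.} Monotonicity of $V_{\per+1}$ and of $\base\mapsto\pos{\base-\dem}$ imply that $C_\per(\base)=\E[V_{\per+1}(\pos{\base-\dem_\per})]$ is non-increasing in $\base$. For $\base\le\base^*_\per$ we have $\pos{\base-\dem_\per}\le\base\le\base^*_\per\le\base^*_{\per+1}$ by Assumption~\ref{app:as:levels}. Hence $C_\per(\base)=c_{\per+1}$ is constant on $(-\infty,\base^*_\per]$ and satisfies $C_\per\le c_{\per+1}$ everywhere.

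\textbf{Step 2: maximizer.} By Lemma~\ref{app:lem:newsvendor}, $G_\per$ is maximized at $\base^*_\per$ and is non-increasing on $[\base^*_\per,\infty)$. Consider first $\onh\le\base^*_\per$. For any $\base\ge\onh$ we get $J_\per(\base)\le G_\per(\base^*_\per)+c_{\per+1}=J_\per(\base^*_\per)$, so $V_\per(\onh)=G_\per(\base^*_\per)+c_{\per+1}$. This value is independent of $\onh$, which gives constancy on $[0,\base^*_\per]$ and attainment at $\max\{\onh,\base^*_\per\}$. Now consider $\onh>\base^*_\per$. On $[\onh,\infty)$ both $G_\per$ and $C_\per$ are non-increasing, so $J_\per$ is non-increasing and the supremum is attained at $\base=\onh$.

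\textbf{Step 3: monotonicity of $V_\per$.} On $[\base^*_\per,\infty)$ we have $V_\per(\onh)=J_\per(\onh)$, which is non-increasing by Step 2. At the junction $\onh=\base^*_\per$ the two formulas agree. Therefore $V_\per$ is non-increasing on $[0,\infty)$, and the induction closes. Optimality of the policy $\base_\per=\max\{\onh_\per,\base^*_\per\}$ follows from the standard verification argument for the finite-horizon dynamic program, since each step attains the supremum with a measurable (indeed deterministic-threshold) selection. The zero-marginal-value statement is simply the constancy on $[0,\base^*_\per]$.

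\textbf{Main obstacle.} The only delicate point is Step 1. The continuation constancy must hold on all of $[0,\base^*_\per]$, and this requires the carried stock $\pos{\base-\dem}$ to stay inside the flat region of $V_{\per+1}$. That is exactly where the monotone fractile selection is indispensable; without it, Example~\ref{app:ex:counter} shows the argument fails. I would also verify the first assertion of Theorem~\ref{thm:zero-option}, that $V_\per$ is non-increasing without Assumption~\ref{app:as:levels}, by rerunning the induction with monotonicity alone. In that version, Step 2 for $\onh\le\base^*_\per$ gives only that some maximizer lies in $[\onh,\max\{\onh,\base^*_\per\}]$, because on $[\base^*_\per,\infty)$ the function $J_\per$ is non-increasing.
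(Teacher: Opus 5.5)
Your proposal is correct and follows the paper's proof essentially step for step. It uses the same backward induction on the pair ``$V_{\per+1}$ constant on $[0,\base^*_{\per+1}]$ and non-increasing'' with $\base^*_{\Tend+1}=+\infty$, the same use of Assumption~\ref{app:as:levels} to keep $\pos{\base-\dem_\per}$ inside the flat region so that $C_\per$ is constant below $\base^*_\per$, the same case split on $\onh$ versus $\base^*_\per$, and the same appeal to the finite-horizon verification theorem. Your one deviation is to bound $G_\per$ and $C_\per$ separately by their values at $\base^*_\per$ to show that $\base^*_\per$ globally maximizes $J_\per$; this is only a compressed form of the paper's Steps~2--3.
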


This restatement is the conditional half of Theorem~\ref{thm:zero-option}. The
unconditional half, that $V_\per$ is non-increasing and that no post-order level
above $\max\{\onh,\base^*_\per\}$ is strictly better, needs neither
Assumption~\ref{app:as:levels} nor the induction below and is separated out as
Proposition~\ref{app:prop:no-levels}. It is the half the design of
\S\ref{sec:determ} rests on, so the two are proved apart to keep visible which
one carries which hypothesis.

The benchmark fixes each instance's demand path in advance and takes the
expectation over the policy's own randomization alone
(\S\ref{sec:prelim-problem}), while the theorem treats demand as random with
known marginals $F_\per$. The theorem describes the policy's stochastic
planning model. Taking
$F_\per = \delta_{\dem_\per}$ degenerate recovers the latter, and the myopic
policy then becomes the clairvoyant rule that \S\ref{sec:exp-ceiling}
measures.

\begin{proof}
Backward induction on $\per$, with induction hypothesis the conjunction

\smallskip
\noindent
\textbf{(H1)} $V_{\per+1}$ is constant on $[0, \base^*_{\per+1}]$, and
\textbf{(H2)} $V_{\per+1}$ is non-increasing on $[0,\infty)$,

\smallskip
\noindent
where for $\per+1 = \Tend+1$ we read $\base^*_{\Tend+1} = +\infty$.

\emph{Base case.} At $\per = \Tend$ the continuation is $V_{\Tend+1} \equiv 0$,
which is constant on all of $[0,\infty)$ and non-increasing, so (H1) and (H2)
hold. This is the only place the zero salvage value is used, and it is used
decisively: a salvage credit $\varsigma > 0$ would give
$V_{\Tend+1}(\onh) = \varsigma\,\onh$, which is strictly increasing and
violates both halves of the hypothesis at once.

\emph{Inductive step.} Assume (H1) and (H2) at $\per+1$ and fix $\per \le
\Tend$.

\emph{Step 1: the continuation is non-increasing.} For every $\dem \ge 0$ the
map $\base \mapsto \pos{\base-\dem}$ is non-decreasing, so by (H2) the map
$\base \mapsto V_{\per+1}(\pos{\base-\dem})$ is non-increasing; taking
expectations, $C_\per$ is non-increasing on $[0,\infty)$. Independence of
$\dem_\per$ from the past is what makes $C_\per$ a function of $\base$ alone.

\emph{Step 2, the key step: the continuation is flat below the myopic level.}
Let $0 \le \base \le \base^*_\per$. Then $\pos{\base - \dem} \le \base \le
\base^*_\per \le \base^*_{\per+1}$ for every $\dem \ge 0$, where the last
inequality is Assumption~\ref{app:as:levels}. So the argument of $V_{\per+1}$
lies in $[0,\base^*_{\per+1}]$ pathwise, and (H1) gives
$V_{\per+1}(\pos{\base-\dem_\per}) = V_{\per+1}(0)$ almost surely. Hence
$C_\per(\base) = V_{\per+1}(0)$ for all $\base \in [0,\base^*_\per]$: on that
range the continuation contributes an additive constant, and the first-order
condition for $J_\per$ is the first-order condition for $G_\per$ alone. This is
the mechanism the theorem asserts. It is available because ordering is free and
instantaneous, so any unit not carried forward can be replaced at $\per+1$ at
no cost, which is exactly what makes $V_{\per+1}$ flat below its own target.

\emph{Step 3: $\base^*_\per$ maximizes $J_\per$.} On $[0,\base^*_\per]$ we have
$J_\per = G_\per + V_{\per+1}(0)$, which by Lemma~\ref{app:lem:newsvendor} is
non-decreasing and is maximized over that interval at $\base^*_\per$. For
$\base > \base^*_\per$, Lemma~\ref{app:lem:newsvendor} gives $G_\per(\base) \le
G_\per(\base^*_\per)$ and Step~1 gives $C_\per(\base) \le C_\per(\base^*_\per)$,
so $J_\per(\base) \le J_\per(\base^*_\per)$. Therefore $\base^*_\per$
maximizes $J_\per$ over $[0,\infty)$, and $J_\per$ is non-increasing on
$[\base^*_\per,\infty)$ because both of its terms are.

\emph{Step 4: re-establishing the hypothesis at $\per$.} For $\onh \le
\base^*_\per$ the unconstrained maximizer $\base^*_\per$ is feasible, so
$V_\per(\onh) = J_\per(\base^*_\per)$. For $\onh > \base^*_\per$ the objective
is non-increasing on the feasible set $[\onh,\infty)$ by Step~3, so the
supremum is attained at $\base = \onh$ and $V_\per(\onh) = J_\per(\onh)$.
Both cases are summarized by $V_\per(\onh) = J_\per(\max\{\onh,
\base^*_\per\})$, which is the claimed optimal post-order level. It follows
that $V_\per$ is constant on $[0,\base^*_\per]$, giving (H1) at $\per$, and
that $V_\per$ is non-increasing on $[0,\infty)$, since it is constant up to
$\base^*_\per$ and equals $J_\per$, which is non-increasing, above it. This
gives (H2) at $\per$ and closes the induction.

Zero marginal value of carried stock is (H1) read as a difference: for
$0 \le \onh' \le \onh \le \base^*_\per$, $V_\per(\onh) - V_\per(\onh') = 0$.
Optimality of $\base_\per = \max\{\onh_\per,\base^*_\per\}$ is Step~4 applied
at every period. That a Markov policy attaining the per-period maximum in
\eqref{eq:app-dp} is optimal among all history-dependent policies is the
verification theorem for a finite-horizon problem with bounded-below rewards
and a one-dimensional Borel state \citep{bertsekas1978stochastic}; the
expectation $\E[V_{\per+1}(\pos{\base-\dem_\per})]$ is defined because
$V_{\per+1}$ is monotone and hence Borel.
\end{proof}

Zero fixed ordering cost is also part of the theorem's setting. With a setup
charge $K > 0$ the period
objective becomes $J_\per(\base) - K\ind{\base > \onh}$, which is not concave
in $\base$, so Lemma~\ref{app:lem:newsvendor} no longer identifies the
maximizer and the optimal policy is of $(s,S)$ form, no longer base-stock
\citep{scarf1960optimality}. Under $K > 0$ it is optimal not to order at all
for entering inventories somewhat below the target, so $V_\per$ is not flat on
$[0,\base^*_\per]$ and the conclusion fails even though the mechanism of
Step~2 is untouched.

The effect of decreasing myopic levels depends on the reachable leftover inventory.
A high-demand period preceding a low one is not by itself enough to break exact
optimality. Because ordering is free, $V_{\per+1}$ is flat on
$[0,\base^*_{\per+1}]$ by (H1), so carrying surplus into $\per+1$ costs nothing
while the leftover stays under the next level, the policy having been going to
order up to it in any case. What breaking exact optimality needs is that the
drop be \emph{reachable}, $\Prob(\pos{\base^*_\per - \dem_\per} >
\base^*_{\per+1}) > 0$: only then does a unit ordered at the fractile face a
period in which no order would have been placed for it. Reachability is
necessary and not sufficient, since the holding it exposes the unit to must also
outweigh the current period's marginal value. Exact backward induction over a
bounded integer state exhibits all three cases. With $\base^*$ falling from $17$ to
$11$ but demand never below $12$,
so that the leftover cannot exceed $11$, the myopic level stays optimal at every
state despite the drop; with a reach probability of $0.012$ it is still optimal;
and with $\base^*$ falling from $16$ to $1$ at a reach probability of $0.714$ the
optimum drops six units below the fractile.

\begin{corollary}[integer orders]
\label{app:cor:integer}
Assume in addition that $\dem_\per$ is integer valued and that the post-order
level is restricted to the integers, as the benchmark's flooring of order
quantities enforces. Then $G_\per(\base+1) - G_\per(\base) = \prof -
(\prof+\hold)F_\per(\base)$ for integer $\base$, this difference is
non-increasing in $\base$, and the smallest integer $\base$ with
$F_\per(\base) \ge \crat$ maximizes $G_\per$ over the integers. Under the
hypotheses of Theorem~\ref{app:thm:zero-option} the optimal integer post-order
level is $\max\{\onh_\per, \base^{*,\mathbb{Z}}_\per\}$, where
$\base^{*,\mathbb{Z}}_\per = \min\{k \in \mathbb{Z}: F_\per(k) \ge \crat\}$.
\end{corollary}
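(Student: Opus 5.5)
The plan is to reuse the continuous argument of Theorem~\ref{app:thm:zero-option} essentially verbatim, replacing the one-sided derivatives of Lemma~\ref{app:lem:newsvendor} by forward differences on the integer lattice. The first step is the difference identity. For integer $\dem\ge0$ and integer $\base$, $\min(\dem,\base+1)-\min(\dem,\base)=\ind{\dem\ge\base+1}=\ind{\dem>\base}$. Likewise $\pos{\base+1-\dem}-\pos{\base-\dem}=\ind{\dem\le\base}$, which is exact because $\dem$ and $\base$ are both integers, so no partial-unit term appears. Taking expectations gives
$G_\per(\base+1)-G_\per(\base)=\prof\,\Prob(\dem_\per>\base)-\hold F_\per(\base)=\prof-(\prof+\hold)F_\per(\base)$.
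Since $F_\per$ is non-decreasing, this difference is non-increasing in $\base$, so $G_\per$ is discretely concave on $\mathbb{Z}$.

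The second step identifies the integer maximizer. Set $\base^{*,\mathbb{Z}}_\per=\min\{k\in\mathbb{Z}:F_\per(k)\ge\crat\}$, which exists since $F_\per(k)\to1>\crat$ and $F_\per(k)=0$ for $k<0$. This also shows $\base^{*,\mathbb{Z}}_\per\ge0$. For $k<\base^{*,\mathbb{Z}}_\per$ the difference is positive, because $F_\per(k)<\crat$. For $k\ge\base^{*,\mathbb{Z}}_\per$ it is $\le0$. Hence $G_\per$ is non-decreasing up to $\base^{*,\mathbb{Z}}_\per$ and non-increasing beyond it, which is exactly the integer analogue of Lemma~\ref{app:lem:newsvendor}.

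The key observation is that $\base^{*,\mathbb{Z}}_\per$ is itself a $\crat$-fractile in the paper's sense. Because demand is integer, $F_\per(\base^{*,\mathbb{Z}-}_\per)=F_\per(\base^{*,\mathbb{Z}}_\per-1)<\crat\le F_\per(\base^{*,\mathbb{Z}}_\per)$.

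The third step, which I expect to be the only real obstacle, is transferring the induction. Two issues need care. The first is that Assumption~\ref{app:as:levels} is stated for some selection of fractiles, but the integer argument needs the specific selection $\base^{*,\mathbb{Z}}_\per$ to be non-decreasing. I would argue that the hypotheses of Theorem~\ref{app:thm:zero-option} should be read with the integer selection. Alternatively, note that any fractile $\base^*_\per$ lies in $[\base^{*,\mathbb{Z}}_\per,\base^{\max}_\per]$, where the whole interval consists of $\crat$-fractiles and $G_\per$ is constant on it. I would then check that Steps 1--4 only use that $G_\per$ attains its maximum on the chosen level and that the levels are monotone, so the given monotone selection can be rounded up (ceiling) to integers. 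Ceilings preserve monotonicity, and the ceiling of a fractile lies in the same flat interval since that interval has integer endpoints when demand is integer.

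The second issue is the state space. With integer orders, integer $\onh$ and integer demand, leftovers $\pos{\base-\dem}$ stay integer, so the dynamic program \eqref{eq:app-dp} restricted to $\base\in\mathbb{Z}$, $\base\ge\onh$, is well posed on $\mathbb{Z}_{\ge0}$. (H1)--(H2) then transfer verbatim on integer points. Step~2 uses only pathwise bounds, and Steps~3--4 use only the monotonicity of $G_\per$ established above.

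The conclusion is that $V_\per(\onh)=J_\per(\max\{\onh,\base^{*,\mathbb{Z}}_\per\})$. This gives the stated optimal integer post-order level, and the verification argument for a countable state space is then immediate.
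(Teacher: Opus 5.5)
Your second option, rounding the given monotone selection up to $\lceil\base^*_\per\rceil$, is correct and is essentially the paper's argument. Commit to it and drop the first option. Reading the hypothesis with the smallest-integer selection assumes something the corollary does not grant, since that selection can decrease while a monotone one exists: with $\dem_1$ uniform on $\{2,5\}$, $\dem_2$ uniform on $\{1,4\}$ and $\crat=1/2$, $(2,2)$ is a monotone selection but $\base^{*,\mathbb{Z}}=(2,1)$.

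The paper keeps the real-valued selection $\base^*_\per$ and shows the lattice value function is flat on $[0,\base^*_{\per+1}]\cap\mathbb{Z}$, because $F_{\per+1}=\crat$ at the integers between $\base^{*,\mathbb{Z}}_{\per+1}$ and $\base^*_{\per+1}$. Your ceiling selection is the same device made integer-valued.

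The one step you leave implicit is the last. Running the induction with $\lceil\base^*_\per\rceil$ yields optimality of $\max\{\onh,\lceil\base^*_\per\rceil\}$, not of $\max\{\onh,\base^{*,\mathbb{Z}}_\per\}$. To make the swap you need $J_\per$ constant on $[\base^{*,\mathbb{Z}}_\per,\lceil\base^*_\per\rceil]\cap\mathbb{Z}$. This holds because $G_\per$ has zero forward differences there, since $F_\per(k)=\crat$ at those $k$. It also uses that $C_\per$ is constant on $[0,\lceil\base^*_\per\rceil]\cap\mathbb{Z}$ by Step~2.
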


\begin{proof}
For integer valued $\dem$ and integer $\base$, $\min(\dem,\base+1) -
\min(\dem,\base) = \ind{\dem \ge \base+1}$ and $\pos{\base+1-\dem} -
\pos{\base-\dem} = \ind{\dem \le \base}$. Taking expectations,
$G_\per(\base+1) - G_\per(\base) = \prof\,\Prob(\dem_\per > \base) -
\hold\,\Prob(\dem_\per \le \base) = \prof - (\prof+\hold)F_\per(\base)$, which
is non-increasing in $\base$ because $F_\per$ is non-decreasing. A function on
$\mathbb{Z}$ with non-increasing forward differences is maximized at the first
$\base$ where the forward difference turns non-positive, that is at the
smallest integer with $F_\per(\base) \ge \crat$.

Such an integer is a $\crat$-fractile of $F_\per$, but
Assumption~\ref{app:as:levels} is a property of a selection and the integer
selection need not inherit it: with $\dem_1$ uniform on $\{2,5\}$ and $\dem_2$
uniform on $\{1,4\}$ at $\crat = 1/2$, the monotone selection $(2,2)$ exists
while $\base^{*,\mathbb{Z}} = (2,1)$ decreases. The theorem therefore cannot be
invoked with that selection, and what carries the induction on the lattice is
instead that $\base^{*,\mathbb{Z}}_\per \le \base^*_\per$, being the smallest
such integer, together with the fact that the lattice value function is flat on
$[0,\base^*_{\per+1}] \cap \mathbb{Z}$ and not merely up to
$\base^{*,\mathbb{Z}}_{\per+1}$: either the fractile is unique, in which case
the two levels coincide, or $F_{\per+1} \equiv \crat$ between them, in which
case the forward difference $\prof - (\prof+\hold)F_{\per+1}$ vanishes there
and $G_{\per+1}$ is flat across the gap. Step~2 then runs with
$\pos{\base-\dem} \le \base \le \base^{*,\mathbb{Z}}_\per \le \base^*_\per
\le \base^*_{\per+1}$, and Steps~1, 3 and 4 use only monotonicity of $C_\per$
and of $F_\per$, which the restriction to the lattice preserves.
\end{proof}

The next result gives the upper-bound property under the zero-option setting
alone; the following example identifies the role of Assumption~\ref{app:as:levels}
in exact optimality.

\begin{proposition}[what survives without monotone myopic levels]
\label{app:prop:no-levels}
Let Assumption~\ref{app:as:zero-option} hold with independent, integrable demands,
and let $\base^*_\per$ be any $\crat$-fractile of $F_\per$. Then for every $\per$
and every $\onh \ge 0$,
\begin{equation}
  V_\per(\onh)
  \;=\; \sup_{\onh \le \base \le \max\{\onh,\base^*_\per\}} J_\per(\base),
  \label{eq:app-upper-bound}
\end{equation}
and $V_\per$ is non-increasing on $[0,\infty)$. In words, ordering above the
myopic fractile never strictly pays, so some optimal post-order level lies at or
below $\max\{\onh,\base^*_\per\}$: the myopic level bounds the optimum from
above and never from below. When the fractile is not unique a higher level can
tie, which is why the statement is \eqref{eq:app-upper-bound} and not a claim
that every optimal level is at or below it.
\end{proposition}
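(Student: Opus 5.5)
We prove the proposition by backward induction on $\per$, with the single induction hypothesis that $V_{\per+1}$ is non-increasing on $[0,\infty)$. This is (H2) from the proof of Theorem~\ref{app:thm:zero-option}. Hypothesis (H1) and Assumption~\ref{app:as:levels} play no role. The base case is $V_{\Tend+1}\equiv 0$, which is non-increasing. This is again the only place zero salvage enters.

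For the inductive step, fix $\per$ and assume $V_{\per+1}$ is non-increasing. Step~1 of the earlier proof uses only this monotonicity, so it goes through verbatim. Since $\base\mapsto\pos{\base-\dem}$ is non-decreasing for each $\dem\ge 0$, the continuation $C_\per(\base)=\E[V_{\per+1}(\pos{\base-\dem_\per})]$ is non-increasing on $[0,\infty)$. Independence makes $C_\per$ a function of $\base$ alone, and monotonicity makes the integrand Borel. We then check that $C_\per$ is finite. Pathwise $V_{\per+1}(\pos{\base-\dem})\le V_{\per+1}(0)$. For a lower bound, the policy that never orders, starting from $\onh$, earns at least $-\hold\,\onh$ per period and nonnegative sales revenue, so $V_{\per+1}(x)\ge -\hold(\Tend-\per)x$. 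Integrability of $\dem_\per$ then makes $C_\per$ finite, and $V_{\per+1}(0)$ is finite because revenue is bounded by $\prof$ times expected total demand. This is exactly where integrability is needed, as the remark after Lemma~\ref{app:lem:newsvendor} anticipated.

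The main step, and the only genuinely new point, is the upper-bound identity \eqref{eq:app-upper-bound}. Take any feasible $\base>\max\{\onh,\base^*_\per\}$ and set $\bar\base=\max\{\onh,\base^*_\per\}$, which is feasible and satisfies $\bar\base\ge\base^*_\per$. Lemma~\ref{app:lem:newsvendor} says $G_\per$ is non-increasing on $[\base^*_\per,\infty)$, so $G_\per(\base)\le G_\per(\bar\base)$. By Step~1, $C_\per(\base)\le C_\per(\bar\base)$. Adding these gives $J_\per(\base)\le J_\per(\bar\base)$. The supremum over $\base\ge\onh$ therefore equals the supremum over $[\onh,\bar\base]$, which is \eqref{eq:app-upper-bound}. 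When $\onh\ge\base^*_\per$ the interval collapses to $\{\onh\}$. Whether the supremum on the compact interval is attained is a side issue. Attainment holds because $G_\per$ is continuous and $C_\per$ is monotone, but \eqref{eq:app-upper-bound} as stated is a supremum and we do not need it.

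Finally we show $V_\per$ is non-increasing. For $\onh\le\onh'$, the feasible set $[\onh',\infty)$ is contained in $[\onh,\infty)$, so $V_\per(\onh')\le V_\per(\onh)$ directly from the definition \eqref{eq:app-dp}. This uses only that the constraint sets shrink. It closes the induction on (H2). Combined with the verification argument already cited from \citep{bertsekas1978stochastic}, it also yields the statement's reading: some optimal post-order level, or an $\varepsilon$-optimal one if attainment is not invoked, lies at or below $\max\{\onh,\base^*_\per\}$. Ties above that level are possible when $F_\per$ equals $\crat$ on an interval, because $G_\per$ is then flat there.
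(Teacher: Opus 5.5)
Your proof is correct and is essentially the paper's. $V_\per$ is non-increasing because the feasible sets shrink. Step~1 makes $C_\per$ non-increasing, and Lemma~\ref{app:lem:newsvendor} then makes $J_\per$ non-increasing on $[\base^*_\per,\infty)$, so the supremum can be restricted to $[\onh,\max\{\onh,\base^*_\per\}]$. Your induction wrapper and finiteness checks are harmless bookkeeping that the paper omits.

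One aside is misjustified. Continuity of $G_\per$ plus mere monotonicity of $C_\per$ does not give attainment, since a non-increasing function need not be upper semicontinuous. Attainment does hold here, because backward induction and dominated convergence make $V_{\per+1}$, and hence $C_\per$, continuous. As you note, \eqref{eq:app-upper-bound} does not need it.
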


\begin{proof}
That $V_\per$ is non-increasing is immediate from \eqref{eq:app-dp}, since the
feasible sets $\{\base \ge \onh\}$ shrink as $\onh$ grows, and it needs no
induction. Given this at $\per+1$, Step~1 of the proof of
Theorem~\ref{app:thm:zero-option} applies verbatim and $C_\per$ is
non-increasing; with Lemma~\ref{app:lem:newsvendor} this makes $J_\per$
non-increasing on $[\base^*_\per,\infty)$, so no $\base > \max\{\onh,
\base^*_\per\}$ can improve on $\max\{\onh,\base^*_\per\}$, which is
\eqref{eq:app-upper-bound}.
\end{proof}

\begin{remark}
\label{app:rem:levels-needed}
Assumption~\ref{app:as:levels} is used only in Step~2, and only to guarantee
that the reachable states $\pos{\base-\dem}$ for $\base \le \base^*_\per$ stay
inside the flat region $[0,\base^*_{\per+1}]$ of $V_{\per+1}$. When the myopic
levels decrease over time this can fail, and where it does it fails in a
direction: stock left over from $\per$ can land above the next period's target,
where it is charged holding without displacing any future order, so the true
optimum orders strictly less than the myopic fractile. A decrease is not by
itself enough, since Step~2 needs only $\pos{\base^*_\per -
\operatorname{essinf}\dem_\per} \le \base^*_{\per+1}$, which is strictly weaker
than Assumption~\ref{app:as:levels} and can hold while the levels fall; the
discussion after Theorem~\ref{app:thm:zero-option} exhibits both sides
numerically. The failure direction is the same as the
inequality of Proposition~\ref{app:prop:no-levels}. The depletion term $\dep$ of
Proposition~\ref{app:prop:marginal} acts in the same direction; $\dep$ remains
an empirical correction fitted for a separate effect.
\end{remark}

\begin{example}[the hypothesis cannot be dropped]
\label{app:ex:counter}
Take $\Tend = 2$, $\prof = \hold = 1$ so $\crat = 1/2$, demands independent with
$\dem_1 \sim \mathrm{Uniform}[0,100]$ and $\dem_2 \equiv 0$. Every hypothesis
of Assumption~\ref{app:as:zero-option} holds, and $\base^*_1 = 50$ while
$\base^*_2 = 0$, so Assumption~\ref{app:as:levels} fails. In period $2$ the
reward of a post-order level $\base \ge \onh$ is $-\base$, so
$V_2(\onh) = -\onh$. Hence
\[
  J_1(\base)
  = \E[\min(\dem_1,\base)] - \E\pos{\base-\dem_1}
    - \E\pos{\base-\dem_1}
  = \E[\min(\dem_1,\base)] - 2\,\E\pos{\base-\dem_1},
\]
and for $\base \in [0,100]$ this is $\base - 3\base^2/200$, maximized at
$\base = 100/3$ with value $50/3 \approx 16.67$, whereas the myopic level
$\base^*_1 = 50$ yields $12.5$. The myopic policy is strictly suboptimal, and
$V_1(\onh) = J_1(\max\{\onh, 100/3\})$ is strictly decreasing on
$(100/3, 50]$, so $V_1$ is not constant on $[0,\base^*_1]$ either. Both halves
of the conclusion of Theorem~\ref{app:thm:zero-option} fail, while
\eqref{eq:app-upper-bound} holds with room to spare.
\end{example}

\begin{remark}[the pipeline state at $\lead > 0$]
\label{app:rem:leadtime}
The zero-lead-time assumption is essential to
Theorem~\ref{app:thm:zero-option}. At $\lead > 0$, the order placed
at $\per$ lands at $\per+\lead$, so the inventory it creates cannot be topped
up before it arrives and the states reachable at $\per+1,\dots,\per+\lead$
depend on $\base$ through the entire pipeline vector rather than a scalar.
Step~2 is the step that fails: the continuation is no longer flat in
the current decision, because a shortage inside the protection interval cannot
be repaired by a free order that arrives in time, so carried stock does acquire
option value and the one-dimensional dynamic program \eqref{eq:app-dp} is not
the right object. Lost-sales systems with positive lead time have no known
simple optimal policy and a state that is the whole pipeline
\citep{zipkin2008old,zipkin2008structure,bijvank2011lost}. The rule of
\S\ref{sec:determ} uses Theorem~\ref{app:thm:zero-option} as a design
motivation for a surrogate objective. It assigns surplus beyond the
responsibility period holding costs without a continuation credit, and prices
the responsibility period through the explicit marginal condition of
Proposition~\ref{app:prop:marginal}. The proposition characterizes the
maximizers of the integrated surrogate; dynamic optimality remains specific to
the zero-lead-time setting.
\end{remark}

\subsection{Lemma~\ref{lem:jacobian}: comparability of two transform spaces}
\label{app:sub:lem2}

\begin{lemma}[Two-space comparability]
\label{lem:jacobian}
\label{app:lem:jacobian}
Let $\fgt \in (0,1)$, let $\tf$ be strictly increasing and differentiable with
$\tf' > 0$, and let a member be fitted on $z = \tf(\dem)$ with predictive
density $q$. Then the
member's predictive density in raw space is $q(\tf(\dem))\,\tf'(\dem)$, so
its raw-space predictive log-density is $\log q(\tf(\dem)) + \log
\tf'(\dem)$. For $\tf(\dem) = \log(1+\dem)$ on $\dem \ge 0$ the correction
is $\log \tf'(\dem) = -\log(1+\dem) \le 0$. Suppose this correction is omitted.
Then the log-weight of every log-space member is inflated by
$\log(1+\dem_\per)$ per period relative to every identity-space member, whose
correction is zero, and under the discounted update $\lw_\per = \fgt
\lw_{\per-1} + (\text{log-density})$ the induced relative offset $\Delta_\per$
obeys $\Delta_\per = \fgt \Delta_{\per-1} + \log(1+\dem_\per)$ with
$\Delta_0 = 0$, so
\[
  \E \Delta_\per
  = \frac{1-\fgt^{\per}}{1-\fgt}\,\E[\log(1+\dem)]
  \;\xrightarrow[\per \to \infty]{}\;
  \frac{\E[\log(1+\dem)]}{1-\fgt}
\]
for identically distributed integrable $\log(1+\dem_\per)$. The offset does not
cancel in the softmax and does not vanish as data accumulate.
\end{lemma}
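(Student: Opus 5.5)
The plan has three parts: the change of variables, the recursion for the omitted Jacobian, and the expectation of that recursion. Each is elementary, so I will keep track of exactly which hypothesis enters where.

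\emph{Change of variables.} Since $\tf$ is strictly increasing and differentiable with $\tf'>0$, the event $\{\dem\le x\}$ coincides with $\{z\le \tf(x)\}$. The raw-space CDF is therefore $Q(\tf(x))$, where $Q$ is the CDF of $q$. Differentiating by the chain rule gives the density $q(\tf(\dem))\,\tf'(\dem)$, and taking logarithms gives the stated log-density. For $\tf(u)=\log(1+u)$ we have $\tf'(u)=1/(1+u)$, so the correction is $-\log(1+\dem)$. This is at most zero because $\dem\ge0$. An identity-space member has $\tf'\equiv1$ and hence zero correction.

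\emph{The offset recursion.} Write the discounted log-weight of a log-space member under the correct update, $\lw^{\mathrm{c}}_\per$, and under the uncorrected update, $\lw^{\mathrm{u}}_\per$. Both follow $\lw_\per=\fgt\lw_{\per-1}+(\text{log-density})$ from a common initialization. The uncorrected increment exceeds the correct one by exactly $\log(1+\dem_\per)$, while identity-space members are unaffected. By linearity of the update, the difference $\Delta_\per=\lw^{\mathrm{u}}_\per-\lw^{\mathrm{c}}_\per$ therefore satisfies $\Delta_\per=\fgt\Delta_{\per-1}+\log(1+\dem_\per)$ with $\Delta_0=0$. This $\Delta_\per$ is also the shift in log-weight relative to any identity-space member. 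Unrolling the recursion gives
\[
\Delta_\per=\sum_{k=1}^{\per}\fgt^{\per-k}\log(1+\dem_k).
\]

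\emph{Expectation and limit.} Take expectations term by term in the finite sum; integrability of $\log(1+\dem_k)$ is what makes this valid. With identically distributed terms, the geometric sum gives $\E\Delta_\per=\frac{1-\fgt^\per}{1-\fgt}\E[\log(1+\dem)]$. Since $\fgt\in(0,1)$, $\fgt^\per\to0$, and the limit follows.

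\emph{Non-cancellation.} Softmax is invariant only under a common additive shift of all log-weights. Here the shift $\Delta_\per$ is applied to the log-space members but not to the identity-space members. It is nonnegative, and strictly positive once any $\dem_k>0$. The odds ratio between any log-space and any identity-space member is therefore multiplied by $e^{\Delta_\per}$. The expected offset converges to a positive constant whenever $\Prob(\dem>0)>0$, rather than to zero, so the bias does not vanish as data accumulate.

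I expect no real obstacle. The only care needed is to state precisely that the offset is a difference of log-weights and is unaffected by the normalization step, because normalization subtracts the same constant from every member.
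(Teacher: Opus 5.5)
Your proposal is correct and follows the paper's proof essentially step for step. It uses the CDF identity and chain rule for the raw-space density, the difference recursion $\Delta_\per=\fgt\Delta_{\per-1}+\log(1+\dem_\per)$ from comparing corrected and uncorrected updates, and the fact that the offset is not a common shift and so survives the softmax; the only cosmetic difference is that you unroll the recursion into the geometric sum before taking expectations, whereas the paper takes expectations in the recursion and subtracts its fixed point $b/(1-\fgt)$, reaching the same formula.
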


\begin{proof}
\emph{Change of variables.} Since $\tf$ is strictly increasing, $\{Y \le
\dem\} = \{Z \le \tf(\dem)\}$ for $Z = \tf(Y)$, so the raw-space
distribution function is $F_Y(\dem) = Q(\tf(\dem))$ with $Q$ the
distribution function of $Z$. Differentiating and using differentiability of
$\tf$ gives the density $f_Y(\dem) = q(\tf(\dem))\tf'(\dem)$, and taking
logarithms gives the stated decomposition. For $\tf(\dem) = \log(1+\dem)$ we
have $\tf'(\dem) = 1/(1+\dem)$ and $\log \tf'(\dem) = -\log(1+\dem)$, which
is nonpositive for $\dem \ge 0$ and zero only at $\dem = 0$.

\emph{Accumulation.} Write $\lw^{\mathrm{c}}_{\per,m}$ for the log-weight
accumulated with the correction and $\lw^{\mathrm{o}}_{\per,m}$ for the one
that omits it. Both obey the same recursion with per-period increments
differing by $-\log \tf_m'(\dem_\per)$, so
$\Delta_{\per,m} = \lw^{\mathrm{o}}_{\per,m} - \lw^{\mathrm{c}}_{\per,m}$
satisfies $\Delta_{\per,m} = \fgt \Delta_{\per-1,m} - \log
\tf_m'(\dem_\per)$ with $\Delta_{0,m} = 0$. For an identity-space member
$\log \tf_m' \equiv 0$ and $\Delta_{\per,m} \equiv 0$; for a log1p member the
increment is $\log(1+\dem_\per) \ge 0$. First consider the deterministic
recursion $l_\per = \fgt l_{\per-1} + b$ with a constant offset $b$ and
$\fgt \in (0,1)$: subtracting the fixed point $b/(1-\fgt)$ gives
$l_\per - b/(1-\fgt) = \fgt(l_{\per-1} - b/(1-\fgt))$, hence $l_\per =
\fgt^\per l_0 + b(1-\fgt^\per)/(1-\fgt) \to b/(1-\fgt)$ geometrically. Taking
expectations in the stochastic recursion and using that
$\E[\log(1+\dem_\per)] = b$ does not depend on $\per$ puts $\E\Delta_{\per,m}$
in exactly this form with $l_0 = 0$, which is the displayed limit. When the
demands are in addition independent, $\Delta_{\per,m} = \sum_{s<\per}
\fgt^{s}\log(1+\dem_{\per-s})$ converges in distribution to
$\sum_{s \ge 0}\fgt^s \log(1+\dem_{-s})$, a proper random variable with mean
$b/(1-\fgt)$; the deterministic statement is the statement about its mean.

\emph{Non-cancellation.} The softmax is invariant under a shift common to all
members: for any $c \in \R$, $\exp(\lw_m + c)/\sum_{m'}\exp(\lw_{m'} + c) =
\exp(\lw_m)/\sum_{m'}\exp(\lw_{m'})$. A Jacobian correction identical across
members would therefore be immaterial, and this is why one can
omit a shared normalizing constant with no consequence. The correction here is
not shared: it is $0$ on the nine identity-space members and
$-\log(1+\dem_\per)$ on the nine log1p members. What survives in the weights is
the difference, and for a log1p member $m$ and an identity member $m'$ the
omitted version multiplies the odds $\wt_{\per,m}/\wt_{\per,m'}$ by
$\exp(\Delta_{\per,m})$, whose logarithm has expectation tending to
$\E[\log(1+\dem)]/(1-\fgt)$. Since $\dem \ge 0$ the offset has a fixed
sign, so it is a standing tilt, not a fluctuation that averages out, and
because the limit is a nonzero constant whenever $\Prob(\dem > 0) > 0$ it does
not shrink as the series lengthens. Adding $\log \tf'$ is thus sufficient to
put the two halves of the bank on one scale, and it is necessary up to a
member-independent additive constant, since those are exactly the
transformations the softmax cannot see.
\end{proof}

At the shipped $\fgt = 0.92$ the geometric sum is
$1/(1-\fgt) = 12.5$ periods, so a series running at a level of about one
hundred units, where $\log(1+\dem) \approx 4.6$, produces a steady-state
log-odds tilt of roughly $58$ nats in favor of the log1p half of the bank.
At this scale, omitting the correction concentrates the mixture weight on
the log1p space.

\subsection{Proposition~\ref{prop:marginal}: the marginal ordering condition}
\label{app:sub:prop3}

\begin{proposition}[Marginal ordering condition]
\label{prop:marginal}
\label{app:prop:marginal}
Let $\Short$ be a real random variable with $\E|\Short| < \infty$, let
$\prof,\hold,\mu > 0$ and $\dep \ge 0$, and define
\begin{equation}
  \Phi(\ord) \;=\; \prof\,\Prob(\Short \ge \ord)
    \;-\; \hold\,\E\!\left[
      \Big(1 + \dep \tfrac{\pos{\ord-\Short}}{\mu}\Big)
      \ind{\Short < \ord}\right].
  \label{eq:app-phi}
\end{equation}
Then:
\begin{enumerate}[label=(\roman*),leftmargin=2.2em]
\item $\Phi(\ord) = \prof - (\prof+\hold)\Prob(\Short<\ord)
  - \tfrac{\hold\dep}{\mu}\E\pos{\ord-\Short}$, and $\Phi$ is non-increasing
  and left-continuous on $\R$;
\item $\Phi \equiv \prof$ on $(-\infty, \operatorname{essinf}\Short]$; if
  $\dep > 0$ then $\Phi$ is strictly decreasing on
  $(\operatorname{essinf}\Short, \infty)$; if $\dep = 0$ then for
  $\ord_1 < \ord_2$ one has $\Phi(\ord_1) > \Phi(\ord_2)$ if and only if
  $\Prob(\ord_1 \le \Short < \ord_2) > 0$;
\item if $\dep > 0$ then $\Phi(\ord) \to -\infty$ at least linearly as
  $\ord \to \infty$; if $\dep = 0$ then $\Phi(\ord) \to -\hold < 0$;
\item in both cases $\ord^{0} = \inf\{\ord \ge 0: \Phi(\ord) \le 0\}$ is
  finite, $\Phi > 0$ on $[0,\ord^{0})$ and $\Phi \le 0$ on $(\ord^{0},\infty)$,
  so the crossing is unique, and $\ord \mapsto \Psi(\ord) = \int_0^{\ord}\Phi$
  is concave. If $\Phi(0) < 0$, which is the case when the projected
  pipeline already covers the responsible demand, then $\ord^0 = 0$ and
  $\arg\max_{\ord \ge 0}\Psi = \{0\}$. Otherwise $\Phi(\ord^0) \ge 0$ by
  left-continuity and $\arg\max\Psi = \{\ord \ge 0: \Phi(\ord^+) \le 0 \le
  \Phi(\ord)\}$, an interval whose left endpoint is $\ord^0$. On the integer
  lattice concavity puts the maximizer at $\lfloor\ord^0\rfloor$ or
  $\lceil\ord^0\rceil$ and the deployed rule takes the floor, which can be
  strictly suboptimal;
\item bisection on any bracket $[\mathrm{lo},\mathrm{hi}] \ni \ord^0$ that
  moves $\mathrm{lo}$ when $\Phi(\mathrm{mid}) > 0$ and $\mathrm{hi}$ otherwise
  converges to $\ord^0$, with error at most $2^{-(k+1)}(\mathrm{hi} -
  \mathrm{lo})$ after $k$ iterations;
\item setting $\dep = 0$ gives $\Phi(\ord) = \prof -
  (\prof+\hold)\Prob(\Short<\ord)$, whose crossing on $\R$ is a
  $\crat$-fractile $q$ of $\Short$; since $\ord^0$ is an infimum over
  $\ord \ge 0$ and $q$ is negative exactly when the projected pipeline covers
  the responsible demand, $\ord^0 = \pos{q}$.
\end{enumerate}
\end{proposition}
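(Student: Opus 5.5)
The plan is to establish the closed form in (i) first and then read every later item off that single representation; only (iv) and (vi) need care with atoms.

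\emph{Step 1, item (i) and the basic regularity.} I would use two identities. The first is $\Prob(\Short\ge\ord)=1-\Prob(\Short<\ord)$. The second is that $\pos{\ord-\Short}\ind{\Short<\ord}=\pos{\ord-\Short}$ pointwise, because the positive part already vanishes on $\{\Short\ge\ord\}$. Together they give the closed form. Next, $\ord\mapsto\Prob(\Short<\ord)$ is non-decreasing and left-continuous, since $\{\Short<\ord_n\}\uparrow\{\Short<\ord\}$ when $\ord_n\uparrow\ord$. The map $\ord\mapsto\E\pos{\ord-\Short}$ is finite because $\E|\Short|<\infty$, and it is convex, non-decreasing and $1$-Lipschitz, hence continuous. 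Subtracting non-decreasing terms from the constant $\prof$ gives monotonicity and left-continuity of $\Phi$.

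\emph{Step 2, items (ii) and (iii).} For $\ord\le\operatorname{essinf}\Short$, both $\Prob(\Short<\ord)$ and $\pos{\ord-\Short}$ vanish almost surely, so $\Phi\equiv\prof$ there. Now take $\operatorname{essinf}\Short<\ord_1<\ord_2$. The closed form gives
\[
\Phi(\ord_1)-\Phi(\ord_2)=(\prof+\hold)\Prob(\ord_1\le\Short<\ord_2)+\tfrac{\hold\dep}{\mu}\bigl(\E\pos{\ord_2-\Short}-\E\pos{\ord_1-\Short}\bigr).
\]
The last bracket is at least $(\ord_2-\ord_1)\Prob(\Short<\ord_1)>0$, because $\ord_1$ lies above the essential infimum. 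This proves strict decrease when $\dep>0$. When $\dep=0$ only the first term remains, which gives the stated equivalence. For (iii), the bound $\E\pos{\ord-\Short}\ge\ord-\E|\Short|$ gives linear divergence to $-\infty$ when $\dep>0$. When $\dep=0$, $\Prob(\Short<\ord)\to1$, so $\Phi\to\prof-(\prof+\hold)=-\hold$.

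\emph{Step 3, items (iv)--(vi).} By (iii) the set $\{\ord\ge0:\Phi(\ord)\le0\}$ is nonempty, so $\ord^0$ is finite. Monotonicity then gives $\Phi>0$ on $[0,\ord^0)$ and $\Phi\le0$ on $(\ord^0,\infty)$. Since $\Phi$ is non-increasing and locally bounded, $\Psi$ is concave. Its left and right derivatives at $\ord>0$ are $\Phi(\ord^-)=\Phi(\ord)$ (by left-continuity) and $\Phi(\ord^+)$. The maximizers of $\Psi$ are therefore exactly the points where $0$ lies in $[\Phi(\ord^+),\Phi(\ord)]$.

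At the boundary $\ord=0$ only the right derivative matters. If $\Phi(0)<0$, then $\Phi(0^+)\le\Phi(0)<0$, so $\Psi$ is strictly decreasing on $[0,\infty)$ and its argmax is $\{0\}$. Otherwise, if $\ord^0>0$, left-continuity passes $\Phi>0$ on $[0,\ord^0)$ to $\Phi(\ord^0)\ge0$. The lattice claim follows because a concave function restricted to the integers is maximized at the integer neighbours of a real maximizer. I would exhibit a small two-point $\Short$ to show that the floor can lose to the ceiling.

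For (v), I would maintain the invariant $\mathrm{lo}\le\ord^0\le\mathrm{hi}$. If $\Phi(\mathrm{mid})>0$, then $\mathrm{mid}\notin(\ord^0,\infty)$, so $\mathrm{mid}\le\ord^0$. If $\Phi(\mathrm{mid})\le0$, then $\mathrm{mid}\ge\ord^0$ by the definition of the infimum. The bracket halves at each iteration, and returning the midpoint gives the stated $2^{-(k+1)}$ error. For (vi), with $\dep=0$, a crossing point $q$ satisfies $\Phi(q)\ge0\ge\Phi(q^+)$. This reads $\Prob(\Short<q)\le\crat\le\Prob(\Short\le q)$, which is the $\crat$-fractile definition. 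Restricting to $\ord\ge0$ then clips the crossing at zero, giving $\ord^0=\pos{q}$.

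The main obstacle is the atom and tie bookkeeping in (iv) and (vi). Three points need care: left-continuity versus right limits at atoms of $\Short$, the flat stretches of $\Phi$ when $\dep=0$ (where the crossing is an interval and $q$ must be taken as its left end), and the boundary case $q<0$ versus $\Phi(0)=0$ exactly. I would handle these by always working with the pair $(\Phi(\ord),\Phi(\ord^+))$ and checking the boundary cases separately.
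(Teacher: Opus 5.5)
Your proposal is correct and follows essentially the same route as the paper's proof. The shared steps are: the closed form first, then the positive-mass lower bound on $\E\pos{\ord_2-\Short}-\E\pos{\ord_1-\Short}$ for strict decrease, the one-sided-derivative description of $\arg\max\Psi$ with left-continuity giving $\Phi(\ord^0)\ge 0$, the bracket invariant, and the one-sided fractile conditions. The differences are minor: in (iii) you bound $\E\pos{\ord-\Short}\ge\ord-\E|\Short|$ where the paper uses a point $a$ with $\Prob(\Short\le a)>0$, and you argue the integer-lattice claim, which the paper's proof never addresses.

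One caveat you share with the paper concerns (v). The step $\Phi(\mathrm{mid})\le 0\Rightarrow\mathrm{mid}\ge\ord^0$ relies on the infimum being taken over $\ord\ge 0$, so it needs $\mathrm{lo}\ge 0$, as the deployed bracket has. Otherwise it fails: with $\Short\equiv-10$, $\dep=0$ and bracket $[-4,4]$, two updates leave $[-4,-2]$, which no longer contains $\ord^0=0$.
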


\begin{proof}
(i) On $\{\Short \ge \ord\}$ we have $\pos{\ord-\Short} = 0$, so the indicator
in the second term of \eqref{eq:app-phi} is redundant on the $\dep$ part and
\[
  \E\!\left[\Big(1+\dep\tfrac{\pos{\ord-\Short}}{\mu}\Big)
    \ind{\Short<\ord}\right]
  = \Prob(\Short<\ord) + \frac{\dep}{\mu}\E\pos{\ord-\Short}.
\]
Substituting and using $\Prob(\Short \ge \ord) = 1 - \Prob(\Short<\ord)$ gives
the displayed form. The map $\ord \mapsto \Prob(\Short<\ord)$ is
non-decreasing and left-continuous, and $\ord \mapsto \E\pos{\ord-\Short} =
\int_{-\infty}^{\ord}\Prob(\Short<u)\,du$ is non-decreasing, convex and
continuous, so $\Phi$ is a nonpositive combination of non-decreasing functions
plus a constant, hence non-increasing, and left-continuous.

(ii) For $\ord \le \operatorname{essinf}\Short$ both $\Prob(\Short<\ord)$ and
$\E\pos{\ord-\Short}$ vanish, so $\Phi(\ord) = \prof$, and $\Phi$ is constant
there. This is the qualification the main text needs: below the essential
infimum of the shortfall no unit is ever wasted, so the marginal value is the
full margin and nothing in $\ord$ moves it. For $\operatorname{essinf}\Short <
\ord_1 < \ord_2$ the integral representation gives
\[
  \E\pos{\ord_2-\Short} - \E\pos{\ord_1-\Short}
  = \int_{\ord_1}^{\ord_2}\Prob(\Short<u)\,du
  \;\ge\; (\ord_2-\ord_1)\Prob(\Short<\ord_1) \;>\; 0 ,
\]
since $\Prob(\Short<u) \ge \Prob(\Short<\ord_1) > 0$ for $u > \ord_1 >
\operatorname{essinf}\Short$. With $\hold\dep/\mu > 0$ this makes
$\Phi(\ord_1) - \Phi(\ord_2) > 0$, so $\Phi$ is strictly decreasing there.
When $\dep = 0$ only the term $(\prof+\hold)\Prob(\Short<\ord)$ varies, and
$\Prob(\Short<\ord_2) - \Prob(\Short<\ord_1) = \Prob(\ord_1 \le \Short <
\ord_2)$, which gives the stated equivalence.

(iii) Suppose $\dep > 0$. Pick any $a$ with $\delta := \Prob(\Short \le a) > 0$,
for instance a median. Then $\E\pos{\ord-\Short} \ge (\ord-a)\delta$ for
$\ord \ge a$, so $\Phi(\ord) \le \prof - \tfrac{\hold\dep}{\mu}\delta(\ord-a)
\to -\infty$, and the divergence is at least linear in $\ord$. If instead
$\dep = 0$ then $\Phi(\ord) = \prof - (\prof+\hold)\Prob(\Short<\ord) \to
\prof - (\prof+\hold) = -\hold$ as $\ord \to \infty$. The limit is finite, so
divergence to $-\infty$ holds only for $\dep > 0$; what the argument in (iv)
needs is not divergence but that $\Phi$ is eventually negative, and
$-\hold < 0$ supplies that.

(iv) By (iii) there is $\bar{\ord}$ with $\Phi(\bar{\ord}) < 0$, so the set
$\{\ord \ge 0: \Phi(\ord) \le 0\}$ is nonempty and $\ord^0$ is finite. Because
$\Phi$ is non-increasing, $\Phi > 0$ on $[0,\ord^0)$ by definition of the
infimum and $\Phi \le 0$ on $(\ord^0,\infty)$; the crossing point is therefore
unique, and for $\ord^0 > 0$ left-continuity forces $\Phi(\ord^0) \ge 0$. If
$\Phi(0) \le 0$ the infimum is attained at $0$ and $\ord^0 = 0$; if the
inequality is strict then $\Phi \le \Phi(0) < 0$ on all of $[0,\infty)$, so
$\Psi$ is strictly decreasing, $\arg\max\Psi = \{0\}$, and the crossing set
below is empty, which is why that case is separated in the statement. Since
$\Phi$ is non-increasing,
$\Psi(\ord) = \int_0^{\ord}\Phi(u)\,du$, the total expected value of ordering
$\ord$ units when the $\ord$-th unit is worth $\Phi(\ord)$, is concave, and its
maximizers are exactly the points where its derivative changes sign, that is
the crossing set described. When $\Phi$ is strictly decreasing at $\ord^0$,
which by (ii) holds whenever $\dep>0$ and $\ord^0 > \operatorname{essinf}
\Short$, that set is the single point $\ord^0$ and coincides with the
strict-inequality variant $\inf\{\ord \ge 0: \Phi(\ord) < 0\}$. Otherwise $\Phi$
vanishes on an interval, every point of which maximizes $\Psi$, so the two
definitions select different but equally optimal points.

(v) The update maintains the invariant $\mathrm{lo} \le \ord^0 \le
\mathrm{hi}$: if $\Phi(\mathrm{mid}) > 0$ then $\mathrm{mid} < \ord^0$ and
raising $\mathrm{lo}$ to $\mathrm{mid}$ preserves it, and if
$\Phi(\mathrm{mid}) \le 0$ then $\ord^0 \le \mathrm{mid}$ and lowering
$\mathrm{hi}$ preserves it. The bracket width halves at each iteration, so the
midpoint after $k$ iterations is within $2^{-(k+1)}$ of the initial width of
$\ord^0$, which is linear convergence. The initial bracket used by the
numerical procedure is valid: it takes $\mathrm{lo} = 0$, and $\mathrm{hi} = 1.05
\max\{\max_k \Short^{(k)}, 1\}$ on the empirical distribution of the sampled
shortfalls, for which $\Phi(\ord) = -\hold(1 + \dep\,\overline{(\ord -
\Short)}/\mu) < 0$ once $\ord$ exceeds every sample.

(vi) With $\dep = 0$, the same one-sided derivative argument as in
Lemma~\ref{app:lem:newsvendor} characterizes the unconstrained maximizers by
$F_\Short(q^-) \le \crat \le F_\Short(q)$. Let $q$ be the smallest such
$\crat$-fractile. Restricting orders to the nonnegative domain and using the
infimum definition in (iv) gives $\ord^0 = \pos{q}$. At an atom, the two
one-sided inequalities apply; an equality such as
$\Prob(\Short<q) = \crat$ need not hold. The empirical strict-inequality
frequency used to evaluate $\Phi$ is consistent with this infimum definition.
\end{proof}

\subsection{Proposition~\ref{prop:overorder}: pathwise over-ordering under lost sales}
\label{app:sub:prop4}

This proposition derives an accounting identity on each fixed sample path,
valid for every demand distribution. The probabilistic statements then follow
from the pathwise equalities.

\paragraph{Set-up.} Fix a period $\per$ and a deterministic lead time
$\lead \ge 0$ with $\per + \lead \le \Tend$, without which
$\dem_{\per+\lead}$ does not exist and orders scheduled past the horizon sit in
$\intr_\per$ without ever arriving. At the decision epoch of period $\per$ the
policy holds
$\onh_\per$ on hand and $\intr_\per$ in transit, and
$\IP_\per = \onh_\per + \intr_\per$. Because the lead time is deterministic, an
order placed at $\per - j$ lands at $\per - j + \lead$, so the orders still
outstanding at the decision epoch of $\per$ are exactly those placed at
$\per-\lead, \dots, \per-1$ and they land in periods $\per, \dots,
\per+\lead-1$; the order placed at $\per$ lands at $\per+\lead$, and every
later order lands after $\per+\lead$. Let $\arr_j$ be arrivals in period $j$,
$\sold_j = \min(\dem_j, \onh_j + \arr_j)$ the units sold and $\lambda_j =
\pos{\dem_j - (\onh_j + \arr_j)}$ the demand lost, so that $\sold_j = \dem_j -
\lambda_j$ by definition of the minimum. Write $\Lost =
\sum_{j=\per}^{\per+\lead-1}\lambda_j$ for the lost sales realized inside the
protection interval, and let $\Iproj_{\per+\lead} = \onh_{\per+\lead}$ be the
on-hand stock at the decision epoch of $\per+\lead$, that is, before the order
placed at $\per$ lands.

\begin{proposition}[Lost-sales over-ordering of the textbook rule]
\label{prop:overorder}
\label{app:prop:overorder}
With the notation above, put $\Short^{\mathrm{naive}} =
\sum_{j=\per}^{\per+\lead}\dem_j - \IP_\per$ and $\Short^{\mathrm{proj}} =
\dem_{\per+\lead} - \Iproj_{\per+\lead}$. Then
\begin{equation}
  \Short^{\mathrm{naive}} \;=\; \Short^{\mathrm{proj}} + \Lost
  \qquad\text{pathwise.}
  \label{eq:app-overorder}
\end{equation}
Consequently $\Short^{\mathrm{naive}} \ge \Short^{\mathrm{proj}}$ pathwise,
hence $\Short^{\mathrm{naive}}$ dominates $\Short^{\mathrm{proj}}$
stochastically; $T(\Short^{\mathrm{naive}}) \ge T(\Short^{\mathrm{proj}})$ for
every functional $T$ with the monotonicity property that
$T(\Short + \Lost) \ge T(\Short)$ whenever $\Lost \ge 0$ almost surely, a class
that contains every quantile and the mean; the two shortfalls agree almost
surely if and only if $\Lost = 0$ almost surely, that is if and only if the
shelf never empties inside the protection interval; $\E\Short^{\mathrm{naive}}
- \E\Short^{\mathrm{proj}} = \E\Lost$ when both are integrable; and $\Lost$ is
non-decreasing in $\lead$ along a fixed arrival stream.
\end{proposition}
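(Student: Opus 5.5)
The identity \eqref{eq:app-overorder} will come from telescoping the lost-sales inventory balance over the protection interval $\per,\dots,\per+\lead-1$, with every later claim read off from it. From \eqref{eq:prelim-reward}, on-hand stock obeys $\onh_{j+1} = \onh_j + \arr_j - \sold_j$. Substituting $\sold_j = \dem_j - \lambda_j$ and summing over $j=\per,\dots,\per+\lead-1$ gives
\[
  \Iproj_{\per+\lead} = \onh_{\per+\lead}
  = \onh_\per + \sum_{j=\per}^{\per+\lead-1}\arr_j
    - \sum_{j=\per}^{\per+\lead-1}\dem_j + \Lost .
\]
For $\lead=0$ the sums are empty and $\Lost=0$.

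The step that needs care is identifying the arrivals. By the set-up, the orders outstanding at the decision epoch of $\per$ are those placed at $\per-\lead,\dots,\per-1$, and they land exactly in periods $\per,\dots,\per+\lead-1$. The order placed at $\per$ lands at $\per+\lead$, and later orders land later still. So no other order lands in the interval, and $\sum_{j=\per}^{\per+\lead-1}\arr_j = \intr_\per$. I need to confirm two timing points against the benchmark's conventions of \S\ref{sec:prelim-problem}:
\begin{itemize}[leftmargin=2.2em]
\item $\intr_\per$ is measured before period-$\per$ arrivals are booked, so an order landing at $\per$ is counted in $\intr_\per$.
\item $\per+\lead\le\Tend$ guarantees that no outstanding order is stranded past the horizon.
\end{itemize}
This bookkeeping is where I expect any real obstacle.

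With $\sum_{j=\per}^{\per+\lead-1}\arr_j = \intr_\per$, we get $\Iproj_{\per+\lead} = \IP_\per - \sum_{j=\per}^{\per+\lead-1}\dem_j + \Lost$. Subtracting this from $\dem_{\per+\lead}$ yields $\Short^{\mathrm{proj}} = \Short^{\mathrm{naive}} - \Lost$, which is \eqref{eq:app-overorder}.

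The consequences follow in order.
\begin{itemize}[leftmargin=2.2em]
\item \emph{Pathwise order.} Each $\lambda_j \ge 0$, so $\Lost\ge0$ and $\Short^{\mathrm{naive}}\ge\Short^{\mathrm{proj}}$ on every path. Stochastic dominance follows because $\{\Short^{\mathrm{proj}}>x\}\subseteq\{\Short^{\mathrm{naive}}>x\}$.
\item \emph{Monotone functionals.} The inequality for a functional $T$ is just its stated monotonicity applied with $\Short=\Short^{\mathrm{proj}}$. Quantiles and the mean belong to the class: the mean by monotonicity of expectation, and each quantile because pointwise domination gives $F_{\Short+\Lost}\le F_\Short$ pointwise, so every generalized inverse is ordered.
\item \emph{Equality case.} Almost-sure equality of the two shortfalls is equivalent to $\Lost=0$ a.s. A sum of nonnegative terms vanishes iff each $\lambda_j=0$. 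Also $\lambda_j=0$ iff $\dem_j\le\onh_j+\arr_j$, i.e.\ no demand goes unmet in period $j$; this is the meaning of ``the shelf never empties'' in the statement.
\item \emph{Means.} Taking expectations of the identity under integrability gives $\E\Short^{\mathrm{naive}}-\E\Short^{\mathrm{proj}}=\E\Lost$.
\item \emph{Monotonicity in $\lead$.} For a fixed arrival stream and fixed $\per$, the per-period losses $\lambda_j$ depend only on the realized dynamics and not on $\lead$. Increasing $\lead$ therefore only appends nonnegative terms $\lambda_{\per+\lead}$ to $\Lost$, so $\Lost$ is non-decreasing in $\lead$. This is how I will read ``along a fixed arrival stream.''
\end{itemize}
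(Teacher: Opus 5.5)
Your proposal is correct and follows the paper's proof essentially step for step. It telescopes $\onh_{j+1}=\onh_j+\arr_j-\sold_j$ over $\per,\dots,\per+\lead-1$, identifies $\sum_j \arr_j=\intr_\per$ from the deterministic-lead timing, substitutes $\sold_j=\dem_j-\lambda_j$, and then reads every consequence off the pathwise identity using $\Lost\ge 0$. The two bookkeeping points you flag are exactly the ones the set-up and the hypothesis $\per+\lead\le\Tend$ settle. Your reading of the monotonicity in $\lead$ as a statement about window length on a fixed trajectory also matches the paper, which adds only the caveat that in closed loop a change in $\lead$ changes the orders and hence the $\lambda_j$ themselves.
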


\begin{proof}
The on-hand recursion is $\onh_{j+1} = \onh_j + \arr_j - \sold_j$, which is an
identity because holding is charged on the ending inventory and that same
quantity is carried forward. Summing it over $j = \per, \dots, \per+\lead-1$
telescopes to
\[
  \onh_{\per+\lead}
  = \onh_\per + \sum_{j=\per}^{\per+\lead-1}\arr_j
    - \sum_{j=\per}^{\per+\lead-1}\sold_j .
\]
Every unit counted in $\intr_\per$ arrives in one of the periods
$\per,\dots,\per+\lead-1$, and no other unit does, because the order placed at
$\per$ or later arrives at $\per+\lead$ or later; therefore
$\sum_{j=\per}^{\per+\lead-1}\arr_j = \intr_\per$ exactly, and
\[
  \Iproj_{\per+\lead} = \onh_{\per+\lead}
  = \IP_\per - \sum_{j=\per}^{\per+\lead-1}\sold_j .
\]
Substituting $\sold_j = \dem_j - \lambda_j$ and collecting,
\[
  \Iproj_{\per+\lead}
  = \IP_\per - \sum_{j=\per}^{\per+\lead-1}\dem_j + \Lost .
\]
Hence
\[
  \Short^{\mathrm{proj}}
  = \dem_{\per+\lead} - \Iproj_{\per+\lead}
  = \dem_{\per+\lead} + \sum_{j=\per}^{\per+\lead-1}\dem_j - \IP_\per - \Lost
  = \Short^{\mathrm{naive}} - \Lost ,
\]
which is \eqref{eq:app-overorder}. Every step is an identity between realized
numbers: the recursion, the fact that the in-transit total lands inside the
window, the decomposition of sales into demand minus lost demand, and the
cancellation. These operations preserve exact equality on the fixed path.

The consequences follow. Since $\lambda_j \ge 0$ we have $\Lost \ge 0$
pathwise, so $\Short^{\mathrm{naive}} \ge \Short^{\mathrm{proj}}$ pathwise and
$\Prob(\Short^{\mathrm{naive}} > x) \ge \Prob(\Short^{\mathrm{proj}} > x)$ for
every $x$, which is first-order stochastic dominance. If $T$ has the stated
monotonicity then $T(\Short^{\mathrm{naive}}) = T(\Short^{\mathrm{proj}} +
\Lost) \ge T(\Short^{\mathrm{proj}})$; the mean has it by linearity and
monotonicity of the integral, and the $\crat$-quantile has it because
$\Short^{\mathrm{proj}} \le \Short^{\mathrm{naive}}$ pathwise implies
$F_{\Short^{\mathrm{naive}}} \le F_{\Short^{\mathrm{proj}}}$ pointwise and
hence the corresponding generalized inverses are ordered. Equality almost
surely holds if and only if $\Lost = 0$ almost surely, and $\Lost = 0$ if and
only if $\lambda_j = 0$ for every $j$ in the interval, which is exactly the
statement that demand is fully served, so the shelf never empties inside
$\per,\dots,\per+\lead-1$. Taking expectations in
\eqref{eq:app-overorder} gives $\E\Short^{\mathrm{naive}} -
\E\Short^{\mathrm{proj}} = \E\Lost$. Finally $\Lost = \Lost(\lead) =
\sum_{j=\per}^{\per+\lead-1}\lambda_j$ is a sum of nonnegative terms, so
$\Lost(\lead+1) = \Lost(\lead) + \lambda_{\per+\lead} \ge \Lost(\lead)$: the
quantity is non-decreasing in the length of the interval along a fixed
realized trajectory, that is, holding the arrivals and the demands fixed and
lengthening the window. This monotonicity concerns the window length on a fixed
trajectory as its window length $\lead$ increases. A change in $\lead$ in the closed-loop system also changes the
policy, the orders and hence the $\lambda_j$ themselves.
\end{proof}

\begin{remark}[order-quantity gaps and $\E\Lost$]
\label{app:rem:not-mean}
Proposition~\ref{app:prop:overorder} says the naive shortfall exceeds the
projected one by $\Lost$ pathwise and by $\E\Lost$ in mean. The order-quantity
gap depends on the quantile or crossing functional used in
\S\ref{sec:determ}; a gap of $\E\Lost$ applies to the means. For random
variables with $\Short^{\mathrm{naive}} = \Short^{\mathrm{proj}} + \Lost$ and
$\Lost \ge 0$, all that holds in general is
$0 \le q_\crat(\Short^{\mathrm{naive}}) - q_\crat(\Short^{\mathrm{proj}}) \le
\operatorname{esssup}\Lost$. Both bounds can be attained: with
$\Short^{\mathrm{proj}} \equiv 0$ and $\Lost = 2\cdot\mathrm{Bernoulli}(1/2)$
one has $\E\Lost = 1$ while the medians of the two shortfalls are both $0$, so
the difference of the targets is $0$ while the mean gap is $\E\Lost$.
Order quantities retain the unconditional ordering; the gap size depends on
the functional.
\end{remark}

\subsection{Proposition~\ref{prop:supply}: exact identification of destroyed orders}
\label{app:sub:prop5}

\paragraph{The accounting.} The benchmark's runner maintains a schedule
$\mathrm{sched}[\cdot]$ of units due to land in each future period, and reports
to the policy the in-transit total $\intr_\per = \sum_{u \ge \per}
\mathrm{sched}_\per[u]$, evaluated at the decision epoch of $\per$, that is
before anything in period $\per$ has happened. Within period $\per$ the
evaluator performs, in this order: (a) the policy places $\ord_\per$; (b) if
$\lead_\per < \infty$, $\mathrm{sched}[\per+\lead_\per]$ is incremented by
$\ord_\per$, and if $\lead_\per = \infty$ nothing is scheduled and the order
is destroyed; (c) $\arr_\per := \mathrm{sched}[\per]$ is delivered to on-hand
stock and $\mathrm{sched}[\per]$ is set to zero; (d) sales and holding are
booked, which do not touch the schedule. Orders scheduled to land beyond the
end of the horizon are never delivered but remain in the schedule and hence in
$\intr$.

\begin{proposition}[Exact supply identification]
\label{prop:supply}
\label{app:prop:supply}
Under the accounting above, for every period $\per$,
\begin{equation}
  \ord_\per \ind{\lead_\per < \infty}
  \;=\; \intr_{\per+1} - \intr_\per + \arr_\per .
  \label{eq:app-supply}
\end{equation}
Hence whenever $\ord_\per > 0$ the indicator $\ind{\lead_\per < \infty}$ is a
function of quantities the policy has observed by the decision epoch of
$\per+1$, and is identified exactly, with no inference. For $\per = \Tend$ there
is no such epoch and the last order's fate is never available to the policy,
though the identity itself still holds. Under an accounting in which an order
with $\lead_\per = \infty$ enters the reported in-transit total and is never
removed from it, the same combination equals $\ord_\per$ identically and the
indicator is not identified at all.
\end{proposition}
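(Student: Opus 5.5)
The identity \eqref{eq:app-supply} is a bookkeeping statement about the schedule. I would prove it by tracking how the reported in-transit total changes between two consecutive decision epochs, using only steps (a)--(d) of the accounting. The plan has three parts: an exact update formula for $\intr$, the identification consequence, and the contrasting accounting.

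\emph{The update formula.} By definition $\intr_\per = \sum_{u \ge \per}\mathrm{sched}_\per[u]$ is read before step (a) of period $\per$, and $\intr_{\per+1} = \sum_{u \ge \per+1}\mathrm{sched}_{\per+1}[u]$ is read after steps (a)--(d). I would follow the schedule through period $\per$ one step at a time:
\begin{itemize}
\item Step (b) adds $\ord_\per$ to the entry $\per + \lead_\per \ge \per+1$ when $\lead_\per < \infty$, and adds nothing otherwise. This contributes $\ord_\per\ind{\lead_\per<\infty}$ to the sum over $u \ge \per$.
\item Step (c) removes the entire entry $\mathrm{sched}[\per] = \arr_\per$ and sets it to zero.
\item Step (d) leaves the schedule unchanged.
\end{itemize}
One care point: at $\lead = 0$ the order is scheduled at $u = \per$ itself. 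In that case step (b) increments $\mathrm{sched}[\per]$ before step (c), so $\arr_\per$ already contains that order and it is removed in the same period. The net change is still exactly $\ord_\per\ind{\lead_\per<\infty} - \arr_\per$. Indices $u \ge \per+1$ other than $\per+\lead_\per$ are untouched. Also, after (c) the entry at $u = \per$ is zero, so the sum over $u \ge \per$ after period $\per$ equals the sum over $u \ge \per+1$, which is $\intr_{\per+1}$. Hence
\[
\intr_{\per+1} = \intr_\per + \ord_\per\ind{\lead_\per<\infty} - \arr_\per,
\]
and rearranging gives \eqref{eq:app-supply}. Entries beyond the horizon are never cleared, but they stay in both sums and so cancel. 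The identity therefore holds at every $\per \le \Tend$, and at $\per = \Tend$ it holds as an identity about the evaluator's state.

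\emph{Identification.} Next I would list what the policy has observed at the epoch of $\per+1$. It has $\intr_\per$ and $\intr_{\per+1}$ directly. It has $\ord_\per$ because it placed that order, and $\arr_\per$ because the previous arrival is reported in the history $\Hist_{\per+1}$ (\S\ref{sec:prelim-problem}). The right-hand side of \eqref{eq:app-supply} is therefore $\Hist_{\per+1}$-measurable. When $\ord_\per > 0$, dividing by $\ord_\per$ yields the indicator exactly. At $\per=\Tend$ there is no epoch $\Tend+1$, so the last order's fate is never observed. That claim is simply the absence of an observation time, not a failure of the identity.

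\emph{The contrasting accounting.} Suppose instead that a destroyed order is added to the reported total and never removed, for example as a phantom entry at $u=\infty$. Repeating the same step-by-step bookkeeping then gives a net change of $\ord_\per - \arr_\per$ regardless of $\lead_\per$. The combination $\intr_{\per+1}-\intr_\per+\arr_\per$ is then identically $\ord_\per$. Since $\ord_\per$ is already known to the policy, the combination carries no information about the indicator.

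The main obstacle is getting the within-period ordering right: the timing of the reading of $\intr$, and the $\lead=0$ case in which an order is scheduled and delivered in the same period. I will handle this explicitly by checking that step (b) precedes step (c).
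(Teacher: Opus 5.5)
Your proposal is correct and follows the paper's proof essentially step for step. You track the schedule through steps (b)--(d) to get $\intr_{\per+1}=\intr_\per+\ord_\per\ind{\lead_\per<\infty}-\arr_\per$, read off $\Hist_{\per+1}$-measurability and divide by $\ord_\per>0$, and show that the combination is identically $\ord_\per$ under the retaining accounting. Two small points: your first bullet's claim $\per+\lead_\per\ge\per+1$ should read $\ge\per$, which your own $\lead=0$ care point already repairs; and the paper states non-identification slightly more strongly, noting that the whole observed triple $(\intr'_\per,\intr'_{\per+1},\arr_\per)$ is invariant to the indicator, not just the combination.
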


\begin{proof}
Write $S_\per(\cdot)$ for the schedule at the decision epoch of $\per$, so
$\intr_\per = \sum_{u \ge \per}S_\per[u]$. Step~(b) adds $\ord_\per
\ind{\lead_\per<\infty}$ to the entry at index $\per+\lead_\per \ge \per$,
which is in the range of the sum, including the case $\lead_\per = 0$ where the
index is $\per$ itself, and including the case $\per + \lead_\per > \Tend$
where the entry is never delivered but is still counted. Therefore after (b)
\[
  \sum_{u \ge \per} S[u] = \intr_\per + \ord_\per\ind{\lead_\per<\infty} .
\]
Step~(c) reads $\arr_\per = S[\per]$ at that moment and then zeroes that entry,
which removes exactly $\arr_\per$ from the sum and leaves the entries at
indices $u \ge \per+1$ untouched. Step~(d) does not modify the schedule, so
the schedule at the decision epoch of $\per+1$ satisfies
\[
  \intr_{\per+1} = \sum_{u \ge \per+1}S_{\per+1}[u]
  = \intr_\per + \ord_\per\ind{\lead_\per<\infty} - \arr_\per ,
\]
which rearranges to \eqref{eq:app-supply}.

For the identification claim, the policy knows $\ord_\per$, which it placed,
and observes $\intr_\per$ at the decision epoch of $\per$ and both
$\intr_{\per+1}$ and $\arr_\per$ at the decision epoch of $\per+1$, since the
previous period's arrivals are part of the reported state. So the right-hand
side of \eqref{eq:app-supply} is measurable with respect to the information
$\Hist_{\per+1}$ available at that epoch, and when $\ord_\per > 0$ the
indicator is recovered as the ratio of the right-hand side to $\ord_\per$. When
$\ord_\per = 0$ the identity reads $0 = 0$ and the indicator is not recovered,
which is immaterial: no order was placed, so no trial took place.

The deployed rule tests the ratio against $\tfrac12$ instead of against $1$. The
two tests agree exactly here, because orders are floored before being placed,
so the delivered amount is either
the full order or zero and no intermediate value can arise. The threshold is a
guard against a fractional order reaching the comparison, not a relaxation of the
identity.

For the retaining accounting, suppose instead that step~(b) adds $\ord_\per$
to a pending list regardless of $\lead_\per$ and that only finite-lead orders
are ever removed, on delivery. Then the reported total $\intr'$ obeys
$\intr'_{\per+1} = \intr'_\per + \ord_\per - \arr_\per$ for every $\per$, so
$\intr'_{\per+1} - \intr'_\per + \arr_\per = \ord_\per$ identically. The map
from the unobserved indicator to the observed triple $(\intr'_\per,
\intr'_{\per+1}, \arr_\per)$ is constant in the indicator, so no function of
the observables can distinguish a surviving order from a destroyed one at
$\per+1$, so the parameter is unidentified, not merely hard to estimate. That
accounting also leaves destroyed orders in $\intr'$ forever, so a
rule of the form $\ord = \pos{S - \IP}$ is throttled by stock that will never
arrive, its subtrahend growing with every destroyed order until the rule falls
silent.
\end{proof}

\begin{corollary}[the survival probability is a posterior mean over observed trials]
\label{app:cor:supply}
Let $\mathcal{T} = \{\per: \ord_\per > 0\}$ and $B_\per =
\ind{\lead_\per<\infty}$ for $\per \in \mathcal{T}$. By
Proposition~\ref{app:prop:supply} each $B_\per$ is $\Hist_{\per+1}$-measurable.
Modeling $\{B_\per\}_{\per\in\mathcal{T}}$ as independent
$\mathrm{Bernoulli}(\psurv)$ with a $\mathrm{Beta}(1,1)$ prior on $\psurv$, the
posterior after the trials completed by period $\per$ is
$\mathrm{Beta}(1 + n^{\mathrm{surv}}_\per,\, 1 + n^{\mathrm{ord}}_\per -
n^{\mathrm{surv}}_\per)$ with posterior mean
$(1+n^{\mathrm{surv}}_\per)/(2+n^{\mathrm{ord}}_\per)$, where
$n^{\mathrm{ord}}_\per$ counts the completed trials and $n^{\mathrm{surv}}_\per$
the survivals among them.
\end{corollary}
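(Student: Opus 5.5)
This is the standard Beta–Bernoulli conjugacy computation. The only point needing care is that the index set $\mathcal{T}$ and the count of completed trials are themselves chosen adaptively by the policy.

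Fix the decision epoch of $\per$. The completed trials are the periods $s \in \mathcal{T}$ with $s+1 \le \per$, that is, $s \le \per-1$ and $\ord_s > 0$. By Proposition~\ref{app:prop:supply}, each such $B_s$ is $\Hist_{s+1}$-measurable, hence $\Hist_\per$-measurable. The counts $n^{\mathrm{ord}}_\per = |\{s \in \mathcal{T}: s \le \per-1\}|$ and $n^{\mathrm{surv}}_\per = \sum_{s} B_s$ over the same set are therefore observed quantities. The first step is to note that the orders $\ord_s$ are functions of $\Hist_s$ together with the policy's own randomization, which is independent of $\psurv$ given the history.

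The second step writes the joint likelihood of the observed history as a function of $\psurv$. Factor it sequentially over $s = 1,\dots,\per-1$. For each $s$, the conditional law of the period-$s$ order given the past does not involve $\psurv$. Under the model, the conditional law of $B_s$ given $\Hist_s$ and $\ord_s > 0$ is $\mathrm{Bernoulli}(\psurv)$. The demand and arrival terms of the factorization are either free of $\psurv$ or, for arrivals, determined by the $B$'s and the finite-lead-time draws. I would treat those draws as independent of $\psurv$, which is the modeling assumption implicit in the statement. Periods with $\ord_s = 0$ contribute no $\psurv$ factor, as in the remark after Proposition~\ref{app:prop:supply}.

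The likelihood is therefore proportional to $\psurv^{n^{\mathrm{surv}}_\per}(1-\psurv)^{n^{\mathrm{ord}}_\per - n^{\mathrm{surv}}_\per}$. This is the usual likelihood-principle fact that adaptive selection of which trials occur does not alter the Bernoulli kernel.

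The third step multiplies by the $\mathrm{Beta}(1,1)$ density, which is constant on $[0,1]$, and normalizes. This gives $\mathrm{Beta}(1+n^{\mathrm{surv}}_\per,\,1+n^{\mathrm{ord}}_\per-n^{\mathrm{surv}}_\per)$. The mean of $\mathrm{Beta}(a,b)$ is $a/(a+b)$, which yields $(1+n^{\mathrm{surv}}_\per)/(2+n^{\mathrm{ord}}_\per)$.

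The main obstacle is the second step: justifying that the $\psurv$-free factors really are free of $\psurv$. The cleanest route is to state explicitly that, under the model, the policy's decisions and all non-survival randomness are conditionally independent of $\psurv$ given the history. Once that is granted, the sequential factorization is routine.
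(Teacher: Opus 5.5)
Your proposal is correct and follows the paper's own argument. Each $B_s$ is measurable by Proposition~\ref{app:prop:supply}, and $\{\ord_s>0\}$ is decided before $\lead_s$ is realized, so the trials are selected by a predictable rule and the Bernoulli likelihood still factorizes; Beta--Bernoulli conjugacy then gives the posterior and its mean. You factor the whole observed history and state explicitly that the policy's decision rule and the finite-lead-time draws do not depend on $\psurv$, which is a more detailed version of the paper's one-line factorization claim rather than a different route.
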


\begin{proof}
Measurability is Proposition~\ref{app:prop:supply}. The trial set is chosen by
the policy, so the factorization needs one more step: the order is placed before
$\lead_\per$ is realized, so $\{\per \in \mathcal{T}\} = \{\ord_\per > 0\}$ is
$\Hist_\per$-measurable and the trials are selected by a predictable rule, under
which the likelihood of $(B_\per)_{\per\in\mathcal{T}}$ still factorizes.
Beta-Bernoulli conjugacy then gives the posterior and its mean. The content of
the corollary lies in the observable input: the trials are \emph{observed}, so
the estimator applies conjugacy directly to the recovered survival data.
Under the retaining
accounting of Proposition~\ref{app:prop:supply} the $B_\per$ are not
observable and no such estimator exists.
\end{proof}

\subsection{Proposition~\ref{prop:capped}: containment and conditional numerical separation}
\label{app:sub:prop6}

\begin{proposition}[Containment and conditional numerical separation]
\label{prop:capped}
\label{app:prop:capped}
Consider decision rules mapping a state with local forecast rate $\rate > 0$
and inventory position $\IP \ge 0$ to an order, and let
\begin{align*}
  \Fpos &= \{\ord = \pos{\lev\rate - \IP} : \lev \ge 0\},
  \qquad
  \Frate = \{\ord = \capm\rate : \capm \ge 0\},\\
  \Fcap &= \{\ord = \min(\pos{\lev\rate-\IP},\, \capm\rate)
            : \lev \ge 0,\ \capm \ge 0\}.
\end{align*}
Then $\Fpos \subseteq \Fcap$ exactly, by the choice $\capm = \lev$. For every
fixed state, the capped rule with parameters $(\lev,\capm)$ equals the rate
rule with parameter $\capm$ as soon as $\lev \ge \capm + \IP/\rate$, and in
particular $\min(\pos{\lev\rate-\IP},\capm\rate) \to \capm\rate$ as $\lev \to
\infty$. If the inventory positions reachable on an instance are bounded and
the local rate is bounded below by $\rate_{\min} > 0$, which holds on any
finite-horizon instance with bounded demand and bounded orders, then
$\Frate \subseteq \Fcap$ on that instance for finite $\lev$. In all cases
$\Fpos \cup \Frate \subseteq \overline{\Fcap}$, and on every instance with
$\textstyle\sum_\per |\rew_\per|$ integrable
\[
  \sup_{\pi \in \Fcap} \E\!\left[\textstyle\sum_\per \rew_\per(\pi)\right]
  \;\ge\;
  \sup_{\pi \in \Fpos \cup \Frate}
    \E\!\left[\textstyle\sum_\per \rew_\per(\pi)\right].
\]
For a constructed lost-sales instance with $\Prob(\lead = \infty) > 0$, the
numerical certificate in Appendix~\ref{app:numeric} establishes strict reward
separation conditional on the validity of its estimated slope bounds. The
containment claims are proved below; the construction and certificate are
described under \emph{Numerical separation}.
\end{proposition}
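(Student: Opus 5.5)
The containment claims are elementary identities on a fixed state, so I will argue pointwise in $(\rate,\IP)$ and only then pass to policies and expected rewards. For $\Fpos\subseteq\Fcap$, fix $\lev\ge0$ and set $\capm=\lev$. Since $\IP\ge0$, we have $\pos{\lev\rate-\IP}\le\lev\rate=\capm\rate$. The minimum therefore returns the position term, and the capped rule coincides with the position rule at every state. For the rate limit, note that $\lev\ge\capm+\IP/\rate$ is equivalent to $\lev\rate-\IP\ge\capm\rate\ge0$. In that case the positive part is inactive and the minimum equals $\capm\rate$. Because $\IP/\rate$ is finite at each fixed state, this threshold is eventually met, which gives $\min(\pos{\lev\rate-\IP},\capm\rate)\to\capm\rate$ as $\lev\to\infty$. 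The convergence is in fact eventual equality.

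For the uniform statement, suppose the reachable positions satisfy $\IP\le\bar{I}$ and the rates satisfy $\rate\ge\rate_{\min}>0$. Then the single finite choice $\lev=\capm+\bar{I}/\rate_{\min}$ dominates every state-dependent threshold, so the capped rule equals the rate rule on every reachable state. There is a subtlety: reachability depends on the policy being run. I will handle it by noting that the two closed-loop systems coincide by induction on $\per$. If the rules agree on every state the rate policy visits, the trajectories are identical, and so are the rewards. The bound $\bar{I}$ is then taken over the rate policy's own reachable set. With bounded demand and bounded orders over a finite horizon, on-hand stock plus pipeline is at most $\Tend$ times the order bound. This is finite, as is the lower bound on the rate, which is assumed in the statement.

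For the closure claim, $\Fpos\subseteq\Fcap$ and the pointwise limit give $\Fpos\cup\Frate\subseteq\overline{\Fcap}$ in the topology of pointwise convergence of rules. The reward inequality follows from the trajectory argument. By containment, every $\pi\in\Fpos$ is literally a member of $\Fcap$. For every $\pi\in\Frate$, some finite-$\lev$ member of $\Fcap$ induces the identical closed-loop trajectory, and hence the same $\E\sum_\per\rew_\per$; integrability ensures these expectations are defined. Taking suprema gives the inequality.

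The main obstacle is the case without the boundedness hypothesis, where $\Frate$ is only in the closure. There I would apply dominated convergence. Along any fixed sample path with finitely many periods, the capped policies with $\lev_n\to\infty$ agree with the rate policy from some finite $n$ onward, because all visited states are finitely many and each has a finite threshold. So $\sum_\per\rew_\per(\pi_{\lev_n})\to\sum_\per\rew_\per(\pi_{\mathrm{rate}})$ pathwise. To pass the limit through the expectation, I need a uniform integrable envelope, and I would try to build one from the integrability of $\sum_\per|\rew_\per|$. Once the expectations converge, the supremum over $\Fcap$ is at least the supremum over $\Fpos\cup\Frate$. The strict separation claim is not proved here; it is deferred to the numerical certificate in Appendix~\ref{app:numeric}, as the statement says.
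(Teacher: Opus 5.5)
Your proposal matches the paper's proof step for step: exact containment via $\capm=\lev$; eventual equality once $\lev\ge\capm+\IP/\rate$; a single finite $\lev$ giving identical trajectories under the boundedness hypothesis; and pathwise agreement plus dominated convergence otherwise. Your induction over the rate policy's own visited states is a slightly more careful version of the paper's appeal to ``reachable'' states. The paper also leaves the integrable envelope implicit, so you can close that step with Fatou instead: the local rate is driven by the uncensored demand history and so is common to both policies, each capped order is at most the rate order $\capm\rate$, and monotonicity of $\onhend=\pos{\onh+\arr-\dem}$ then bounds the capped policy's holding cost pathwise by the rate policy's, which is integrable given integrable demand and $\sum_\per|\rew_\per|$ integrable under the rate policy.
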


\begin{proof}
For the first claim, take $\capm = \lev$. Since $\IP \ge 0$ we have
$\pos{\lev\rate-\IP} \le \lev\rate = \capm\rate$, so the minimum is attained by
the first argument and the capped rule reproduces the base-stock rule state by
state, not merely in a limit. For the second, if $\lev\rate - \IP \ge
\capm\rate$, equivalently $\lev \ge \capm + \IP/\rate$, then
$\pos{\lev\rate-\IP} \ge \capm\rate$ and the minimum is $\capm\rate$; for fixed
$\rate>0$ and finite $\IP$ the left-hand side $\pos{\lev\rate-\IP}$ increases
without bound as $\lev \to \infty$, so the threshold is eventually met and the
pointwise limit is $\capm\rate$. If the reachable positions satisfy $\IP \le
B < \infty$ and $\rate \ge \rate_{\min} > 0$, then a single finite $\lev \ge
\capm + B/\rate_{\min}$ makes the two rules agree at every reachable state, so
they generate the same trajectory on every sample path and the same expected
reward. The inequality between suprema then holds term by term: every member of
$\Fpos$ is a member of $\Fcap$, and every member of $\Frate$ is either a member
of $\Fcap$ under the boundedness condition or a pointwise limit of members of
$\Fcap$ whose values agree with it for all large $\lev$. That agreement is
pathwise, the threshold $\lev$ depending on the path through its bound on $\IP$,
so dominated convergence under the integrability hypothesis carries it from
trajectories to expected rewards.
\end{proof}

\begin{corollary}
\label{app:cor:capped}
On every instance the best member of $\Fcap$ is at least as good as the best
member of $\Fpos \cup \Frate$. The family the search of \S\ref{sec:stoch} ranges
over therefore has a best member at least as good as the first pass's. This
comparison concerns the best members of the families. The shipped search
re-solves each period and returns a softmax-weighted blend of candidate actions,
with value differences scaled by the median standard error; its realized
performance is evaluated empirically.
\end{corollary}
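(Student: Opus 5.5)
The corollary follows directly from the last display of Proposition~\ref{app:prop:capped}. Fix an instance on which $\sum_\per|\rew_\per|$ is integrable for every rule in the three families. This holds on every benchmark instance, since the horizon is finite, demand is bounded, and orders of the form $\min\{\pos{\lev\rate-\IP},\capm\rate\}$ are bounded along any path once $\rate$ is bounded. Proposition~\ref{app:prop:capped} then gives
\[
\sup_{\pi\in\Fcap}\E\!\left[\textstyle\sum_\per\rew_\per(\pi)\right]\ \ge\ \sup_{\pi\in\Fpos\cup\Frate}\E\!\left[\textstyle\sum_\per\rew_\per(\pi)\right],
\]
which is the first sentence once ``best member'' is read as the supremum. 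If the supremum over $\Frate$ is approached only as $\lev\to\infty$, the boundedness clause of the proposition supplies a finite $\lev\ge\capm+B/\rate_{\min}$. That member of $\Fcap$ reproduces the rate rule on every reachable state, so whenever a best member of $\Fpos\cup\Frate$ is attained, an equally good member of $\Fcap$ is attained.

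For the second sentence, I would identify the first research pass's family with $\Fpos\cup\Frate$, with its local-rate normalization included. Its rules are either position rules $\pos{\lev\rate-\IP}$ or pure rate rules $\capm\rate$, so the first sentence applies directly. This identification is the only point needing care. I would check it against the description in Appendix~\ref{app:method} of the first-pass policy and state it explicitly as the hypothesis under which the claim is made. If the first pass also searched other parameters, such as a rate-estimation window, the argument still goes through as long as those parameters are shared by both families and held fixed in the comparison. The supremum over $\Fcap$ is then taken at each fixed value of them.

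The last three sentences make no mathematical claim and need no proof. I would only point out why the inequality does not transfer to the deployed controller. The comparison is between suprema over stationary rule families evaluated on the full horizon. The shipped search instead re-solves each period on sampled paths and returns a softmax blend of candidate actions, and such a blend need not be a member of either family. So no ordering between the deployed policies follows from the inequality. The main obstacle is therefore not a technical step but getting the scope of the statement right, which the corollary's final sentence already records.
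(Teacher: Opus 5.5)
Your proposal is correct and follows the paper's route. The paper proves the corollary as immediate from the final inequality of Proposition~\ref{app:prop:capped}, which is what you invoke, and your identification of the first pass's family with $\Fpos\cup\Frate$ (with local-rate normalization) matches the coarse pass in Appendix~\ref{app:method-stoch}; your remarks on integrability, attainment under the boundedness clause, and why the inequality does not carry over to the softmax-blended controller are accurate elaborations of what the paper leaves implicit.
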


\begin{proof}
Immediate from Proposition~\ref{app:prop:capped}.
\end{proof}

\paragraph{Numerical separation.} A constructed instance examines the reward
advantage of the larger family. The witness is
the constructed instance of \S\ref{app:numeric}: constant demand $10$, horizon
$5$, $\prof = \hold = 1$, empty initial inventory, lead times drawn independently
and uniformly from $\{1,2,3,\infty\}$ per period, and the local rate held at the
mean demand. The expectation over lead-time sequences is a finite sum over its
$4^5 = 1024$ atoms, each evaluated in closed form, so the expected rewards at
the three grid optima are exact up to floating-point arithmetic, and the capped
grid optimum strictly beats both pure grid optima. Appendix~\ref{app:numeric} gives the
construction, the enumeration method and the numbers for six configurations, and
bounds the two pure families from above on the witness instance to account for
grid resolution. Those bounds use slopes measured on a grid. Strict reward
separation therefore has a numerical certificate for this construction,
conditional on the validity of the slope bounds estimated on the grid.

\paragraph{The mechanism.} The construction isolates three ingredients of
the arrival-lumping mechanism relevant to the benchmark. Orders are destroyed with
positive probability, so a rule that reads the inventory position sees a gap
open that no arrival will close. A pure base-stock member responds by
re-ordering the entire gap, because that is all the rule can do: its action is
a function of the position error alone. When that catch-up order and its
successor both survive and their lead times differ by exactly the gap between
their placement dates, they land in the same period, and the shelf receives in
one period what was meant to cover several. Holding is charged on ending
inventory every period until the lump burns off at the demand rate, and at
$\crat = 0.5$ one extra period of holding cancels a unit's entire margin, so the
lump is expensive in proportion to its size. A pure rate member bounds each
period's order by $\capm\rate$ independently of inventory position, avoiding
position-driven catch-up bursts while continuing to order after a lump arrives
for another reason. The capped
base-stock rule is the smallest family we know of with both properties, since it is a
function of the position gap and of the rate at once and its action is the
smaller of the two, so it tracks the level while refusing to place a catch-up
lump.

The control configuration isolates the effect of lead-time support. Restrict
the lead-time support to $\{1,\infty\}$ and no two surviving orders can ever land
together, so the lumping mechanism is unavailable by construction; the separation
gap there is exactly zero. Where two distinct finite lead times are available the gap opens in
four of the five configurations tested and closes in the fifth, at $\crat = 0.95$
with $\Tend = 5$, where holding is cheap enough that suppressing the catch-up lump
costs more than it saves. One further period restores it, so that closure belongs
to the horizon at the same $\crat$. These controlled comparisons support the
mechanism within the tested configurations.

\subsection{Proposition~\ref{prop:shape-bound}: log-loss bounds for the shape layer}
\label{app:sub:prop7}

\paragraph{Set-up.} Index the shape variants by $g \in \{1,\dots,\nvar\}$, so
that every sum over $g$ below has exactly $\nvar$ terms and the $\log\nvar$ in
the statement is exact. The flat variant $\sexp = 0$ is one of these $\nvar$
and $g = 0$ is an alias for it, not an extra member. Assume at every period
some variant assigns the observation positive density, $\min_g \lss_{\per,g} <
\infty$, without which the mixture is undefined.
Each variant emits, at period $\per$, a scoring density $p_{\per,g}$ for the
\emph{raw} observation $\dem_\per$; as specified in \S\ref{sec:shape} this includes the
Jacobian $-\log \shp_g[\per]$ of the deseasonalization, which makes the
$\nvar$ numbers densities of one and the same observable and therefore
comparable, for the reason given in Lemma~\ref{app:lem:jacobian}. This
subsection works in log-loss
units, which are the negative of the log-weight increments of
\eqref{eq:fc-weight}; the two therefore carry different symbols and are never
interchanged. Write $\lss_{\per,g} = -\log p_{\per,g}(\dem_\per)$ for the log loss
of variant $g$ at $\per$, $L_{\per,g} = \sum_{s \le \per}\lss_{s,g}$ for its
cumulative log loss,
and $\widetilde{L}_{\per,g} = \sum_{s\le\per}\fgt^{\per-s}\lss_{s,g}$ for the
discounted version. The policy maintains $\log V_g \leftarrow \fgt
\log V_g + \log p_{\per,g}(\dem_\per)$ and then subtracts the row maximum,
which is a shift common to all variants and therefore invisible to the softmax,
so the weights it uses at $\per$ are
\begin{equation}
  \wt_{\per,g} = \frac{\exp(-\widetilde{L}_{\per-1,g})}
                      {\sum_{g'}\exp(-\widetilde{L}_{\per-1,g'})}.
  \label{eq:app-shape-weights}
\end{equation}
The scoring mixture density is $\bar{p}_\per = \sum_g \wt_{\per,g}p_{\per,g}$,
with log loss $\lss_\per^{\mathrm{mix}} = -\log \bar p_\per(\dem_\per)$.
The path sampler uses the same variant weights; its within-variant sampling
distribution is specified separately below.

The bound below applies to the scoring density. Within a variant, members are scored
under a single Student-$t$ of scale $\sqrt{\tau_m^2+\sigma_m^2}$, whereas paths
are drawn as a convolution of two independent $t$ variates and then truncated at
zero and at the cap. The scoring density and path-sampling distribution are
therefore distinct. The bound controls the scoring mixture's log loss; it does
not directly bound the quantiles of the sampled demand paths.

\begin{proposition}[Log-loss bounds for the shape layer]
\label{prop:shape-bound}
\label{app:prop:shape}
\emph{(i)} Assume $\min_g \lss_{\per,g} < \infty$ at every $\per$. With
undiscounted weights, $\fgt = 1$, so that $\wt_{\per,g} \propto
\exp(-L_{\per-1,g})$ with $\wt_{1,g} = 1/\nvar$, the mixture is the Bayesian
mixture under log loss and for every horizon $\Tend$ and every data sequence
\[
  \sum_{\per=1}^{\Tend}\lss_\per^{\mathrm{mix}}
  \;\le\; \min_g L_{\Tend,g} + \log \nvar .
\]
In particular, since the flat variant is a member of the set and its scoring
density is the core forecaster's, the layer's cumulative log loss exceeds the
core's by at most $\log\nvar$ nats, a constant independent of $\Tend$.

\emph{(ii)} With the discounted update actually used, $\fgt < 1$, suppose the
flat variant's single-period excess log loss is bounded,
$\lss_{s,0} - \min_g \lss_{s,g} \le \delta$ for all $s$. Then for every $\per$
\[
  \wt_{\per,0} \;\ge\; \nvar^{-1}\exp\!\big(-\delta/(1-\fgt)\big),
  \qquad
  \lss_\per^{\mathrm{mix}} - \lss_{\per,0}
  \;\le\; \log \nvar + \frac{\delta}{1-\fgt}.
\]
\end{proposition}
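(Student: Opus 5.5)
For part (i) I will use the standard Bayesian-mixture telescoping argument. Put $W_\per = \sum_g \nvar^{-1}\exp(-L_{\per,g})$ with $W_0 = 1$. With $\fgt=1$ the weights \eqref{eq:app-shape-weights} are $\wt_{\per,g} = \nvar^{-1}\exp(-L_{\per-1,g})/W_{\per-1}$. This gives
\[
\bar p_\per(\dem_\per) = \sum_g \wt_{\per,g}\,e^{-\lss_{\per,g}} = W_\per/W_{\per-1},
\]
so $\lss^{\mathrm{mix}}_\per = -\log W_\per + \log W_{\per-1}$. Summing over $\per$ telescopes to
\[
\sum_{\per\le\Tend}\lss^{\mathrm{mix}}_\per = -\log W_\Tend,
\]
and since a sum of nonnegative terms dominates each term, $-\log W_\Tend \le L_{\Tend,g} + \log\nvar$ for every $g$. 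The assumption $\min_g \lss_{\per,g}<\infty$ is what keeps every $W_\per>0$, so every logarithm is finite. Conventions with $\lss=+\infty$ are handled by reading $e^{-\infty}=0$. The statement about the core forecaster then follows by taking $g=0$ and using that the flat variant's scoring density is the core's, as stated in \S\ref{sec:shape}.

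For part (ii) I first bound the weight. Write $m_s = \min_g \lss_{s,g}$. The key observation is that
\[
\wt_{\per,0} = \Bigl(\sum_{g'} \exp\bigl(\widetilde L_{\per-1,0} - \widetilde L_{\per-1,g'}\bigr)\Bigr)^{-1}.
\]
For each $g'$ I bound the exponent:
\[
\widetilde L_{\per-1,0} - \widetilde L_{\per-1,g'} = \sum_{s\le\per-1}\fgt^{\per-1-s}\bigl(\lss_{s,0} - \lss_{s,g'}\bigr) \le \sum_{s}\fgt^{\per-1-s}\bigl(\lss_{s,0}-m_s\bigr) \le \delta\sum_{j\ge0}\fgt^j = \delta/(1-\fgt).
\]
Here the discount weights are nonnegative and $\lss_{s,g'}\ge m_s$. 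Each of the $\nvar$ summands is therefore at most $e^{\delta/(1-\fgt)}$, which gives $\wt_{\per,0}\ge \nvar^{-1}e^{-\delta/(1-\fgt)}$. At $\per=1$ the weights are uniform, so the bound holds trivially. The only care needed is with infinite losses: if $\lss_{s,g'}=\infty$ the corresponding summand is $e^{-\infty}=0$, which only helps. The hypothesis itself forces $\lss_{s,0}<\infty$.

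The second inequality of part (ii) follows from the first. Because all terms are nonnegative, $\bar p_\per(\dem_\per)\ge \wt_{\per,0}\,p_{\per,0}(\dem_\per)$, so
\[
\lss^{\mathrm{mix}}_\per \le \lss_{\per,0} - \log \wt_{\per,0} \le \lss_{\per,0} + \log\nvar + \delta/(1-\fgt).
\]

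I expect no real obstacle; the argument is elementary. The one step to check carefully is that the implemented update (discount, then subtract the row maximum) produces exactly the weights \eqref{eq:app-shape-weights} with uniform initialization. Subtracting the row maximum is a common shift and so is softmax-invariant, as in Lemma~\ref{app:lem:jacobian}. I also need the infinite-loss conventions above so that no division by zero occurs.
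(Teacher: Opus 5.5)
Your proof is correct and follows the paper's argument. Part (i) is the same telescoping of $W_\per$; you normalize by the prior $\nvar^{-1}$, where the paper starts from $W_0=\nvar$. In part (ii), bounding each gap $\widetilde{L}_{\per-1,0}-\widetilde{L}_{\per-1,g'}\le\delta/(1-\fgt)$ separately is the paper's min-exchange step written per competitor, and it is followed by the same single-term bound $\bar p_\per(\dem_\per)\ge \wt_{\per,0}\,p_{\per,0}(\dem_\per)$.

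One small correction: the per-period hypothesis $\min_g \lss_{\per,g}<\infty$ does not by itself keep every $W_\per>0$, since two variants can take turns assigning zero density. Positivity of $W_\per$ needs some variant with finite cumulative loss; when none exists, the right-hand side is $+\infty$ and the bound is vacuous. The paper's proof also leaves this point implicit.
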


\begin{proof}
\emph{(i)} Let $W_\per = \sum_g \exp(-L_{\per,g})$ with $W_0 = \nvar$. Then
\[
  \frac{W_\per}{W_{\per-1}}
  = \frac{\sum_g \exp(-L_{\per-1,g})\exp(-\lss_{\per,g})}
         {\sum_{g'}\exp(-L_{\per-1,g'})}
  = \sum_g \wt_{\per,g}\,p_{\per,g}(\dem_\per)
  = \bar p_\per(\dem_\per)
  = \exp(-\lss_\per^{\mathrm{mix}}),
\]
where the second equality is the definition of $\wt_{\per,g}$ at $\fgt = 1$
together with $\exp(-\lss_{\per,g}) = p_{\per,g}(\dem_\per)$.
Telescoping over $\per = 1,\dots,\Tend$,
$\log W_{\Tend} = \log \nvar - \sum_\per \lss_\per^{\mathrm{mix}}$. On the other
hand $W_{\Tend} = \sum_g \exp(-L_{\Tend,g}) \ge \exp(-\min_g L_{\Tend,g})$,
because the sum dominates each of its nonnegative terms, so
$\log W_{\Tend} \ge -\min_g L_{\Tend,g}$. Combining the two displays gives
$\sum_\per \lss_\per^{\mathrm{mix}} \le \min_g L_{\Tend,g} + \log \nvar$. The
bound holds for every sequence, with no assumption on how the data are
generated, and the constant $\log \nvar$ is exact, not an order of magnitude.
Applying it with the flat variant in place of the minimizer, and
using that variant $\sexp = 0$ deseasonalizes by the constant shape
$\shp^0 \equiv 1$ and therefore has exactly the core forecaster's
scoring density, gives the second sentence.

\emph{(ii)} Fix $\per$. For the first step, by definition of the mixture,
\[
  \lss_\per^{\mathrm{mix}} - \lss_{\per,0}
  = -\log\Big(\sum_g \wt_{\per,g}\exp(-\lss_{\per,g})\Big) + \lss_{\per,0}
  = -\log \sum_g \wt_{\per,g}\exp(\lss_{\per,0} - \lss_{\per,g})
  \;\le\; -\log \wt_{\per,0},
\]
where the inequality keeps only the $g = 0$ term of the sum, whose exponential
factor is $1$, and uses that $-\log$ is decreasing.

For the second step, from \eqref{eq:app-shape-weights},
\begin{align*}
  -\log \wt_{\per,0}
  &= \widetilde{L}_{\per-1,0}
    + \log \sum_{g}\exp(-\widetilde{L}_{\per-1,g})
  \;\le\; \widetilde{L}_{\per-1,0}
    + \log\Big(\nvar \exp(-\min_g \widetilde{L}_{\per-1,g})\Big)\\
  &= \log \nvar
    + \Big(\widetilde{L}_{\per-1,0} - \min_g \widetilde{L}_{\per-1,g}\Big),
\end{align*}
bounding each of the $\nvar$ terms of the sum by the largest. For the
discounted excess, exchange the minimum and the sum using $\min_g \sum_s
a_{s,g} \ge \sum_s \min_g a_{s,g}$, valid because each summand is at least its
own minimum over $g$:
\begin{align*}
  \widetilde{L}_{\per-1,0} - \min_g \widetilde{L}_{\per-1,g}
  &\;\le\; \sum_{s \le \per-1}\fgt^{\per-1-s}\lss_{s,0}
        - \sum_{s\le\per-1}\fgt^{\per-1-s}\min_g \lss_{s,g}\\
  &= \sum_{s\le\per-1}\fgt^{\per-1-s}\big(\lss_{s,0} - \min_g \lss_{s,g}\big)
  \;\le\; \frac{\delta}{1-\fgt},
\end{align*}
the last step using the hypothesis termwise and $\sum_{k\ge0}\fgt^k =
1/(1-\fgt)$. Substituting into the previous display gives $-\log \wt_{\per,0}
\le \log\nvar + \delta/(1-\fgt)$, which with the first step is the per-period
bound. Exponentiating the same inequality gives the weight floor
$\wt_{\per,0} \ge \nvar^{-1}\exp(-\delta/(1-\fgt))$.
\end{proof}

\begin{remark}[adaptation and per-period guarantees under discounting]
\label{app:rem:discount}
Part~(ii) is a per-period bound and summing it over the horizon yields
$\Tend(\log\nvar + \delta/(1-\fgt))$, which grows linearly in $\Tend$.
Discounting reduces the influence of older observations so that variants can
gain weight as their fit improves. Alternating which variant fits best can incur
a switching cost every period, so excess over a fixed comparator can grow
with the horizon. For $\fgt < 1$, the result established here is
the conditional per-period bound and the accompanying weight floor, which
keeps the flat variant available after a run of bad periods.
\end{remark}

\begin{remark}[scoring-density equivalence and sample-path reproducibility]
\label{app:rem:bitwise}
The flat variant reproduces the core forecaster's scoring density exactly,
as required by Proposition~\ref{app:prop:shape}. Its random draws follow a
separate sampling stream: selecting a variant consumes randomness before
the bank is sampled, and each variant runs a private copy of the bank with its
own seed, so even when the flat variant carries essentially all of the weight
the realized sample paths differ from those of the unlayered forecaster.
Bit-identity holds only where the layer is absent entirely, which is the case
on the synthetic instances: they carry no product description, no shape is
requested, the shape axis is never instantiated, and the forecaster object is
left untouched. Bit-identity in this case follows directly from the construction.
\end{remark}

\section{Numerical verification of Proposition~\ref{prop:capped}}
\label{app:numeric}

Proposition~\ref{app:prop:capped} establishes containment,
$\Fpos \cup \Frate \subseteq \overline{\Fcap}$, with
$\Fpos \subseteq \Fcap$ exactly. The finite-horizon instance below examines
strict reward separation through exhaustive enumeration and a numerical
certificate conditional on the validity of estimated slope bounds. Every reported value describes
this construction; benchmark performance is evaluated in the experiments.

\subsection{The instance}

A single product at a single location, starting empty, over a horizon of
$\Tend$ periods. Demand is constant at $10$ units per period,
so the three families use the same known demand rate: the local rate $\rate$
used by all families is fixed at the mean demand, $\rate = 10$, in every period
and for every candidate. The unit profit and unit holding cost are stated per
configuration below. The lead time $\lead_\per$ is drawn independently in each
period from a stated support that includes $\infty$, uniformly over the support;
a draw of $\infty$ destroys the order, which never arrives and never enters the
in-transit total.

The period runs in the benchmark's order: the policy places its order; if the
lead time is finite the order is scheduled to land at $\per + \lead_\per$;
arrivals scheduled for $\per$ are added to on-hand stock; demand is served from
stock and unmet demand is lost; holding is charged on the ending inventory. An
order scheduled beyond the end of the horizon is never delivered. The three
families are exactly the pure base-stock, pure rate and capped base-stock
families of \S\ref{sec:stoch}, with the
pure limits realized by setting the inactive parameter to $10^6$. Orders are
real-valued here, so all three families are evaluated as continua under the same
convention, with integer rounding disabled.

\subsection{How the expectation is computed}

The expectation is over the lead-time sequence alone, and the horizon is short
enough that this expectation is a finite sum. With a support of size $S$ and
horizon $\Tend$ there are $S^{\Tend}$ sequences; we enumerate all of them, weight
each by its probability, and evaluate the resulting trajectory in closed form.
The reported values and their differences are finite sums, exact up to
floating-point arithmetic and free of Monte Carlo error.

Each family's optimum is taken over a grid of step $0.1$ on $[0,6]$ in each of
its parameters, so the capped family is searched over a $61\times 61$ grid and
each pure family over $61$ points; the two configurations at
$(\prof,\hold) = (19,1)$ use step $0.01$ on $[0,12]$, the gap there being the
small quantity the counterexample turns on.

A grid maximum lower-bounds the supremum of a family. Assessing separation also
requires upper bounds for the pure families to account for search resolution.
On the witness instance we compute these bounds by re-evaluating each pure family on a grid of
step $0.002$ over $[0,40]$, fifty times finer and more than six times wider,
which leaves both optima unchanged in value and in location. Since the value is
piecewise linear in the parameter, we estimate slope bounds by the largest
slopes on that grid: $26.76$ for the base-stock family and $47.50$ for the rate
family. If these slope bounds are valid, each supremum on $[0,40]$ is bounded
by its grid maximum plus its estimated slope bound times half a step, giving
$11.940820$ and $14.969375$. Past the interval both curves fall away, being
already below
$-990$ at $40$ and below $-2.6\times 10^{7}$ at $10^{6}$, as they must once
ordering beyond demand buys only holding cost. The best capped member attains
$16.523438$, giving a separation margin of $1.554062$ against the
bound. This numerical certificate uses slopes measured on the stated grid;
the separation is conditional on the validity of those estimated slope bounds.
Its scope is numerical verification of the witness instance.

\subsection{Results}

\begin{table}[t]
\centering
\caption{Expected reward at each family's best grid point, by
configuration. The gap is the best capped value minus the better of the two
pure values, and the last column expresses it as a fraction of the
perfect-foresight revenue $\prof\sum_\per\dem_\per$ of the same instance. All
values are exact sums over the enumerated lead-time sequences.}
\label{tab:app-prop6}
\footnotesize
\begin{tabular}{llrrrrrr}
\toprule
lead support & $(\prof,\hold)$ & $\Tend$ & paths
  & pure $\Fpos$ & pure $\Frate$ & capped $\Fcap$ & gap \\
\midrule
$\{1, \infty\}$ & $(1,1)$ & $6$ & $64$
  & $17.812500$ & $25.000000$ & $25.000000$ & $0.000000$ \\
 & & & & $\lev = 1.0$ & $\capm = 1.0$ & $\lev = 2.0,\ \capm = 1.0$ & $0.00\%$ \\
\addlinespace
$\{1, 2, \infty\}$ & $(1,1)$ & $6$ & $729$
  & $17.325103$ & $22.510288$ & $23.168724$ & $0.658436$ \\
 & & & & $\lev = 1.0$ & $\capm = 1.0$ & $\lev = 2.0,\ \capm = 1.0$ & $1.10\%$ \\
\addlinespace
$\{1, 2, \infty\}$ & $(4,1)$ & $6$ & $729$
  & $113.415638$ & $107.572016$ & $120.123457$ & $6.707819$ \\
 & & & & $\lev = 3.0$ & $\capm = 1.0$ & $\lev = 3.0,\ \capm = 2.0$ & $2.79\%$ \\
\addlinespace
$\{1, 2, 3, \infty\}$ & $(1,1)$ & $5$ & $1024$
  & $11.914062$ & $14.921875$ & $16.523438$ & $1.601562$ \\
 & & & & $\lev = 1.0$ & $\capm = 1.0$ & $\lev = 2.0,\ \capm = 1.0$ & $3.20\%$ \\
\addlinespace
$\{1, 2, \infty\}$ & $(19,1)$ & $5$ & $243$
  & $525.802469$ & $500.246914$ & $525.802469$ & $0.000000$ \\
 & & & & $\lev = 3.0$ & $\capm = 2.0$ & $\lev = 3.0,\ \capm = 3.0$ & $0.00\%$ \\
\addlinespace
$\{1, 2, \infty\}$ & $(19,1)$ & $6$ & $729$
  & $682.139918$ & $654.074074$ & $688.230453$ & $6.090535$ \\
 & & & & $\lev = 4.0$ & $\capm = 2.0$ & $\lev = 4.0,\ \capm = 3.0$ & $0.53\%$ \\%
 
\bottomrule
\end{tabular}
\end{table}

Table~\ref{tab:app-prop6} reports six configurations. The fourth row is the
benchmark's own lead-time support and is the witness cited in
Proposition~\ref{app:prop:capped}: the best pure base-stock grid point attains
$11.914062$, the best pure rate grid point $14.921875$, and the capped member with
$\lev = 2$ and $\capm = 1$ attains $16.523438$, so the best grid point of $\Fcap$
beats the best grid point of $\Fpos \cup \Frate$ by $1.601562$, which is $3.20\%$
of the perfect-foresight revenue of $50$ on that instance. This is the numerical
strict-separation witness for Proposition~\ref{app:prop:capped}, under the
slope-bound condition stated above.

The other five rows examine the conditions behind the gap. The first is a
control: with support $\{1,\infty\}$ every surviving order lands exactly one
period after it is placed, so two surviving orders can never land in the same
period and the lumping mechanism the design argument invokes is unavailable by
construction. The separation gap there is exactly zero and the best capped member
only matches the pure rate optimum. The second row differs from it in the
lead-time support alone, holding the costs and the horizon fixed; admitting a
second finite lead time opens the gap. This controlled change provides evidence
for the arrival-lumping mechanism. Across the five configurations whose support contains two distinct
finite lead times, so that a catch-up order and its successor can arrive together,
the gap is strictly positive in four and exactly zero in the fifth, the last two
rows being that case: at $\crat = 0.95$ it closes exactly at $\Tend = 5$ and one
further period restores it. Across the six constant-demand configurations
examined here, positive separation occurs only when the lead-time support permits
surviving orders to arrive together. The configuration at $\crat = 0.95$ and
$\Tend = 5$ shows that this possibility alone does not ensure a positive gap.

These constructed instances have constant demand, a five- or six-period horizon
and a known rate. The witness provides the conditional numerical certificate
of strict reward separation in Proposition~\ref{app:prop:capped}. Performance on the
benchmark is evaluated separately in \S\ref{sec:exp}.

\section{Construction details of the policy layers}
\label{app:method}

This appendix gives the construction details needed to reproduce the layers in
$\S$\ref{sec:method}, following their order in the main text.

\subsection{The three slots}
\label{app:method-slots}

Table~\ref{tab:slots} names the three slots, gives the shipped setting of each,
and identifies its author. We designed the forecaster and the per-period
arithmetic of the decision rules. The language model contributed a mixture axis
over the bank and a policy-family choice for the stochastic branch.

\begin{table}[t]
\centering
\footnotesize
\setlength{\tabcolsep}{4pt}
\begin{tabular}{@{}llp{6.2cm}l@{}}
\toprule
Slot & Shipped & What it decides & Authored by \\
\midrule
Forecaster & two-space bank & The model bank and how its members are weighted
  ($\S$\ref{sec:forecaster}) & Human \\
Shape & on & Whether a shape read from the product text defines
  extra forecast variants ($\S$\ref{sec:shape}) & LLM, design time \\
Search & capped base-stock & Which family of policies the stochastic cells search
  over ($\S$\ref{sec:stoch}) & LLM, design time \\
\bottomrule
\end{tabular}
\caption{The three slots. The frozen core of $\S$\ref{sec:overview} is one
setting of all three and appears as a row in the same ablation.}
\label{tab:slots}
\end{table}

\subsection{The predictive model}
\label{app:method-forecaster}

One instantiation of the bank $\Mset$ holds nine members: the mean of the whole
history; three exponentially weighted moving averages with different smoothing
constants; a Holt linear-trend member; two rolling-window means; an AR(1) fitted
by the sample lag-one autocorrelation; and a seasonal member whose period is
picked online by autocorrelation and re-picked as history accumulates. Each
member uses fixed configuration constants and is refit on the full history each
period. Member $m$ is scored using a Student-$t$ predictive density with $\dof$
degrees of freedom, centered on its point forecast $\mu_{m,\per}(j)$ at
forecast offset $j$. Its
scale combines parameter uncertainty $\tau_m$ with innovation noise
$\sigma_m\sqrt{1+g_m j}$, growing at the rate implied by the member's own
state-space form, $g_m = 0$ for the stationary members and $a_m^2$ for a
smoothing member with constant $a_m$. That growth is what keeps the tails of the
$\lead+1$-period cumulative from coming out too thin. The member constants, the
degrees of freedom in each space and the seasonal period detection are in
Appendix~\ref{app:impl-bank}.

Members are scored on the raw observation, so a member fitted on
$z = \tf_m(\dem)$ contributes its transformed log-density together with the
Jacobian of $\tf_m$, which is the correction $\S$\ref{sec:forecaster} discusses
and Lemma~\ref{lem:jacobian} proves. The discounted update and the softmax that
turns it into weights are
\begin{equation}
  \lw_{\per,m} \;\leftarrow\; \fgt\,\lw_{\per-1,m}
    + \log f_m\!\big(\tf_m(\dem_\per) \mid \Hist_\per\big)
    + \log \tf_m'(\dem_\per),
  \qquad
  \wt_{\per,m} = \frac{\exp(\lw_{\per,m})}
                      {\sum_{m'} \exp(\lw_{\per,m'})},
  \label{eq:fc-weight}
\end{equation}
with $\log \tf_m'(u) = 0$ for identity members and $-\log(1+u)$ for log1p members.

The forgetting factor $\fgt$ of \eqref{eq:fc-weight} is shared by the bank and by
the shape-variant weights of \eqref{eq:shape-update}, so a single constant sets
the memory of both levels of the hierarchy; its value and the running
renormalization of the log-weights are in Appendix~\ref{app:impl-weights}. The
bank is instantiated in both transform spaces on the deterministic cells and in
identity space alone on the stochastic cells, where the search layer consumes
a local demand rate.

The path-sampling distribution uses the same mixture weights as the scoring
density. A sample path is generated by drawing a member from these weights,
one level perturbation shared by the whole path and an independent innovation per
step, and then inverting that member's transform,
\begin{equation}
  \dem^{(k)}_{\per+j} = \tf_m^{-1}\!\Big(
    \mu_{m,\per}(j) + \tau_m \xi^{(k)}
    + \sigma_m \sqrt{1+g_m j}\; \varepsilon^{(k)}_j \Big),
  \qquad m \sim \wt_{\per,\cdot},
  \label{eq:fc-path}
\end{equation}
for $k \le \npath$ and $j < \hor$, with independent Student-$t$ draws
$\xi^{(k)}$ and $\varepsilon^{(k)}_j$. Values are clipped below at zero and above
at a level far past anything the series has done, since inverting log1p on one
heavy-tailed innovation can produce a draw that moves an upper quantile of the
$\hor$-period sum by itself. The loose clip limits the influence of such
extreme draws on the quantile.
Appendix~\ref{app:impl-weights} gives it.

\subsection{The shape layer}
\label{app:method-shape}

Variant $g$ deseasonalizes by $\shp_g = \shp^{\sexp_g}$, renormalized to mean one, and
runs a private copy of the bank on $\dem_\per / \shp_g[\per]$. Its weight is updated by
the discounted predictive likelihood of $\S$\ref{sec:shape},
\begin{equation}
\log V_g \;\leftarrow\; \fgt \log V_g + \log p_g(\dem_\per \mid \Hist_\per),
\qquad
p_g(\dem_\per \mid \Hist_\per)
  \;=\; \frac{1}{\shp_g[\per]}\,
        \tilde p_g\!\left(\frac{\dem_\per}{\shp_g[\per]} \;\middle|\; \Hist_\per\right),
\label{eq:shape-update}
\end{equation}
with $\tilde p_g$ the variant's own model-averaged scoring density in
deseasonalized space and
the factor $1/\shp_g[\per]$ the Jacobian of Lemma~\ref{lem:jacobian}, which puts
variants that deseasonalize by different amounts on one scale.

The elicitation query goes to DeepSeek-V4-Flash, and its user message carries
exactly what the official evaluation protocol gives a policy at initialization: the product
text, the week dates and the five training demands. These fields constitute the
entire instance-level input. The system prompt also names the retailer,
the market and the calendar year, and it supplies priors we wrote by hand, a
table of category seasonal peaks and the year's retail calendar. Those priors are
shared across products and encode category and calendar information.
The elicitation instructions require a normalized, nonnegative annual profile and
exclude any use of observed evaluation demand.

Replies pass a validation gate before they are used. A reply is accepted only if
it parses as strict JSON carrying one finite nonnegative entry per week of the
year with a mean that is not degenerate, and an accepted shape is divided by its
own mean. The accepted samples are aggregated by their per-week median, reducing
sensitivity to an extreme weekly value in any one sample. The aggregate is then
clipped, renormalized to mean one, and frozen before any policy runs. A product
for which no sample survives the
gate receives the all-ones shape, making a parse failure equivalent to switching
the layer off for that product. The number of samples, the
gate thresholds and the clips applied at build time and at run time are in
Appendix~\ref{app:impl-shape}.

At run time a sampled path draws a variant from the normalized variant weights, a
member from that variant's own bank, and scales by that variant's shape, so the
hierarchy is resolved by sampling at both levels. The layer has
constants of its own, the number of variants and the sampling and clipping
choices among them, fixed at the values in Appendix~\ref{app:impl-shape}.
Prior strength is handled by online competition among the variants; as
$\S$\ref{sec:shape} explains, selecting a strength on held-out data would require
product text and calendar information absent from our validation set.

\subsection{The deterministic rule}
\label{app:method-determ}

The depletion multiplier $\dep$ of \eqref{eq:marginal} is an empirical correction
to the cost-based marginal value. In the shipped policy it is
\begin{equation}
  \dep \;=\; \min\!\left\{1 + \dep_1\,\lead\,\mathrm{cv}^{\,\dep_2},\ \ \bar{\dep}\right\},
  \label{eq:app-kappa}
\end{equation}
with $\mathrm{cv}$ the coefficient of variation of the policy's own observed
demand history. Each of the two drivers carries one of the patterns the
correction is meant to absorb. The lead time scales the increment above the
additive term because the number of periods a surplus must be sat on before the
next order can respond to it grows with the lead time; at $\lead = 0$ that
increment vanishes and $\dep = 1$, the constant base term. The coefficient of
variation enters because the surplus a given target leaves behind grows with the dispersion of the predictive,
so the wait the marginal unit is charged for grows with it too. The cap
$\bar{\dep}$ limits the multiplier's response to extreme observed dispersion.

Only $\dep_1$ and $\dep_2$ were fitted, on the validation set. The additive
term is a constant base contribution. The cap $\bar{\dep}$ and the clip
applied to $\mathrm{cv}$ guard against extreme values. The functional form
itself was chosen after seeing that a single global constant was
mis-tuned in a patterned way across the test cells, which is the first of the two
leakage channels $\S$\ref{sec:exp-protocol} sets out. Thus test aggregates informed
the functional form, while validation instances supplied the coefficient fit.
Values are in Appendix~\ref{app:impl-determ}, which
also gives the path count, the bisection bracket and the number of bisection
steps the root search uses.

\subsection{The stochastic search}
\label{app:method-stoch}

Searching policy parameters makes the rollout value consistent with each
candidate's continuation rule. A one-step rollout assigns to a candidate action the
value of that
action plus the value of whatever continuation rule the simulation assumes, so the
choice of continuation decides the ranking: a continuation that floods every
simulated future with stock covers the very demand the candidate was meant to cover
and drives its marginal effect below the sampling noise of the path set. Simulating
the remaining horizon under the same rule the candidate defines removes that free
choice, at the cost of restricting the search to a parametric family.

Rollouts use two quantities constructed from the reported state.
The first is the order survival probability. We fit a Beta-Bernoulli model to
the survival indicators identified by Proposition~\ref{prop:supply} and estimate
the probability by its posterior mean. A positive floor keeps the survival
estimate used in rollouts bounded away from zero. The second is the
committed pipeline. The benchmark reports in transit as a scalar total with no age
breakdown, while a rollout needs to know when those units land, so the total is
split across the finite lead ages in proportion to the surviving order placed that
many periods ago times the probability that its lead is at least that long,
conditional on the lead being finite. Conditioning on finite lead time keeps
the arrival-age calculation consistent with the survival accounting. Each age is then given a
residual lead drawn uniformly from the lead values still consistent with it.
Appendix~\ref{app:impl-search} gives the clip, the split and the coverage window
the local rate is computed over.

The search is two passes. A coarse pass evaluates the pure candidates of the
first pass's two families on a fixed grid, each realized inside $\Fcap$ by setting
the other parameter to a numerical stand-in for the pure limit of
Proposition~\ref{prop:capped}, which identifies a good $\lev$ and a good $\capm$
separately. A refinement pass then brackets each of those two winners between its
neighboring grid points and explores the interaction on a small product grid,
carried together with the pure limits, so every action that earlier search could
have played remains reachable at every period. The grids, their sizes and
the number of sampled paths per candidate are in Appendix~\ref{app:impl-search}.

The blend that turns candidate values into one order is a softmax over value gaps
divided by a scale, with the scale set to the median standard error across the
candidates, so the gaps are read in units of sampling noise and no tuning
constant has to be rescaled per instance. The scale is floored away from zero so
that the blend remains defined when every candidate returns the same estimate,
and the executed order is the floor of the weighted mean of the candidate
actions. Hedging across an expert set is a standard device
\citep{cesa2006prediction}; its role here is empirical variance control, with
performance assessed through the rollout search.

\subsection{The design-time loop}
\label{app:method-loop}

The method includes a diagnostic switch that substitutes true future demand for
the forecast. The proposal screen enforces the online information boundary. Every
proposal is inspected for that switch, and for any other route to a future
observation, before it is run, and what the screen rejects is recorded in the
research record alongside the proposals that ran.

The harness draws exclusively from the validation generator, keeping candidate
scoring separate from benchmark test instances. Each proposal is scored against
the current best configuration, so acceptance measures an incremental gain over
the improvements already retained.

The second research pass started from the best composition found so far and
received the first pass's full record, including every proposal and its score,
whether accepted or rejected. The record also explained that four separately
credited first-pass wins were four implementations of one idea in the same slot.
Its context included the control that added per-period language-model order
adjustments: estimated increments were negative on every deterministic cell and
positive on both stochastic cells. This information let the second pass build on
accepted ideas and focus on the cells where the earlier control suggested room
for improvement.

\subsection{Composition}
\label{app:method-compose}

The loop scores each candidate once against the configuration current when it
was written. To assess additivity and attribute gains to individual components,
we re-express the proposals as settings of independent slots and enumerate their product,
which is possible because the accepted candidates modify disjoint parts of the
policy, one group replacing the model bank and another the family the stochastic
branch searches over. The original candidates and frozen core are preserved,
with the core entering the ablation as a literal setting of the three slots.
This construction supports the factorial interpretation of Figure~\ref{fig:ablation}.

A candidate takes over the parent policy's forecaster only after the parent has
built and seeded its own, so a candidate that arrived with a forecaster
alongside its search layer can still be used as a pure search layer, and the two
contributions credited separately.

Three consistency checks precede the sweep. First, a variant's own
mixture scoring density is recomputed from a per-member quantity the frozen
forecaster also carries, so it has to be revalidated against every forecaster setting it is
used with, including the ones that add members; the check is that the member set
reconstructed there matches that setting's own. Second, the shape layer must be
inert on the validation set, no validation instance carrying product text, and the
orders emitted with and without the layer are required to agree entry by entry.
Third, the two research passes number their rounds independently, so each search
layer is identified by both its pass and round index. This identifies the
intended candidate when the passes reuse a round number.

\section{Configuration details}
\label{app:impl}

This appendix specifies the shipped policy's configuration. Empirically chosen
constants were selected on the validation set of \S\ref{sec:exp} and then frozen.
Displayed time indices follow the paper's convention, $\per=1,\dots,\Tend$.
Writing the code's zero-based index as $t_0$, we have $t_0=\per-1$.

\subsection{The model bank}
\label{app:impl-bank}

One instantiation of the bank holds nine members. Each member is refit on the
full history each period using the parameters prescribed by its model form,
and emits a point forecast
$\mu_{m,\per}(j)$ at offset $j \ge 0$, a parameter-uncertainty scale $\tau_m$
and an innovation scale $\sigma_m$. Residual scales are floored at $10^{-6}$.
Table~\ref{tab:app-bank} lists the members with their constants and with the
innovation growth rate $g_m$ that sets how the predictive widens with the
offset, the scale at offset $j$ being $\sigma_m\sqrt{1+g_m j}$.

\begin{table}[t]
\centering
\caption{The nine members of one bank. $m$ is the number of observations in the
history, $\mathrm{sd}$ its sample standard deviation, $\sigma_m$ the member's
own one-step residual scale. The growth rate $g_m$ inflates the $j$-step scale by
$\sqrt{1 + g_m j}$. For the local-level members it is the value implied by the
state-space form, $a^2$ with smoothing constant $a$. The remaining entries are
chosen approximations with the following errors: Holt's $j$-step variance grows
superlinearly, which $0.25$ cannot
reproduce and at $j=1$ overstates; and AR(1) with $\phi$ near the clip does
accumulate variance in $j$, which $0$ ignores.}
\label{tab:app-bank}
\footnotesize
\setlength{\tabcolsep}{4pt}
\begin{tabular}{llll}
\toprule
member & constants & $\tau_m$ & $g_m$ \\
\midrule
history mean            & ---                       & $\mathrm{sd}/\sqrt{m}$          & $0$ \\
EWMA $\times 3$         & $a \in \{0.15, 0.35, 0.70\}$ & $\sigma_m\sqrt{a/(2-a)}$     & $a^2$ \\
Holt linear trend       & $\alpha_{\mathrm{H}} = 0.3$, $\beta_{\mathrm{H}} = 0.1$ & $0.6\,\sigma_m$ & $0.25$ \\
rolling mean $\times 2$ & window $w \in \{4, 8\}$, truncated to $m$ & $\mathrm{sd}_w/\sqrt{w}$ & $0$ \\
AR(1)                   & $\phi$ = lag-1 autocorrelation, clipped to $[-0.95, 0.95]$ & $\sigma_m/\sqrt{\max(m,2)}$ & $0$ \\
seasonal                & period from Table~\ref{tab:app-numbers} & $\sigma_m/\sqrt{m}$ & $0$ \\
\bottomrule
\end{tabular}
\end{table}

The seasonal member's period is chosen online. Candidate lags
run over $\{4,5,\dots,14\}$; a candidate $P$ is admissible only if the history
is longer than $P+3$, the criterion is the sample autocorrelation of the
mean-centered history at lag $P$, and the default when no candidate is
admissible is $7$. Detection is repeated every fifth observation as history
accumulates, and the seasonal member uses the median of the detected periods.
The evaluator scores one instance at a time, so that median is taken over a single
element and the period is the instance's own.

Member $m$'s one-step scoring density is a Student-$t$ centered at
$\mu_{m,\per}(0)$ with scale $\sqrt{\tau_m^2 + \sigma_m^2}$ and $\dof$ degrees
of freedom, with $\dof$ shared within each space: $\dof = 4$
in identity space and $\dof = 14$ in $\log(1+y)$ space. The two differ because
the transform already supplies the right tail. Inverting $\log(1+y)$ with
$\exp(\cdot)-1$ turns a symmetric innovation into a right-skewed raw-space
predictive, so stacking a $\dof = 4$ innovation on top of that inversion
can produce extreme upper-tail draws. In identity space, the small degrees of
freedom directly represent the heavy tail of weekly retail demand.

\subsection{Weights and sample paths}
\label{app:impl-weights}

The bank is instantiated in both spaces on the deterministic cells,
$\Spaces = \{\mathrm{id}, \mathrm{log1p}\}$ and $\Msize = 18$, and in identity
space alone on the stochastic cells, $\Msize = 9$. The search layer of
\S\ref{sec:stoch} uses the rate supplied by this identity-space bank.
Log-weights are updated by the discounted log-likelihood
recursion of \S\ref{sec:forecaster} with
forgetting factor $\fgt = 0.92$, an effective memory of $1/(1-\fgt) = 12.5$
periods, and are renormalized each period by subtracting the running maximum,
which is a shift common to all members and therefore invisible to the softmax.

A sample path is drawn by selecting a member index from the current mixture
weights, drawing one level perturbation $\xi$ shared by the whole path, drawing
an independent innovation $\varepsilon_j$ at each step, and inverting the
member's transform, as set out in \S\ref{app:method-forecaster}. Both $\xi$ and
$\varepsilon_j$
are Student-$t$ on the selected member's own degrees of freedom. For a log1p
member the exponent is clipped at $20$ before inversion. Every path is then
clipped below at zero and above at
\begin{equation}
  6\left(\max_\per \dem_\per + 3\,\mathrm{sd}(\dem) + 10\right),
  \label{eq:app-clip}
\end{equation}
the maximum and standard deviation being taken over the observed history. The
clip lies well above the observed range and limits the influence of a single
inverted heavy-tailed draw on an upper quantile of the $\hor$-period sum.

\subsection{The seasonal shape layer}
\label{app:impl-shape}

At design time DeepSeek-V4-Flash is queried $3$ times per product, at temperature $1.0$
with a $2200$-token limit. A reply is
accepted only if it parses as strict JSON carrying exactly $52$ finite
nonnegative entries whose mean is at least $0.2$, and an accepted shape is
divided by its own mean. The three accepted samples are reduced by a per-week
median to reduce sensitivity to an outlying draw, then clipped to
$[0.05, 20]$ and renormalized to mean one. A product for which no sample
passes the format checks receives the all-ones shape. The $52$
weeks span the five training weeks and the $47$ scored weeks of a real
instance.

At run time the stored table is clipped to $[0.12, 8.0]$ and renormalized to
mean one before use. The $\nvar = 3$ variants use exponents $\sexp \in
\{0, 1/2, 1\}$, and each tempered shape $\shp^{\sexp}$ is again renormalized to
mean one and re-clipped to $[0.12, 8.0]$ to keep the tempered variant within
the same range. Variant weights use the same
$\fgt = 0.92$ as the bank. Each variant runs a private copy of the bank with
its own seed offset, and the variant sampler carries a further independent
stream. The flat variant therefore shares the core forecaster's predictive law
while using independent draws (Remark~\ref{app:rem:bitwise}). Beyond week $52$ the shape is
padded with ones.

\subsection{The deterministic rule}
\label{app:impl-determ}

The closed-form rule of \S\ref{sec:determ} draws $\npath = 128$ sample paths of
length $\hor = \lead+1$, rolls the committed pipeline forward along each path
under the evaluator's own lost-sales recursion to obtain
$\Iproj_{\per+\lead}$, and solves the marginal condition of \S\ref{sec:determ}
on the resulting sample
of shortfalls. The root is found by bisection, $48$ iterations on the initial
bracket $[0,\ 1.05\max\{\max_k \Short^{(k)},\, 1\}]$, which contains the root by
part~(v) of Proposition~\ref{app:prop:marginal} and leaves a residual bracket
of $2^{-48}$ of its width. No order is placed once $\per + \lead > \Tend$,
since the order would arrive after the last scored period.

The depletion multiplier is
$\dep = \min\{1 + \dep_1\,\lead\,\mathrm{cv}^{\,\dep_2},\ \bar\dep\}$ with
$\dep_1 = 1.5$, $\dep_2 = 1.0$ and $\bar\dep = 12$, where $\mathrm{cv}$ is the
coefficient of
variation of the policy's own observed demand history, clipped to
$[0.05, 1.5]$. The coefficients $\dep_1$ and $\dep_2$ were fitted on validation; the
floor, the cap and the clip range are fixed numerical guards that remain
inactive on typical instances.

\subsection{The stochastic search}
\label{app:impl-search}

Each period the search evaluates candidate policies by simulating the whole
remaining horizon, $\hor = \Tend - \per + 1$ periods including the current period,
on $\npath = 40$ sampled demand paths. All candidates share the same paths, the
same per-period survival draws and the same per-period lead draws, so candidates are compared on
identical futures. Leftover stock at the end of the simulated horizon receives
no credit. Orders inside the rollout are floored exactly as the evaluator
floors them.

The pass is coarse then refined. The coarse pass evaluates the $24$ pure
candidates of the first pass's two families,
\begin{align*}
  \lev  &\in \{0,\ 0.4,\ 0.8,\ 1.2,\ 1.6,\ 2.0,\ 2.5,\ 3.0,\ 3.6,\ 4.3,\ 5.2,\ 6.5\},\\
  \capm &\in \{0,\ 0.15,\ 0.3,\ 0.45,\ 0.6,\ 0.75,\ 0.9,\ 1.1,\ 1.35,\ 1.7,\ 2.1,\ 2.6\},
\end{align*}
each realized inside the capped base-stock family of \S\ref{sec:stoch} by
setting the other
parameter to a sentinel of $10^5$, which is the pure limit of
Proposition~\ref{app:prop:capped} in numerical form. The refinement pass takes
the best $\lev$ and the best $\capm$ of the coarse pass and brackets each
between its two neighboring grid points, sampling $4$ points across that
bracket. The refined candidate set is the $4\times 4$ interaction grid together
with the $4$ pure base-stock and $4$ pure rate limits, again $24$ candidates,
so every action the earlier search could have played remains reachable at every
period.

Both $\lev$ and $\capm$ multiply the local forecast rate. The rate at simulated
step $j$ is the forward mean of the
predictive mean path over $[j,\ j+\mathrm{cov})$, truncated at the horizon,
with the coverage window $\mathrm{cov} = \operatorname{round}(1 +
\overline{\lead}) = 3$ periods for the finite lead support $\{1,2,3\}$, that is
one period plus the mean finite lead.

Supply is handled by two devices. The survival probability is the
Beta-Bernoulli posterior mean of Corollary~\ref{app:cor:supply},
$(1+n^{\mathrm{surv}})/(2+n^{\mathrm{ord}})$, clipped to $[0.2, 1]$, and it
drives the survival draws inside the rollout. The committed pipeline is not
observable as an age profile, only as a scalar total, so it is split across
ages $1,\dots,3$ in proportion to the surviving order placed $a$ periods ago
times $\Prob(\lead \ge a \mid \lead < \infty)$, the conditioning being required
because destruction is already accounted for by counting only surviving orders.
Each age is given a residual lead drawn
uniformly from the lead values that are still consistent with it.

The executed order is a softmax-weighted blend of candidate actions, with value
gaps scaled by the median candidate standard error. With
$v_g$ the estimated value of candidate $g$ and
$\mathrm{se}_g = \mathrm{sd}_g/\sqrt{\npath}$ its standard error, the weights
are a softmax of $(v_g - \max_{g'} v_{g'})/(\varsigma\, \mathrm{se}^{\mathrm{med}})$
with temperature $\varsigma = 1$ and scale $\mathrm{se}^{\mathrm{med}} =
\operatorname{median}_g \mathrm{se}_g$, floored at $10^{-9}$; the order is the
floor of the weighted mean of the candidate actions. The weights concentrate on
the highest-valued candidates when their estimates are well separated relative
to this scale, and spread across candidates with similar estimates. The
standard-error scale adapts the blend to each instance. No order is placed once
$\per + \min \lead > \Tend$.

\begin{table}[t]
\centering
\caption{Numerical constants, collected. Values are read from the shipped
configuration.}
\label{tab:app-numbers}
\small
\begin{tabular}{lll}
\toprule
quantity & symbol & value \\
\midrule
EWMA smoothing constants          & $a$                & $0.15,\ 0.35,\ 0.70$ \\
Holt smoothing constants          & ---                & $0.3$ (level), $0.1$ (trend) \\
rolling window lengths            & $w$                & $4,\ 8$ \\
seasonal period candidates        & $P$                & $4,\dots,14$; default $7$; re-detected every $5$ periods \\
degrees of freedom, identity      & $\dof$             & $4$ \\
degrees of freedom, log1p         & $\dof$             & $14$ \\
forgetting factor                 & $\fgt$             & $0.92$ \\
innovation growth                 & $g_m$              & $0$; $a^2$ for EWMA; $0.25$ for Holt \\
sample-path upper clip            & ---                & $6(\max \dem + 3\,\mathrm{sd}(\dem) + 10)$ \\
paths, deterministic rule         & $\npath$           & $128$ \\
paths, stochastic search          & $\npath$           & $40$ \\
bisection iterations              & ---                & $48$ \\
depletion coefficients            & $\dep_1$, $\dep_2$ & $1.5$, $1.0$ \\
depletion cap, cv clip            & $\bar\dep$         & $12$; $\mathrm{cv} \in [0.05, 1.5]$ \\
survival prior, clip              & $\psurv$           & $\mathrm{Beta}(1,1)$; posterior mean clipped to $[0.2,1]$ \\
coverage window                   & $\mathrm{cov}$     & $3$ \\
refinement points per axis        & ---                & $4$ ($4\times4$ interaction plus $4+4$ pure limits) \\
hedging temperature, scale        & $\varsigma$        & $1$; median candidate standard error \\
shape variants                    & $\sexp$            & $0,\ 1/2,\ 1$ \\
shape samples per product         & ---                & $3$, reduced by per-week median \\
shape clips                       & ---                & $[0.05, 20]$ at build time; $[0.12, 8.0]$ at run time \\
\bottomrule
\end{tabular}
\end{table}

\section{Supporting measurements}
\label{app:expsupp}

\subsection{Upper bound, validation set, and proposal screening}

The upper-bound plan of \S\ref{sec:exp-ceiling} has a direct construction.
In the paper's one-based indexing, arrivals can occur only at
$A=\{\per+\lead_\per:\lead_\per<\infty,\ \per+\lead_\per\leq\Tend\}$. A unit
assigned to demand in period $\tau$ from an arrival slot $a\leq\tau$ incurs
$\tau-a$ periods of holding, so the latest feasible slot weakly dominates every
earlier one. The unit is served exactly when $\prof>\hold(\tau-a)$; equality may
be resolved either way. Orders are uncapacitated and costs are linear, hence these
unit-wise choices do not compete for capacity. Their selected arrival slots are
non-decreasing in $\tau$, so the aggregate plan is feasible and creates no
additional stockout. Demand with no feasible slot cannot be served by any policy.
Scoring this plan under the benchmark dynamics gives the reported ex-post upper
bound.

\begin{table}[t]
\centering
\footnotesize
\begin{tabular}{@{}lrrrrrrr@{}}
\toprule
& \multicolumn{3}{c}{Synthetic} & \multicolumn{3}{c}{Real} & \\
\cmidrule(lr){2-4}\cmidrule(lr){5-7}
& $\lead=0$ & $\lead=4$ & stoch. & $\lead=0$ & $\lead=4$ & stoch. & Total \\
\midrule
Test instances & 240 & 240 & 240 & 200 & 200 & 200 & 1320
  \\
\bottomrule
\end{tabular}
\caption{The six test cells and their instance counts, read off the per-instance
score files the benchmark ships. Every table and
figure in \S\ref{sec:exp} uses these six columns in this order. $\Score$
averages over instances, so the two zero-lead-time cells carry a third of the
weight between them and the two stochastic cells carry another third.}
\label{tab:app-cells}
\end{table}

In the oracle diagnostic of \S\ref{sec:exp-ceiling}, the projected base-stock rule
with $\dep=1$ and realized future demand supplied as a point-mass forecast matches
the upper bound in all four deterministic cells. The fixed rule solves
\eqref{eq:marginal} against this degenerate predictive. Evaluating the shipped
adaptive-$\dep$ variant or a rule supplied with the true demand distribution
requires separate experiments.

The validation generator behind \S\ref{sec:exp-protocol} makes fresh-seed rebuilds
of the ten documented synthetic families, plus a retail-shaped family calibrated
only to the five training demands available to a policy. The composition
sweep, both proposer passes and the freeze draw their instances from that
generator and have no path to the benchmark's own. Candidates are inspected before
they run, and one is rejected if it refers to an instance's realized demand, to
the diagnostic switch that substitutes that demand for the forecast, to the
benchmark's own instance files, or to the per-period lead-time sequence; each
rejection is recorded. This static check targets accidental prohibited accesses;
its resistance to deliberate evasion has not been evaluated.

Table~\ref{tab:app-validation} reports the four configurations used to select
the final settings of the three components. Selection scores are measured on the
generator's instances, with the test set excluded from candidate scoring.
Validation instances carry no product text, so the shape layer is inactive and
its quality is outside this validation comparison. The harness checks this
inactivity by requiring the orders emitted with and without the layer to agree
entry by entry.

\begin{table}[t]
\centering
\scriptsize
\begin{tabular}{@{}lrrrrrl@{}}
\toprule
Forecaster/search & seed 0 & seed 1 & seed 2 & mean & sd & \\
\midrule
core / core & 0.61002 & 0.61009 & 0.60993 & 0.61001 & 0.00008 & frozen OR core \\
core / first pass & 0.61510 & 0.61483 & 0.61489 & 0.61494 & 0.00014 & first pass only \\
core / shipped & 0.61781 & 0.61739 & 0.61737 & 0.61752 & 0.00025 & \textbf{shipped} \\
pooled / shipped & 0.61839 & 0.61812 & 0.61813 & 0.61821 & 0.00015 & pooled, not shipped \\ 
\bottomrule
\end{tabular}
\caption{Validation scores of the four configurations that decided the freeze,
labelled by the forecaster and search settings. These scores support comparisons
within the validation generator; \S\ref{sec:exp} reports a separate test set. The
second row is the loop's first pass and the third is what ships, so the two
together show how the search component's gain divides between the loop's two
passes. The fourth scores highest and is not shipped, because its
forecaster pools across the instances of a cell and the official
one-instance-at-a-time runner cannot reproduce it; \S\ref{sec:exp-null} reports
its test-set contribution. The standard deviation across seeds measures Monte
Carlo noise from policy sampling on fixed data; uncertainty from sampling
validation instances is outside this measure.}
\label{tab:app-validation}
\end{table}

\subsection{Control arm and paired intervals}

The same-model online-adjustment control of Table~\ref{tab:control} was run in the
predecessor project.
Its results are preserved in the standing-context block supplied at the start of
every proposer round and retained with the experiment record. This control uses
one arm and one model, with roughly twenty instances per cell. At this sample
size, two of the six cells are more than two standard errors from zero: the real
deterministic cell at lead time four has a negative estimate, and the real
stochastic cell has a positive estimate.

The across-seed standard deviation of the shipped configuration is \abseedsd{}
and measures rollout Monte Carlo noise on fixed data. We assess sensitivity to
individual instances with the paired bootstrap. Every arm is scored on the same
instances, so the contrasts
are paired; we resample instances with replacement, keeping each instance's whole
row of arms together, and report the percentile interval of the paired mean
difference. These intervals are conditional on the fixed test set; their scope
is sensitivity to its instance composition, rather than generalization across
benchmarks. The measured contributions are distributed across instances:
the shape layer moves nearly every real-cell instance and no synthetic one, the
search layer nearly every stochastic-cell instance, and no contrast is dominated
by its ten largest movers.

\begin{table}[t]
\centering
\scriptsize
\setlength{\tabcolsep}{4pt}
\begin{tabular}{@{}lrrcrr@{}}
\toprule
Contrast against the frozen core & $n$ & $\Delta\Score$ & 95\% interval
& moved & top-10 \\
\midrule
Shape layer, all cells & 1320 & $+$0.0081 & $[+0.0058,\, +0.0105]$ & 0.44 & 0.20 \\
Search layer, all cells & 1320 & $+$0.0077 & $[+0.0061,\, +0.0093]$ & 0.33 & 0.17 \\
Both authored layers, all cells & 1320 & $+$0.0139 & $[+0.0115,\, +0.0164]$ & 0.62 & 0.14 \\
\quad shape, three real cells only & 600 & $+$0.0179 & $[+0.0128,\, +0.0230]$ & 0.97 & 0.20 \\
\quad search, two stochastic cells only & 440 & $+$0.0230 & $[+0.0185,\, +0.0277]$ & 0.99 & 0.17 \\
Pooled forecaster, on top of the shipped pair & 1320 & $+$0.0006 & $[-0.0000,\, +0.0013]$ & 0.98 & 0.08 \\%
 
\bottomrule
\end{tabular}
\caption{Paired bootstrap over the test instances. \emph{moved} is the share of
instances whose score changes at all, \emph{top-10} the share of total absolute
movement carried by the ten largest movers. The last row is measured
against the shipped pair, the others against the frozen core. Instance-level
scores come from a single seed, so a contrast here can differ slightly from the
three-seed mean in Figure~\ref{fig:ablation}. For the pooled forecaster, the
single-seed interval includes zero.}
\label{tab:app-paired}
\end{table}

\begin{figure}[t]
\centering
\includegraphics[width=0.56\textwidth]{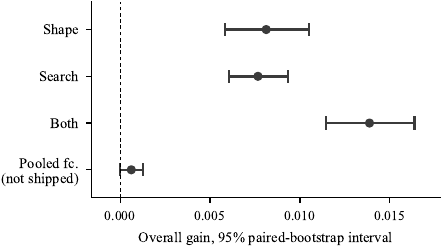}
\caption{Paired bootstrap intervals for the overall contributions, the same
quantities as Table~\ref{tab:app-paired}. The pooled forecaster's interval
includes zero; this configuration is retained as a diagnostic.}
\label{fig:app-paired-forest}
\end{figure}

The composition sweep identified the four first-pass gains as variants of the
same local-rate normalization in the stochastic search. The second pass added
the cap; the final search component combines these two changes.

The alternative forecaster adds \abrshape{} with the seasonal layer and
\abrsearch{} with the search layer, indicating that its gain depends on the
accompanying components. The negative seasonal-by-search interaction in
\S\ref{sec:exp-abl} indicates overlapping gains from the two layers.
Configurations with a pooled forecaster are excluded from submission to preserve
the per-instance protocol.

On the two cells with a deterministic lead time of four, the hindsight-tuned
constant level of \S\ref{sec:exp-ceiling} exceeds every one of the
\npublished{} public baselines. That section reports the strongest baseline's
per-cell scores.

\subsection{Design-time budget and reproducibility}

The fifty-two proposals of \S\ref{sec:exp-null} are all retained, and the second
pass is recorded round by round: twelve of twenty-eight improved on the screening
subset, four earned a full-validation re-score, one failed to run, and the best of
the four was frozen. Those two passes and the shape elicitation of
\S\ref{sec:exp-repro} are the whole design-time budget, elicitation being most of
it; the products it covers appear in all three real cells, so three samples for
each of the \shapeitems{} products covers the benchmark's real half once. Rows of
Figure~\ref{fig:ablation} come from runs that take a cell's instances in one
batch, the path the design-time search used; moving to the one-instance-at-a-time
path a submission is scored on changes the shipped configuration by \execgap{}
overall and reorders nothing.

\begin{figure}[t]
\centering
\includegraphics[width=0.55\textwidth]{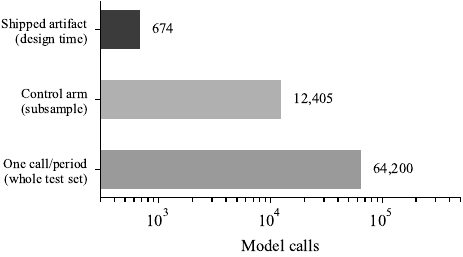}
\caption{Model calls, log scale. The top bar counts all calls used to create the
shipped artifact: shape elicitation and both proposer passes. The middle bar
counts retained responses for the same-model online-adjustment control of
\S\ref{sec:exp-abl}, reported as a per-period model call on a \ctrln{}-instance subsample. This
experimental cost is shown separately from artifact construction. The bottom
bar gives the calls required to run such a control once per period over the
whole test set. The shipped policy makes no model calls during operation.}
\label{fig:app-budget}
\end{figure}

The control arm's budget is based on \armcalls{} retained responses, roughly
twice \ctrln{} instances times a horizon. This count indicates either coverage
beyond the reported \ctrln{} instances or multiple calls in some periods.
The predecessor project's logs leave these possibilities unresolved, so the
budget uses the retained count. Per-cell deltas use the reported runs, while
the control arm's cost remains uncertain by a factor of about two.

The reproduction of \S\ref{sec:exp-repro} produces the orders, scores them with
the benchmark's own evaluator and replays them through the official policy
protocol, one instance at a time exactly as that evaluator drives a policy. The
replay reproduces the submitted orders with maximum deviation zero, so the shipped
configuration carries no cross-instance dependence. Design-time replies are
retained and a verification mode checks that every one a run needs is already
present. This check verifies that a repeat run can use cached replies throughout,
with no network calls.

The proposer context includes four test-set diagnostics: the score of a
hindsight-tuned global order-up-to level; that the deterministic rule reaches the
upper bound when fed the realized path; that the stochastic cells are furthest
from it; and the
per-cell deltas and standard errors of the control arm. The record carries no test
score of any candidate, every one having been admitted or rejected on validation
alone.

\subsection{Numerical checks on the appendix results}

We numerically check three results proved in Appendix~\ref{app:proofs}.
For Lemma~\ref{app:lem:jacobian} we obtain
the raw-space density by differencing the transformed CDF and match it to
$q(\tf(\dem))\tf'(\dem)$, confirm that the uncorrected version integrates to
$27.66$ instead of one, exhibit an observation on which dropping the correction
reverses which member wins, and check the closed form for $\E\Delta_\per$ to four
digits over two hundred periods. For Proposition~\ref{app:prop:supply} we run
random trajectories under both accountings, including the degenerate ones in which
every lead time is zero and in which every order is destroyed, and find no period
where the identity fails or where the retaining accounting leaves the indicator
identified. For Proposition~\ref{prop:shape-bound} we check part (i) against
random loss sequences, check the floor and the per-period bound of part (ii)
against sequences built to satisfy the $\delta$ hypothesis, and then remove the
hypothesis and drive the flat variant's weight below the smallest positive
representable number, demonstrating the role of that hypothesis.

For the cost-structure results we solve the zero-lead-time dynamic program exactly
by backward induction over a bounded integer state. The computed solutions
confirm that $V_\per$ is flat
on $[0,\base^*_\per]$ and non-increasing and that $\max\{\onh,\base^*_\per\}$ is
optimal at every state under Assumption~\ref{app:as:levels}. The checks also cover
the three falling-level cases discussed after Theorem~\ref{app:thm:zero-option} and
the integer difference of Corollary~\ref{app:cor:integer} and its argmax over four
hundred random demand laws. They also verify that $\Phi$ is non-increasing, that it
crosses zero once and that the crossing is the least maximizer of its integral, at
$\dep = 0$ and at $\dep > 0$.

The checks cover three boundary cases. On $\dem_1$ uniform on $\{2,5\}$ and
$\dem_2$ on $\{1,4\}$ at $\crat = 1/2$ the smallest integer fractile selection is
$(2,1)$ and falls while the monotone real selection $(2,2)$ exists, which is why
Corollary~\ref{app:cor:integer} cannot be had by invoking the theorem with that
selection; the conclusion nonetheless holds at every state there. On a shortfall
supported below zero, the ordinary case of a pipeline that already covers the
demand it is responsible for, $\Phi(0) < 0$, the maximizer of $\int_0^{\ord}\Phi$
over $\ord \ge 0$ is exactly $\{0\}$ and the crossing set of
Proposition~\ref{app:prop:marginal}(iv) is empty, selecting the zero-order
branch; the crossing on the whole line sits at $-3$ against $\ord^0 = 0$,
which is the difference part (vi) records. Finally the pathwise identity
$\Short^{\mathrm{naive}} = \Short^{\mathrm{proj}} + \Lost$ is checked over four
thousand random pipelines, with equality exactly on the paths where nothing was
lost, and is then run past the end of the horizon, where in-transit units that
never land leave a residual of $-7$ and the identity fails. That is the case
$\per+\lead\leq\Tend$ excludes. In the paper's one-based indexing, the policy
likewise places no order once $\per+\lead>\Tend$.

\end{document}